\documentclass[11pt]{article}
\usepackage[letterpaper,margin=1in]{geometry}
\usepackage{amsmath,amssymb,amsthm}
\usepackage{adjustbox}
\usepackage{microtype}
\usepackage{xcolor}
\usepackage[colorlinks=true, allcolors=blue]{hyperref}
\usepackage{booktabs}
\usepackage{lmodern}
\usepackage[ruled]{algorithm2e}

\SetAlFnt{\small}
\SetAlCapFnt{\small}
\SetAlCapNameFnt{\small}
\SetAlCapHSkip{0pt}
\IncMargin{-\parindent}

\usepackage{enumitem}
\usepackage{makecell}

\newcommand{\PP}{\mathbb{P}}

\newcommand{\calF}{\mathcal{F}}

\newcommand{\RR}{\mathbb{R}}
\newcommand{\NN}{\mathbb{N}}

\newcommand{\wins}{\mathrm{Wins}}

\DeclareMathOperator{\AUX}{\mathsf{AUX}}
\DeclareMathOperator{\OPT}{\mathsf{OPT}}

\DeclareMathOperator{\ALG}{\mathsf{ALG}}

\DeclareMathOperator{\Bin}{\mathsf{Bin}}
\DeclareMathOperator{\Unif}{\mathsf{Unif}}

\DeclareMathOperator{\GREEDY}{\mathsf{GREEDY}}
\DeclareMathOperator{\spa}{\mathrm{span}}

\title{Secretary Problems with Interactive Ordinal Queries}

\author{José A. Soto \thanks{Center for Mathematical Modeling CNRS-IRL 2807, Universidad de Chile.} \thanks{Department of Mathematical Engineering, Universidad de Chile.} \and Felipe Valdevenito~\footnotemark[2]}
\date{}

\newtheorem{theorem}{Theorem}[section]
\newtheorem{lemma}[theorem]{Lemma}
\newtheorem{claim}[theorem]{Claim}
\newtheorem{definition}[theorem]{Definition}

\newtheorem{proposition}[theorem]{Proposition}
\newtheorem{remark}[theorem]{Remark}
\begin{document}

\maketitle
\begin{abstract}
We study secretary problems in which the online algorithm can ask a reliable oracle a limited number of ordinal queries about unseen elements. Values are revealed only when elements arrive, and all accept/reject decisions are irrevocable. The oracle never reveals numerical values; it only answers questions defined relative to the information already observed by the algorithm.

We first consider the single-choice secretary problem. A query returns the number of unseen elements whose value exceeds the current record. We analyze random order, adversarial order, and two sample-based adversarial orders, with no queries, one query, and unlimited queries. We obtain tight bounds for all variants except one. In adversarial order, queries do not improve over the trivial $1/n$ guarantee. In random order, one query improves the optimal success probability from $1/e$ to $0.4477\ldots$, and unlimited queries achieve the optimal value $1/2$. In the sample-based models, one query already yields the tight value $1/e$ against an online adversary, while unlimited queries give the exact value $(20\sqrt{10}-29)/81$ against an offline adversary. The remaining one-query offline-adversarial case is left open, and we give two partial upper bounds for it.

We also extend the model to the matroid secretary problem, where we study two ordinal query interfaces. A simple query identifies the unseen elements that would individually improve the current optimum on the observed set, while a complete query gives a finer ordinal partition of the unseen elements by weight relative to that optimum. Under sample-based online adversarial order, a single simple query yields a $1/4$ probability-competitive algorithm, and a complete query improves this to the optimal $1/e$. Finally, with unlimited complete queries in the random-order model, we achieve the optimal $1/2$ probability-competitive ratio.

\end{abstract}
\newpage
\vspace{1cm}
\setcounter{tocdepth}{2}

\section{Introduction}

In the Single-Choice Secretary Problem, a collection of $n$ candidates arrives in a uniformly random order. At each step, the decision-maker (DM) observes the candidate's rank among those seen so far and must irrevocably decide whether to accept the candidate and stop, or to reject the candidate and continue. Equivalently, each candidate has a private positive weight revealed upon arrival, and the goal is to select the maximum-weight candidate. It is well known (see Lindley~\cite{Lindley1961} and Dynkin~\cite{Dynkin1963}) that for every $n$ there exists a strategy with success probability at least $1/e$, and that no algorithm can guarantee a success probability exceeding $1/e$ for all $n$.

The Matroid Secretary Problem~\cite{BabaioffIKK2018} generalizes the Single-Choice Secretary Problem by imposing feasibility constraints. The candidates form the ground set of a matroid. Each element has a private positive weight revealed upon arrival, and the decision-maker seeks to select an independent set of maximum total weight. As in the classical problem, elements arrive in a uniformly random order. Upon observing an element, the decision-maker must irrevocably decide whether to accept it while maintaining independence in the matroid.

The performance of an algorithm for the Matroid Secretary Problem is typically measured by its \emph{utility-competitive ratio}, defined as the expected weight of the output independent set divided by the optimum. A stronger \emph{probability-competitive} notion is also used in the literature. Fix a maximum-weight independent set $\OPT$ by breaking ties arbitrarily. An algorithm is \emph{$\alpha$-probability-competitive} if $\PP[e \text{ is selected}] \ge \alpha$ for all $e\in\OPT$. This implies a utility-competitive ratio of at least $\alpha$.

The matroid secretary conjecture of Babaioff, Immorlica, and Kleinberg remains open. It asserts that every matroid admits a constant-factor utility-competitive algorithm in the random-order model. For general matroids of rank $r$, the best known utility-competitive guarantee is $\Omega(1/\log\log r)$ (see, e.g., \cite{FeldmanSZ18}). For the stronger probability-competitive notion, the best known guarantee is $\Omega(1/\log r)$ (see, e.g., \cite{SotoTV2021}).

A feature of both the Single-Choice and Matroid Secretary Problems is that the decision-maker has no information about the candidates' weights before they arrive. Natural variants endow the decision-maker with additional information. For example, in the Prophet Secretary problem each candidate's weight is drawn from a distribution over the nonnegative reals, and all these distributions are known to the decision-maker in advance. Realized weights are still revealed only upon arrival in a uniformly random order.

With this distributional information, one can obtain stronger guarantees. In the single-choice case, the best known competitive ratio improves to $0.688$ in the Prophet Secretary setting~\cite{Chen0LT25}, improving upon previous bounds such as $0.669$~\cite{CorreaSZ2021}. In the Matroid Prophet Secretary Problem, one can achieve a utility-competitive ratio of at least $1-1/e$~\cite{Ehsani2018}, which is substantially larger than the known guarantees without distributional knowledge.

Additional information can also be provided in other ways. For instance, one may use samples~\cite{azar2014prophet,CorreaCFOT21,CorreaCES22} or data-driven historical data~\cite{KaplanNR20} (see also~\cite{DuttingLLV21}). Another approach uses machine-learned predictions of numerical values, such as the maximum weight~\cite{AntoniadisGKK23,FujiiYoshida2024,BraunS24,KarisaniEtAl2026}. The goal is to design algorithms whose performance improves with prediction accuracy.

A different line of work studies secretary problems under ordinal information rather than numerical values. Hoefer and Kodric~\cite{HoeferK17} study combinatorial secretary problems in which the algorithm observes only ordinal rankings. Soto, Turkieltaub, and Verdugo~\cite{SotoTV2021} study probability-competitive guarantees for ordinal matroid secretary algorithms. More recently, Nourmohammadi, Cao, Sun, and Tan~\cite{NourmohammadiCST26} study ordinal secretaries with advice about the position of the best candidate. These models restrict or augment the information revealed by arriving elements. Our model is different: the algorithm receives the usual ordinal information from observed elements, and it may also ask interactive ordinal questions about the unseen set after the current decision has already been made.

In this paper, we study an interactive model inspired by explorable uncertainty~\cite{Erlebach2015}. Instead of receiving numerical predictions or upfront advice, the decision-maker adaptively acquires information during the online process through a trusted oracle. Each oracle call returns only ordinal information defined from the state that is already public at query time. The oracle does not disclose exact future weights, nor does it identify which unseen element is globally optimal. The oracle evaluates ordinal relations between the observed elements and the unseen set, such as whether an unseen element exceeds the current record. We assume the oracle is reliable to isolate the value of interactive ordinal information.

This query model naturally captures screening processes with sensitive evaluations. Consider a hiring process where an external auditor holds confidential candidate scores. The decision-maker observes a score only upon interviewing a candidate. Between interviews, the decision-maker can ask the auditor a limited number of ordinal queries. In the single-choice problem, one query returns the number of unseen candidates whose scores exceed the current record. This reveals how many future candidates lie above the current threshold, but it does not reveal their scores.

This screening framework extends to feasibility constraints, such as hiring subject to per-category quotas (a partition matroid). After interviewing a subset of candidates, the decision-maker identifies the optimal feasible team among them. A simple query round then asks the auditor to identify which unseen candidates would individually improve this reference team. The auditor certifies these improving candidates using only ordinal comparisons against the current optimal team, without revealing their absolute scores.

\paragraph{Query model.}
Current-candidate queries for secretary problems were studied by Liu, Milenkovic, and Moustakides~\cite{LiuMM23Mallows} under the Mallows model. In their model, the algorithm has a limited number of opportunities to ask an infallible expert whether the candidate currently under consideration is the global best. This type of query is closely related to the classical $k$-chance problem of Gilbert and Mosteller~\cite{GilbertM66}, where one has $k$ opportunities to stop on the best element. In the single-choice setting, a current-candidate query opportunity behaves like one such chance: a negative answer spends one opportunity and lets the search continue, while a positive answer certifies success. Moustakides, Liu, and Milenkovic~\cite{MoustakidesLM24RandomQueries} study a related optimal-stopping model with random, possibly faulty, expert responses.

Our queries have a different timing and a different purpose. They can be asked only after the current element has been irrevocably rejected and before the next element arrives. Thus, they provide information about unseen elements whose weights exceed a threshold determined by the public history. However, since they are asked between arrivals, they cannot certify the current stopping decision. In particular, even when a response $\mathrm{QR}_i=0$ certifies that the rejected element was globally best, this certificate arrives after the decision has been made. This distinguishes our model from current-candidate queries and from the $k$-chance/Dowry problem.

\subsection{Our results}

We begin with the single-choice secretary problem, which cleanly isolates the value of interactive ordinal queries. We consider four arrival models:
(i) \emph{Random Order} (RO), where the arrival order is a uniformly random permutation;
(ii) \emph{Adversarial Order} (AO), where an oblivious adversary fixes the order before the run;
(iii) \emph{Sample-Based Online Adversarial Order} (SBOnAO), where the algorithm first sees a uniformly random sample and then an \emph{online} adversary orders the remaining elements after seeing that sample; and
(iv) \emph{Sample-Based Offline Adversarial Order} (SBOffAO), where the adversary fixes a preliminary order \emph{before} the random sample is realized, and the post-sample arrivals follow the induced order.

Across these arrival models, we study three query budgets (0, 1, and unlimited query rounds), yielding $12$ single-choice variants. Our main qualitative conclusions are:
(i) in AO, even against an oblivious adversary, ordinal queries do not help and the optimal guarantee stays $1/n$;
(ii) in RO, one query round improves the classical $1/e$ success probability to the optimal value $\approx 0.4477$, and unlimited rounds achieve the optimal $1/2$; and
(iii) in the sample-based models, a single query round already exposes which unseen elements lie above the post-sample threshold, yielding tight bounds for SBOnAO. For SBOffAO with one query round, the unrestricted optimal value remains open. We prove an optimal $1/e$ bound under the above-threshold arrivals restriction of Section~\ref{sec:SBOffAO}, and we give a separate upper bound for immediate-query algorithms.

We then extend these ideas to matroid constraints. We consider two forms of ordinal query access, \emph{simple queries} (SQ) and \emph{complete queries} (CQ). Unlike the single-choice count query, these matroid query responses may identify subsets of unseen elements and may have size $\Theta(n)$. They remain ordinal because the oracle only certifies membership in sets defined by comparisons with the current optimal solution on the observed elements.

Throughout the paper, we measure complexity in query rounds rather than communication bits. This distinction is crucial because a single matroid query response can have size $\Theta(n)$. Thus, our bounds apply strictly to a query-round model and do not imply low communication complexity. Under this model, the ordinal information from the queries is sufficient to achieve constant probability-competitive guarantees in several settings, notably SBOnAO with one query round and RO with unlimited complete queries.

The tables below summarize our bounds. Entries marked with an asterisk ($*$) are derived from known no-query bounds and monotonicity between models (harder downward; easier to the right). All other entries are proved directly in this paper.

\begin{table}[h!]
\begin{adjustbox}{max width=\linewidth}
\begin{tabular}{|l|l|l|l|}
\hline
         & No Queries & 1 Query Round     & Unlimited Query Rounds \\ \hline
RO       & $=1/e$     & $=0.4477\ldots$ (Theorem~\ref{theo:ro1})     & $=1/2$ (Theorem~\ref{theo:rounl})             \\ \hline
SBOffAO & $=1/4$ (Theorem~\ref{theo:off})     & open; $\ge 1/e$ & $=(20\sqrt{10}-29)/81$ (Theorem~\ref{theo:offunl})\\ \hline
SBOnAO  & $=1/4$     & $=1/e$ (Theorem~\ref{theo:on1})              & $=1/e$ (Theorem~\ref{theo:on1})             \\ \hline
AO       & $=1/n$     & $=1/n$ (Theorem~\ref{theo:ao1})              & $=1/n$ (Theorem~\ref{theo:ao1})             \\ \hline
\end{tabular}
\end{adjustbox}
\caption{Lower and upper bounds on the competitive ratios for Single-Choice Secretary Problem models. For SBOffAO$1$, the unrestricted optimal value remains open. Under the \emph{above-threshold arrivals restriction} defined in Section~\ref{sec:SBOffAO}, the optimal value is $1/e$ (Theorem~\ref{theo:off1indist}); for immediate-query algorithms, Theorem~\ref{theo:off1} gives the upper bound $0.3787$.}
\end{table}

Caveat: in the matroid table, the no-query entries are \emph{utility}-competitive (from the literature), while our query-model entries are \emph{probability}-competitive.

\begin{table}[h!]
\begin{adjustbox}{max width=\linewidth}
\begin{tabular}{|l|l|l|l|l|}
\hline
             & No Queries       & 1SQ & 1CQ & $\infty$CQ \\ \hline
RO       & $\geq\Omega(1/\log\log r)$ & $\geq1/4*$      & $\geq1/e*$        & $=1/2$ (Theorem~\ref{theo:mspcunl})                   \\ \hline
SBOffAO & $\geq\Omega(1/\log\log r)$ & $\geq1/4*$      & $\geq1/e*$        &                   $\geq 1/e*$        \\ \hline
SBOnAO  & $\geq\Omega(1/\log\log r)$ & $\geq1/4$ (Theorem~\ref{theo:msps1})       & $=1/e$ (Theorem~\ref{theo:mspc1})           &          $=1/e*$                 \\ \hline
AO           & $=1/n$           & $=1/n*$         & $=1/n*$           & $=1/n*$                   \\ \hline
\end{tabular}
\end{adjustbox}
\caption{Bounds for Matroid Secretary Problem models. All query-model entries proved in this paper are probability-competitive guarantees. The no-query $\Omega(1/\log\log r)$ entries are utility-competitive baseline guarantees from the literature and are included only for comparison; they are not probability-competitive bounds.}
\label{table:msp}
\end{table}

We now turn to the formal model and analysis, beginning with the single-choice secretary problem.

\section{Single-choice secretary problems with ordinal queries}

In the Classical Secretary Problem, a known number of candidates arrives in a uniformly random order to be interviewed by an employer who seeks to hire the best candidate. Immediately after an interview, the employer must either hire the candidate or reject them, with no possibility of calling them back later.

\paragraph{Ordinal algorithm assumptions.}
Our goal is to understand how well we can approximate (i) selecting the maximum weight from a set and (ii) selecting a maximum-weight independent set in a matroid in online models with ordinal query access, compared to the corresponding offline optima. Both problems admit efficient \emph{ordinal} offline algorithms, i.e., algorithms that do not perform arithmetic on numerical weights but use only comparisons. Accordingly, and because our oracle provides only ordinal query responses, we restrict attention throughout to ordinal online algorithms.

Although ordinal algorithms do not require numerical weights, it is often convenient to describe them in terms of weights, as long as it is clear that they can be implemented using only comparisons. Any ordinal algorithm must behave identically on any two weight functions that induce the same order over the elements. In the single-choice models, element labels carry no information before weights are revealed. Thus the algorithm may use the arrival position of an element, the ordinal comparisons among revealed weights, and the oracle responses, but it may not use an element name as a proxy for its hidden weight. Equivalently, before weights are revealed, elements are indistinguishable except through the information exposed by the online process.

In this paper, we study variations of the Secretary Problem obtained by combining two independent modifications.

\begin{enumerate}
    \item Changing the order in which candidates arrive.
    \item Giving the decision-maker access to an oracle that provides ordinal queries about candidates who have not yet been interviewed.
\end{enumerate}

We now formalize these choices, each defining a different \emph{model}. Let $E$ be a finite set of candidates with $n := |E|$, and let an injective weight function $w: E \rightarrow \mathbb{R}_{>0}$ be chosen adversarially before the arrival order is determined. The algorithm receives the elements of $E$ one by one in an order $\pi: [n] \rightarrow E$ that depends on the model, where $\pi(j)$ denotes the $j$-th arriving candidate.

Each time an element $x \in E$ arrives, its weight $w(x)$ is revealed, at which point the algorithm must decide whether to select $x$ and stop, or to reject $x$ and proceed to the next element. Throughout the paper, we assume that $E=\{x^1,x^2,\dots, x^n\}$ is indexed in decreasing order of weight, $w(x^1)>w(x^2)>\dots > w(x^n)$.

We say that an algorithm $A$ for a model $\mathcal{M}$ \emph{wins} if it selects the heaviest element. The competitiveness of $A$ in model $\mathcal{M}$ is
\[
\alpha_\mathcal{M}(A) := \inf_{I}\; \PP[A \text{ wins on instance } I],
\]
where the probability is over the randomness of both the algorithm and the model, and the infimum ranges over all instances $I$ of the model with fixed input size $n$ (we suppress the dependence on $n$ in the notation). The competitiveness of the model $\mathcal{M}$ is $\alpha_\mathcal{M} := \sup_A \alpha_\mathcal{M}(A)$.

When we state a constant competitive ratio $\alpha$ (i.e., independent of $n$), we mean that for every $n$ and every instance of size $n$, the algorithm succeeds with probability at least $\alpha$.

To prove a model-level upper bound $C$ (as a constant), it suffices to show that for every algorithm $A$ and every $\varepsilon>0$, there exist arbitrarily large $n$ and an instance of size $n$ on which $A$ succeeds with probability at most $C+\varepsilon$; otherwise, an algorithm achieving $C+\varepsilon$ on all sufficiently large $n$ would contradict this bound.

\subsection{Models: Arrival Order}

We consider four types of arrival order: random (RO), adversarial (AO), sample-based online adversarial (SBOnAO), and sample-based offline adversarial (SBOffAO).

In Random Order models (RO models), the arrival order is drawn uniformly at random from all permutations of $E$. In particular, for any algorithm $A$, we define
$\alpha_{\mathrm{RO}}(A) := \PP_{\pi\sim\mathrm{Unif}}[A\; \wins]$,
where the probability is over the random arrival order $\pi$ and any internal randomness of $A$.

In Adversarial Order models (AO models), the order is chosen by an oblivious adversary who may know the algorithm but fixes the arrival order before the algorithm's random choices are realized. For every algorithm $A$, we define
$\alpha_{\mathrm{AO}}(A) := \min_{\sigma}\PP[A\; \wins \mid \pi=\sigma]$,
where the minimum is taken over all orders $\sigma$ of $E$.
Thus the adversary minimizes over fixed permutations and does not adapt to the algorithm's observations or oracle queries during the run.

The Sample-Based Online Adversarial Order models (SBOnAO models) and the Sample-Based Offline Adversarial Order models (SBOffAO models) allow the algorithm to observe a sample of candidates before the online process begins. More precisely, an algorithm $A$ first declares a sample size $s\in \{0,1,\dots, n\}$ (possibly at random). The algorithm then enters a \emph{sampling phase}, in which it receives and observes a uniformly random subset $S \subseteq E$ of size $s$, from which it cannot select any element. Finally, it enters a \emph{selection phase}, in which the elements in $E\setminus S$ arrive one by one according to the rules of the model.
In all adversarial models, the adversary may know the algorithm and the instance. In SBOnAO, the adversary may also know the realized sample before fixing the post-sample order. However, the adversary does not observe the algorithm's private random bits before making its ordering decisions.

Algorithms for both SBOnAO and SBOffAO can be viewed as tuples $(\mathcal{D},A')$, where $\mathcal{D}$ is a distribution over sample sizes with support in $\{0,\dots,n\}$ (more precisely, for each $n$ there is a distribution $\mathcal{D}_n$ from which $s$ is drawn), and $A'$ is the \emph{post-sample algorithm}. We interpret this tuple as follows: the sample size $s$ is drawn from $\mathcal{D}$, and $A'$ is the part of the algorithm that processes the elements arriving after the sample.

The difference between SBOnAO and SBOffAO is how the elements outside the sample are ordered in the selection phase. In SBOnAO, the elements of $E\setminus S$ arrive one by one in an adversarial order that may depend on $S$; this order is chosen by an \emph{online adversary} that observes the realization of $S$. In SBOffAO, the order is chosen by an \emph{offline adversary}. That is, before the experiment starts, the offline adversary fixes a \emph{preliminary order} $\psi\colon [n] \rightarrow E$ that is independent of the realized sample (it may depend on $\mathcal{D}$, but not on the realized sample size $s$). This order is unknown to the algorithm. In the selection phase, the elements of $E \setminus S$ arrive in the unique order $\phi:[n-s]\rightarrow E\setminus S$ consistent with $\psi$; that is, for all $x,y\in E\setminus S$, $\phi^{-1}(x)<\phi^{-1}(y)$ if and only if $\psi^{-1}(x)<\psi^{-1}(y)$.

\subsection{Models: Query rounds}

In addition to the arrival-order restrictions, we consider an algorithm that can make \emph{ordinal} queries during the run. The query is defined relative to the current record (the best weight observed so far) and is only allowed \emph{between} arrivals: after rejecting the element at step $i$ and before seeing the element at step $i+1$, the algorithm may query an oracle. We call such an oracle call a \emph{query round}.

The oracle returns a count rather than a single bit. Under our timing restriction, the bit ``is there an unseen element better than the current record?'' is not useful: if the answer is no, then the (already rejected) record was globally best; if the answer is yes, the algorithm only learns that some better element remains. The count $\mathrm{QR}_i$ captures the relevant ordinal information---how many above-record elements remain---without revealing their identities or numerical weights.

For an observed set $F\subseteq E$, define
\[
\tau(F):=
\begin{cases}
\max_{y\in F}w(y) & \text{if } F\ne\emptyset,\\
0 & \text{if } F=\emptyset.
\end{cases}
\]

\begin{definition}[Single-choice query response]
For a given arrival order $\pi$, weight function $w$, and $i\in\{0,1,\dots,n\}$, the query response at step $i$, denoted $\mathrm{QR}_i$, is defined as
\[
\mathrm{QR}_i := |\{x\in E\setminus E_i: w(x) > \tau(E_i)\}|,
\]
where $E_i$ is the set of the first $i$ elements in the arrival order. For sample-based models with sample $S$ of size $s$, we set $E_s := S$, and for $i\geq s$ we let $E_i$ be $S$ together with the first $(i-s)$ arrivals in the selection phase. In sample-based models, $i$ indexes how many elements have been revealed so far (including the sample); it is not a position in the preliminary adversarial order. In particular, if $E_i=\emptyset$, then $\tau(E_i)=0$, and since all weights are positive, every unseen element is counted.
\end{definition}

\paragraph{Assumption (reliability of the oracle).}
Throughout the paper, we assume that the oracle answers each query correctly. This baseline lets us isolate the value of ordinal query access from robustness issues; studying noisy or unreliable query responses is beyond our scope.

Note that $\mathrm{QR}_i$ is ordinal: it counts the unseen elements whose weight exceeds the current record, but it reveals no weights. In particular, $\mathrm{QR}_i=0$ if and only if the current record is the global maximum, while $\mathrm{QR}_i>0$ means that the maximum-weight element has not yet arrived.

We consider three query budgets: no access to the oracle, at most one query round, and an unlimited number of query rounds. For an arrival-order model $\mathcal{M}$, we write $\mathcal{M}$ for the version with no queries, $\mathcal{M}1$ for the version with one query round, and $\mathcal{M}\infty$ for the version with unlimited query rounds. Since we consider $4$ arrival-order models and $3$ query budgets, this yields $12$ models in total.

Whenever an algorithm below says that it returns the first future element satisfying a stated condition, it terminates without selecting an element if no such element appears.

\subsection{Adversarial Order}

Consider the Random Selection Algorithm below:

\begin{algorithm}[H]
    \DontPrintSemicolon
    $i \gets \text{Uniform}([n])$\;
    \Return element arriving in position $i$\;
	\caption{Random Selection Algorithm}
	\label{alg:zero}
\end{algorithm}
This algorithm is $1/n$-competitive for the AO model without queries. A folklore result states that this algorithm is optimal for that model. The following theorem extends this optimality to any number of query rounds, implying that $\alpha_{\mathrm{AO}} = \alpha_{\mathrm{AO}1} = \alpha_{\mathrm{AO}\infty} = 1/n$. The upper bound already holds for the oblivious AO model, since the proof uses only a distribution over fixed arrival orders.

\begin{theorem}
\label{theo:ao1}
    The random selection algorithm is $\frac{1}{n}$-competitive and optimal in the Adversarial Order model with Unlimited Query Rounds.
\end{theorem}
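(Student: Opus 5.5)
The lower bound is immediate. The random selection algorithm never uses the oracle. For any fixed order $\sigma$, the heaviest element $x^1$ sits at some position $\sigma^{-1}(x^1)$. The algorithm picks this position with probability exactly $1/n$, so $\alpha_{\mathrm{AO}\infty}\ge 1/n$.

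For optimality, I apply Yao's principle. The adversary picks a random order, and I show that every deterministic ordinal algorithm with unlimited query rounds wins with probability at most $1/n$ against it. An oblivious adversary must commit to a fixed permutation, so the bound transfers to randomized algorithms: the minimum over fixed orders is at most the average over the chosen distribution. Concretely, fix the weights $w(x^j)$ and let the order be drawn uniformly from the $n$ ``increasing prefixes'' orders $\sigma_k$, for $k\in[n]$. In $\sigma_k$ the first $k$ arrivals are $x^{n-k+1},x^{n-k+2},\dots,x^{n},\dots$ arranged so that weights are increasing, ending with $x^1$ at position $k$. The remaining $n-k$ elements then arrive afterwards in any fixed order, for instance increasing. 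It is cleaner to choose $\sigma_k$ as follows. The first $k$ elements are $x^{k},x^{k-1},\dots,x^{1}$, so the weights strictly increase, each arrival is a new record, and the last of these is the global maximum. The elements $x^{k+1},\dots,x^n$ follow in any fixed order.

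The key step is to show that the algorithm's information at step $i$ is the same across all $\sigma_k$ with $k\ge i$. Take any $i\le k$. The relative ranks observed among the first $i$ arrivals are ``each new element beats all previous'' in every such $\sigma_k$. Consider a query after step $j<i\le k$. The record is the $j$-th arrival $x^{k-j+1}$. The unseen elements above it are exactly $x^1,\dots,x^{k-j}$, so $\mathrm{QR}_j=k-j$. This does depend on $k$, which is the main obstacle, so the construction must be adjusted so that the count is uninformative.

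To fix this, I make the above-record counts independent of $k$. Use orders in which the first $k$ arrivals increase and are exactly the top $k$ elements, preceded by the remaining $n-k$ low elements. That is, $\sigma_k$ lists $x^n,x^{n-1},\dots,x^{k+1}$ first, then $x^{k},\dots,x^{1}$, all increasing. This is just the same increasing order for every $k$, so I need another approach. Instead, I randomize which position holds $x^1$ while keeping every query answer equal to the same function of time. Take $\sigma_k$ to be $x^{n},x^{n-1},\dots,x^{n-k+2}$ (increasing), then $x^1$ at position $k$, then the rest in decreasing order. Before position $k$, the record after step $j$ is $x^{n-j+1}$, so $\mathrm{QR}_j=n-j$. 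This is identical for all $k>j$, and the ordinal pattern, every arrival a new record, is also identical. Hence a deterministic algorithm's behavior up to step $i$ is identical for all $k\ge i$. In particular, the set of positions at which it would stop if it reached them is determined by a single sequence. On the common history it stops at one fixed position $p$, or never. Since $x^1$ is at position $k$ and the history coincides with all $k\ge p$ up to step $p$, the algorithm wins only when $k=p$. Its winning probability is therefore at most $1/n$, which gives the upper bound $\alpha_{\mathrm{AO}\infty}\le 1/n$ and completes the proof.
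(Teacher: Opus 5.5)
Your final construction (the $k-1$ lightest elements $x^n,\dots,x^{n-k+2}$ in increasing order, then $x^1$ at position $k$, with $k$ uniform on $[n]$) is exactly the paper's hard distribution, and your indistinguishability argument is the paper's; your explicit check that $\mathrm{QR}_j=n-j$ for every $k>j$ spells out a step the paper leaves implicit. In a final write-up, delete the two abandoned constructions and state the cases $k<p$ (the algorithm passes over $x^1$) and $k>p$ (it stops early on a non-maximum) separately.
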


\begin{proof}
    To obtain the desired upper bound, we adopt the adversary's perspective. By Yao's principle, for any distribution $\mathcal{F}$ over the orders of $E$, we have
    $\alpha_{\mathrm{AO}\infty} \leq \max_{A} \PP_{\pi \sim\mathcal{F}}[A\;\wins]$,
    where the maximum is over all deterministic algorithms $A$. Thus, it suffices to exhibit a distribution $\mathcal{F}$ such that
    $\PP_{\pi\sim \mathcal{F}}[A\;\wins] \leq 1/n$
    for every deterministic algorithm $A$.

    Consider the orders $\pi_i$ for $i \in [n]$, defined such that $x^1$ arrives at position $i$, and before $x^1$ the worst $i-1$ elements arrive in increasing order of weight:
    \[
    \pi_i(j)=x^{n-j+1}\quad\text{for }j<i,
    \qquad
    \pi_i(i)=x^1.
    \]
    The remaining elements arrive after $x^1$ in any fixed order.

    Distribute $i$ uniformly in $[n]$. We use here that algorithms are ordinal: before an element's weight is revealed, its label carries no information about its hidden weight. Fix a deterministic ordinal algorithm. If it selects $x^1$ in $\pi_i$, then it cannot select $x^1$ in any $\pi_k$ with $k\ne i$. If $k<i$, then the histories up to step $k$ are indistinguishable before $x^1$ arrives in $\pi_k$. If $k>i$, then the histories up to step $i$ are indistinguishable, but the algorithm stops at step $i$ in $\pi_k$ when the arriving element is not $x^1$. Therefore, every deterministic algorithm wins with probability at most $1/n$.
    
\end{proof}

\subsection{Sample-Based Online Adversarial Order} \label{sec:onAOS}

First, we briefly discuss the SBOnAO model without queries. This model, also known in the literature as the \emph{order-oblivious secretary problem} (a term introduced by~\cite{azar2014prophet}), allows an algorithm to choose a sample size $s$, observe the values of a uniformly random subset of $s$ elements (rejecting all of them), and then face an adversary who decides the order in which the remaining elements are presented. A well-known $1/4$-competitive algorithm samples half of the elements and then selects the first element in the second phase that is better than anything seen so far. The factor $1/4$ is known to be optimal (see, e.g.,~\cite{Kaplan2022OnlineSample}). In Section~\ref{sec:SBOffAO}, we provide a stronger proof of the $1/4$ upper bound by switching to the SBOffAO model. We show that even against the weaker offline adversary, no algorithm can achieve a competitive ratio greater than $1/4$; since this adversary model is weaker than SBOnAO, the same upper bound applies to SBOnAO as well.

\subsubsection{One and Unlimited Query Rounds}

We now turn to the SBOnAO model with query access. Consider the algorithm below.

\begin{algorithm}[H]
    \DontPrintSemicolon
	\KwSty{Parameters:} $p\in[0,1]$\;
    $s \gets \Bin(n,p)$\;
    \KwSty{Sampling phase:} reject a uniform random sample $S\subseteq E$ of size $s$\;
    $\tau \gets \tau(S)$\;
    receive query response $r$ from the oracle\;
    \lIf{$r=0$}{terminate without selecting an element}
    $i \gets \Unif([r])$\;
    \KwSty{Selection phase:} \Return the $i$-th arriving element $x$ with $w(x)>\tau$\;
	\caption{Sample and Random Record Selection Algorithm}
	\label{alg:on1}
\end{algorithm}
\begin{theorem}
\label{theo:on1}
    Setting $p = \frac{1}{e}$, Algorithm~\ref{alg:on1} is $\frac{1}{e}$-competitive and optimal in the Sample-Based Online Adversarial Order model with Unlimited Query Rounds.
\end{theorem}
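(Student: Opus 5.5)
The plan is to prove the two directions separately: a lower bound for Algorithm~\ref{alg:on1} with $p=1/e$, which uses only one query round, and an upper bound of $1/e$ that holds even with unlimited query rounds.

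\textbf{Lower bound.} Since $s\sim\Bin(n,p)$ and $S$ is uniform of size $s$, each element lies in $S$ independently with probability $p$. Let $K$ be the smallest index with $x^K\in S$, and set $K=n+1$ if $S=\emptyset$. Then the query response is $r=K-1$, and the algorithm can win only if $K\ge 2$. Conditioned on $r$, the algorithm picks a uniform index $i\in[r]$ using private randomness, which the online adversary does not see. Whatever order the adversary chooses, exactly $r$ elements above $\tau$ arrive, and $x^1$ occupies one fixed position among them. Hence the algorithm wins with probability exactly $1/r$. This gives
\[
\PP[\text{win}]=\sum_{j=1}^{n-1} p(1-p)^j\frac{1}{j}+\frac{(1-p)^n}{n}.
\]
I will then show that the last term dominates the missing tail:
\[
p\sum_{j\ge n}\frac{(1-p)^j}{j}\le \frac{p}{n}\cdot\frac{(1-p)^n}{p}=\frac{(1-p)^n}{n}.
\]
Therefore $\PP[\text{win}]\ge p\sum_{j\ge1}(1-p)^j/j=p\ln(1/p)$, which equals $1/e$ at $p=1/e$, for every $n$.

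\textbf{Upper bound.} I will use Yao's principle with a randomized online adversary that imitates the construction in the proof of Theorem~\ref{theo:ao1}, restricted to the elements above the sample threshold. After seeing $S$, let $r$ be the number of unseen elements above $\tau(S)$; these are $x^1,\dots,x^r$. The adversary draws $i$ uniformly from $[r]$. It first presents $x^r,x^{r-1},\dots,x^{r-i+2}$ in increasing order of weight, then $x^1$, and then all remaining elements in a fixed order.

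Before $x^1$ arrives, every arrival is a new record. After the $j$-th arrival, every query returns $r-j$, and this value is the same for all choices of $i$. Consequently the history, including all query answers, cannot distinguish different values of $i$ until $x^1$ appears. As in Theorem~\ref{theo:ao1}, a deterministic ordinal algorithm wins for at most one value of $i$, so conditioned on $S$ it wins with probability at most $1/r$ (and $0$ if $r=0$).

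For a fixed sample size $s=qn$, the probability that the sample threshold leaves exactly $j$ elements above it is $\PP[r=j]=\PP[x^1,\dots,x^j\notin S,\ x^{j+1}\in S]$. This is hypergeometric and tends to $(1-q)^jq$. Hence
\[
\mathbb{E}[1/r;\,r\ge1]\to q\ln(1/q)\le 1/e .
\]
Averaging over any distribution $\mathcal{D}_n$ of sample sizes preserves the bound.

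The main obstacle is making this limit uniform in $s$, since the algorithm may choose any $s$, for instance $s=o(n)$. I plan to bound $\PP[r=j]\le (1-s/n)^j\cdot\frac{s}{n-j}$ directly, or to split into the cases $s\le \varepsilon n$ and $s>\varepsilon n$. This yields $\sup_s \mathbb{E}[1/r]\le 1/e+o(1)$, which suffices by the paper's convention for model-level upper bounds.
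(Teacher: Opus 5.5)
Your proof is correct. The lower bound is the same computation as in the paper. Your conditional $1/r$ bound is also the paper's idea: run the hard distribution of Theorem~\ref{theo:ao1} on the above-threshold set. You make it a little more explicit by placing those elements first and checking that every query answer $r-j$ is independent of $i$.

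The genuine difference is the last step of the upper bound. The paper never computes $\mathbb{E}[1/r]$. It notes that $\mathbb{E}_{s\sim\mathcal{D}_n}[g(r)]$ is exactly the success probability of a no-query random-order algorithm: reject a prefix of length $s\sim\mathcal{D}_n$, then take the first element above the prefix maximum. The classical finite-$n$ secretary bound therefore controls every $s$ and every $\mathcal{D}_n$ at once, with no asymptotic or uniformity analysis. Your route is self-contained and gives an explicit error term, but it needs the uniform-in-$s$ estimate that you only sketch.

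Your inequality $\PP[r=j]\le (1-q)^j\,\frac{s}{n-j}$, with $q=s/n$, is correct, since $\PP[r=j]=\frac{s}{n-j}\prod_{k<j}\bigl(1-\frac{s}{n-k}\bigr)$. On its own, however, it is not uniformly close to the geometric law, because $s/(n-j)$ reaches $1$ at $j=n-s$. One clean way to finish is the identity $\frac{s}{j(n-j)}=q\bigl(\frac1j+\frac1{n-j}\bigr)$, which gives
\[
\mathbb{E}\bigl[1/r;\,r\ge1\bigr]\le q\ln(1/q)+q\sum_{j=1}^{n-s}\frac{(1-q)^j}{n-j}.
\]
Split the last sum at $j=n/2$. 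The part with $j\le n/2$ is at most $2/n$. For $j>n/2$ use $(1-q)^j\le e^{-s/2}$, so that part is at most $(s/n)e^{-s/2}(1+\ln n)=O(\ln n/n)$ for $1\le s\le n-1$. The cases $s=0$ and $s=n$ give $1/n$ and $0$. Hence $\sup_s\mathbb{E}[1/r;\,r\ge1]\le 1/e+O(\ln n/n)$, which is what you need.
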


Since Algorithm~\ref{alg:on1} makes only one call to the oracle, the same result applies to the SBOnAO model with One Query Round. Thus, $\alpha_{\mathrm{SBOnAO}1} = \alpha_{\mathrm{SBOnAO}\infty} = 1/e$.

\begin{proof}
    \textbf{Lower bound.} We show that for any $p\in(0,1]$, Algorithm~\ref{alg:on1} is $(-p\ln p)$-competitive (and taking $p=1/e$ gives $1/e$).
    Let $r$ denote the number of post-sample elements that are heavier than every sampled element. Since $s\sim\Bin(n,p)$, each element is sampled independently with probability $p$, so
    \begin{align*}
        \PP[r=k]=
        \begin{cases}
            p(1-p)^k & \text{if } k \in \{0,1,\dots,n-1\}, \\
            (1-p)^n & \text{if } k = n.
        \end{cases}
    \end{align*}
    If $r=0$, the algorithm cannot select $x^1$. Otherwise, it selects a uniformly random element from $B := \{x\in E\setminus S : w(x)>\tau(S)\}$. Hence
    \begin{align*}
        \PP[A\;\wins\mid r=k]=\frac{1}{k}, \qquad \text{for } k\in[n].
    \end{align*}
    Therefore,
    \begin{align*}
        \PP[A\;\wins]
        &= \sum_{k=1}^n \PP[A\;\wins\mid r=k]\,\PP[r=k]
        = p\sum_{k=1}^{n-1} \frac{(1-p)^k}{k} + \frac{(1-p)^n}{n} \\
        &\ge p\sum_{k=1}^\infty \frac{(1-p)^k}{k} = -p\ln(p),
    \end{align*}
    where we use $\frac{(1-p)^n}{n} \ge p\sum_{k=n}^\infty \frac{(1-p)^k}{k}$ and $-\ln(p)=\sum_{k=1}^{\infty} \frac{(1-p)^k}{k}$.
    Taking $p=1/e$ implies that the algorithm is $1/e$-competitive.

    \textbf{Optimality.} It remains to prove the matching upper bound $\alpha_{\mathrm{SBOnAO}\infty}\le 1/e$. We do so by converting any SBOnAO$\infty$ algorithm into a Random Order algorithm, contradicting the classical $1/e$ upper bound in RO.

    Let $A=(\mathcal{D},A')$ be an arbitrary algorithm for SBOnAO$\infty$, and let $\alpha := \alpha_{\mathrm{SBOnAO}\infty}(A)$. On an instance with $n$ elements, $A$ draws $s\sim\mathcal{D}_n$, observes a uniformly random sample $S\subseteq E$ of size $s$, and then runs $A'$ on the remaining elements in an adversarial order.

    Since we are proving an upper bound and the query budget is unlimited, we may assume that $A'$ queries immediately after seeing $S$ and learns $r:=\mathrm{QR}_s$. Let
    \[
        B=\{x\in E\setminus S : w(x) > \tau(S)\}
    \]
    be the set of post-sample elements that beat the sample maximum; by definition, $|B|=r$. Only elements in $B$ can possibly be the global maximum $x^1$. Elements outside $B$ have weight at most $\tau(S)$, so selecting them cannot be successful. Conditioning on $S$ and $r$, we may therefore focus only on arrivals from $B$. Any way in which $A'$ selects an element of $B$ yields a strategy for choosing the maximum among the $r$ elements of $B$ arriving in adversarial order, with the same success probability. By Theorem~\ref{theo:ao1}, this conditional success probability is at most $1/r$.

To apply Yao's principle conditionally on $S$, the adversary orders $B$ according to the hard distribution of Theorem~\ref{theo:ao1}, and places the remaining unseen elements arbitrarily. Since elements outside $B$ are never winning selections, they do not affect the conditional upper bound.

    Define $g(0):=0$ and $g(r):=1/r$ for $r\ge 1$. Now define a modified \emph{SBOnAO$\infty$} algorithm $\widehat A=(\mathcal D,\widehat A')$ that depends on the instance only through $r$: $\widehat A'$ queries to obtain $r=\mathrm{QR}_s$, terminates if $r=0$, and otherwise picks $t\sim\Unif([r])$ and selects the $t$-th post-sample arrival with weight exceeding all sampled weights. Equivalently, for $r\ge 1$ it selects a uniformly random element of $B$, so
\[
    \PP[\widehat A\;\wins\mid r]=g(r).
\]
By the preceding paragraph, $\PP[A\;\wins\mid r]\le g(r)$ for every $r$. Taking expectation over the randomness of the sample (and hence of $r$), and using $\PP[\widehat A\;\wins\mid r]=g(r)$, we have
\[
    \PP[\widehat A\;\wins]=\mathbb E[\PP[\widehat A\;\wins\mid r]]=\mathbb E[g(r)]\ge \mathbb E[\PP[A\;\wins\mid r]]=\PP[A\;\wins]\ge \alpha.
\]

    Finally, define an RO algorithm $\widetilde A$ with no queries: draw $s\sim\mathcal D_n$, reject the first $s$ arrivals, set $\tau\gets\tau(E_s)$, and then select the first subsequent element with weight strictly larger than $\tau$. Conditioned on $r=\mathrm{QR}_s$, there are exactly $r$ above-threshold arrivals, their relative order is uniform in RO, and the first one is the maximum among them with probability $1/r$; thus $\PP[\widetilde A\;\wins\mid r]=g(r)$ and $\PP[\widetilde A\;\wins]=\mathbb E[g(r)]$. Therefore $\alpha\le \alpha_{\mathrm{RO}}(\widetilde A)\le 1/e$, where the last inequality uses the classical optimality of $1/e$ in RO.
\end{proof}

\subsection{Sample-Based Offline Adversarial Order} \label{sec:SBOffAO}
In this section, we present three results, one for each query budget.

The offline adversary is weaker than the online adversary because it fixes the preliminary order before the sample is realized. The upper bounds in this section therefore use distributions over preliminary orders. These distributions are known to the algorithm in the Yao argument, so proving an upper bound against them is enough to upper bound the original adversarial model.

\subsubsection{No Queries}

For the no-query case, consider the following threshold algorithm (a variant of Dynkin's classical algorithm).

\begin{algorithm}[H]
    \DontPrintSemicolon
	\KwSty{Parameters:} $p\in[0,1]$\;
    $s \gets \Bin(n,p)$\;
    \KwSty{Sampling phase:} reject a uniform random sample $S\subseteq E$ of size $s$\;
    $\tau \gets \tau(S)$\tcp*{maximum weight in sample}
    \KwSty{Selection phase:} \Return first $x$ such that $w(x)>\tau$\;
	\caption{Dynkin's Algorithm}
	\label{alg:dynkin}
\end{algorithm}
\begin{theorem}
\label{theo:off}
    Setting $p = \frac{1}{2}$, Algorithm~\ref{alg:dynkin} is $\frac{1}{4}$-competitive and optimal in the Sample-Based Offline Adversarial Order model without queries.
\end{theorem}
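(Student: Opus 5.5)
\textbf{Lower bound.} I would prove that Algorithm~\ref{alg:dynkin} succeeds with probability at least $p(1-p)$ for every preliminary order $\psi$. Since $s\sim\Bin(n,p)$ and $S$ is uniform of that size, each element lies in $S$ independently with probability $p$. Consider the event $\{x^1\notin S,\ x^2\in S\}$, which has probability $(1-p)p$ and does not depend on $\psi$. On this event $\tau(S)=w(x^2)$. Hence $x^1$ is the only post-sample element with weight above $\tau$, and the algorithm selects exactly $x^1$. This argument does not use the arrival order at all, so it holds even against the online adversary. Setting $p=1/2$ gives $1/4$.

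\textbf{Upper bound.} I would use Yao's principle as in Theorem~\ref{theo:ao1}. I would fix a distribution $\mathcal F$ over preliminary orders $\psi$, chosen independently of the sample, and show that every deterministic ordinal algorithm $(\mathcal D,A')$ wins with probability at most $1/4+o(1)$ as $n\to\infty$. The candidate family generalizes the orders $\pi_i$: for a random $k$, put $x^k,x^{k-1},\dots,x^2$ first, in increasing order of weight, then $x^1$, then the remaining elements in any fixed order. After sampling, the post-sample stream begins with a strictly increasing run, and $x^1$ is its last element. Only elements above $\tau(S)$ can win. If $x^j$ is the heaviest sampled element, the relevant arrivals are the unsampled elements among $x^{j-1},\dots,x^2$, followed by $x^1$.

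An ordinal algorithm therefore faces a run of records above $\tau$ and must stop exactly at its last element. Up to the moment $x^1$ arrives, the history is indistinguishable from that of an instance with a larger $k$. This is the same indistinguishability used in Theorem~\ref{theo:ao1}. I would reduce $A'$ to a rule that stops at the $t$-th above-threshold record, possibly randomized given the sample. I would then compute the win probability as a function of the sampled fraction $q=s/n$ and the law of $k$.

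\textbf{Main obstacle.} The delicate step is choosing the law of $k$. If $k$ is typically small, the threshold $\tau(S)$ already exceeds $x^2,\dots,x^k$, so Dynkin-type stopping wins with probability about $1-q$. If $k$ is large, the run is long but its length is predictable. I would first try a scale-free choice, for example $\PP[k\ge m]\approx 1/m$ truncated at $n$. The aim is to make the number of above-threshold records before $x^1$ behave like the geometric count in the proof of Theorem~\ref{theo:on1}, while still preventing the algorithm from locating the end of the run. With such a choice I would aim for the estimate $\PP[\text{win}]\le \mathbb E[q(1-q)]+o(1)\le 1/4+o(1)$. In words, winning essentially requires $x^1\notin S$ together with the event that no record precedes $x^1$ above the threshold, which forces $x^2\in S$. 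If this law does not give exactly $1/4$, I would tune the distribution of $k$, possibly mixing over scales, and optimize against every sample-size distribution $\mathcal D$.
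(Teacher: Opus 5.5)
The gap is in your upper bound. Your lower bound is correct and is the paper's argument, and you chose the same family of preliminary orders the paper uses ($x^k,\dots,x^2,x^1$ followed by the rest). But the upper bound rests entirely on two things: the law of $k$, and a proof that stopping at the first above-threshold record is optimal against that law. You supply neither, and the law you do propose fails.

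With $\PP[k\ge m]\approx 1/m$ you get $\PP[k=1]\approx 1/2$. When $k=1$, the element $x^1$ arrives first, so the algorithm with $s=0$ that accepts the first arrival wins with probability about $1/2$. This is the small-$k$ effect you flag yourself. Your heuristic that winning forces $x^2\in S$ holds only when $x^2$ precedes $x^1$, i.e.\ $k\ge 2$. Even a small fixed mass $a$ on $k=1$ breaks the bound: with $q=1/2$ the first-record rule gets $\frac14+\frac a4$.

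What you need is that every individual value of $k$, and in particular every small one, has vanishing probability. Scale-freeness is not needed. The geometric length of the above-threshold run comes from the sampling itself: when $k>r:=\mathrm{QR}_s$, that run is exactly $x^r,\dots,x^1$.

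The paper takes $k$ uniform on $[n]$, so the $k=1$ case contributes only the vanishing term $\frac{n-s}{n^2}$. It then supplies the optimality step you omit. Let $J$ be the position in $\psi$ of the first unsampled element heavier than the whole sample, and give $J$ to the algorithm for free. Then $\PP[J=j\mid\psi=\psi_i]$ equals $\binom{n-(i-j+2)}{s-1}/\binom{n}{s}$ for $j\ge 2$ and $\binom{n-i}{s}/\binom{n}{s}$ for $j=1$, and both are decreasing in $i$. Hence the posterior favours the shortest run, and stopping at the first above-threshold record is optimal. The success probability is then at most $\frac{n-s}{n^2}+\frac{(n-s)s}{n^2}\le\frac14+o(1)$.

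Your reduction to ``stop at the $t$-th record'' becomes a proof only once you fix such a law and establish this posterior monotonicity.
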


\begin{proof}
    \textbf{Lower bound.} For $p=1/2$, Algorithm~\ref{alg:dynkin} chooses a sample of size $s \sim \Bin(n,\frac{1}{2})$ and then selects the first record outside the sample. If $x^1$ is outside the sample and $x^2$ is inside the sample, then the algorithm selects $x^1$. Since this event occurs with probability $\frac{1}{4}$, the algorithm is $\frac{1}{4}$-competitive.

    \textbf{Optimality.} To show that no algorithm can select $x^1$ with probability greater than $\frac{1}{4}+o(1)$, we consider a distribution over preliminary adversary orders and apply Yao's principle. For $i\in[n]$, let $\psi_i\colon [n] \to E$ be any order satisfying $\psi_i(j)=x^{i-j+1}$ for all $j\leq i$. That is, $\psi_i$ presents the top $i$ elements in increasing order of weight $(x^i, x^{i-1}, \dots, x^1)$, and then presents the remaining elements in an arbitrary order.

    Let $\mathcal{F}$ be the distribution that selects one of the orders $\psi_i$ uniformly at random. The cases $s=0$ and $s=n$ are immediate: under $\mathcal{F}$, their success probabilities are at most $1/n$ and $0$, respectively. Hence fix a sample size $s\in\{1,\dots,n-1\}$. For intuition, imagine the elements placed from left to right according to an order $\psi\sim\mathcal{F}$, with exactly $s$ of them marked as sampled. Let $\bar{x}$ be the leftmost unsampled element that is heavier than every sampled element (equivalently, the first record outside the sample from the algorithm's perspective). Let $J$ be the random variable defined as the position $J=\psi^{-1}(\bar{x})$ of $\bar{x}$ if $\bar{x}$ exists, and $J=n+1$ otherwise.

Consider any deterministic ordinal algorithm $A$ for SBOffAO without queries that uses sample size $s$. We may give $A$ the value of $J$ after the sample is realized; this can only increase its success probability. We also may assume that $A$ never selects a non-record, since a non-record cannot be $x^1$ and rejecting it preserves all future options. Hence the first element at which $A$ can possibly stop is~$\bar{x}$.

Condition on $J=j$ and on the event that $x^1$ is not sampled. From $\bar{x}$ onward, the algorithm sees a sequence of records with strictly increasing weights, ending at $x^1$. If the preliminary order is $\psi_i$, then this record sequence has length $\ell=i-j+1$. Since the algorithm is deterministic and ordinal, while this increasing record sequence is being observed it has no information other than the number of records already seen and the value of $J=j$. Thus, conditional on $J=j$, its successful stopping rule is equivalent to choosing a position $t\ge 1$ in this record sequence, possibly not stopping at all. It wins only if $t=\ell$.

For fixed $j$, choosing $t$ is equivalent to choosing $i=j+t-1$. Therefore the best possible success probability conditional on $J=j$ is
\[
    \max_{i\ge j} \PP[\psi=\psi_i\mid J=j].
\]
By Bayes' rule and the uniform prior on $\{\psi_i\}_{i\in[n]}$, it is enough to maximize $\PP[J=j\mid \psi=\psi_i]$. If $i<j$, this probability is zero. If $i\ge j$ and $j\ne 1$, the event $J=j$ is exactly the event that $x^{i-j+2}\in S$ and $\{x^1,\dots,x^{i-j+1}\}\cap S=\emptyset$, and hence
\[
    \PP[J=j\mid\psi=\psi_i]=\binom{n-(i-j+2)}{s-1}/\binom{n}{s}.
\]
If $j=1$, the event is $\{x^1,\dots,x^i\}\cap S=\emptyset$, and
\[
    \PP[J=1\mid\psi=\psi_i]=\binom{n-i}{s}/\binom{n}{s}.
\]
In both cases the numerator decreases with $i$. Hence the posterior probability is maximized at the smallest feasible value, namely $i=j$, which corresponds to $t=1$.

Thus, even after giving the algorithm the extra information $J$, the best deterministic ordinal strategy is to select the first record outside the sample. Let $A^*_s$ be the algorithm that takes the same fixed sample size $s$ and returns the first record outside the sample. Then $\PP[A^*_s\;\wins]\ge \PP[A\;\wins]$. We compute its winning probability.

    If $\psi=\psi_1$, then $A^*_s$ wins if and only if $x^1$ is outside the sample. If $\psi\neq\psi_1$, then $A^*_s$ wins if and only if $x^1$ is outside the sample and $x^2$ is inside the sample (since for all $i\ge 2$, the order $\psi_i$ presents $x^2$ before $x^1$). Therefore,
    \[
        \PP[A^*_s\;\wins] = \frac{1}{n}\PP[A^*_s\;\wins\mid\psi=\psi_1] + \frac{n-1}{n}\PP[A^*_s\;\wins\mid\psi\neq\psi_1] = \frac{(n-s)}{n^2} + \frac{(n-s)s}{n^2}.
    \]

    If the algorithm randomizes the sample size, its success probability is a convex combination of the fixed-$s$ success probabilities, and hence is at most $\max_s \PP[A^*_s\;\wins]$. The maximum of the expression above over $s\in \{0,1,\dots,n\}$ is attained at $s = \lfloor\frac{n}{2}\rfloor$. Hence,
    \begin{align*}
        \alpha_{\text{SBOffAO}}
        &\leq \max_s \PP[A^*_s\;\wins] \\
        &\leq \frac{1}{n^2}\left\lceil\frac{n}{2}\right\rceil
          + \frac{1}{n^2}\left\lceil\frac{n}{2}\right\rceil
            \left\lfloor\frac{n}{2}\right\rfloor
         = \frac{1}{4}+o(1).\qedhere
    \end{align*}
\end{proof}

\subsubsection{One Query Round}

Next, we study the Sample-Based Offline Adversarial Order model with One Query Round (SBOffAO1). In Section~\ref{sec:onAOS}, we presented Algorithm~\ref{alg:on1} for the online-adversary model SBOnAO1. Since Algorithm~\ref{alg:on1} is $1/e$-competitive in SBOnAO1, it is also $1/e$-competitive in SBOffAO1 (the offline-adversary model is \emph{easier} for the algorithm, since any offline adversary induces a valid online adversary).

We conjecture that the optimal competitive ratio in SBOffAO1 is also $1/e$; if true, Algorithm~\ref{alg:on1} would be optimal for SBOffAO1. We prove this conjecture under the above-threshold arrivals restriction.

The restriction below is not needed by the algorithm; it is needed for the matching upper bound. Without it, sub-threshold arrivals may leak information about the preliminary order fixed by the offline adversary. This leakage is exactly what prevents the current argument from proving an upper bound for unrestricted SBOffAO$1$. Thus the unrestricted one-query case remains open.

\paragraph{Above-threshold arrivals.}
We restrict the post-sample information available to the algorithm as follows. After observing the sample $S$ and receiving the query response $r$, the algorithm sets the threshold $T:=\tau(S)$. From that point on, the algorithm only observes arrivals whose weight exceeds $T$; arrivals with weight at most $T$ are completely unobserved (in particular, the algorithm does not learn that they occurred, nor how many of them occurred between two observed arrivals). Equivalently, the online phase is revealed to the algorithm only through the subsequence of elements with $w(\cdot)>T$, starting at the first such element.
To simplify the upper-bound proof, we switch from a fixed-size uniform sample to i.i.d.\ sampling. The next lemma shows that, for the hard distribution used in the proof of Theorem~\ref{theo:off1indist}, this changes the resulting bound only by $o(1)$.
\begin{lemma}
    Let $s_n/n\to p$. Consider the distribution over preliminary orders used in the proof of Theorem~\ref{theo:off1indist}. For this distribution, the resulting upper bound on the success probability under a uniform sample of size $s_n$ differs by $o(1)$ from the corresponding upper bound obtained under independent sampling with probability $p$.
\end{lemma}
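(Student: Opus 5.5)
We will couple the two sampling schemes and show that the relevant statistics have total-variation distance $o(1)$. The hard distribution in Theorem~\ref{theo:off1indist} is presumably of the same type as in the proof of Theorem~\ref{theo:off}: a mixture of preliminary orders indexed by how many top elements $x^i,\dots,x^1$ appear, and in which pattern. Under the above-threshold arrivals restriction, the algorithm's entire view is a function of only a few quantities.

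\begin{enumerate}
\item The pattern of which of the top elements $x^1,x^2,\dots$ are sampled. This determines the threshold $T=\tau(S)$, the response $r$, and which elements exceed $T$.
\item The preliminary order restricted to the elements above $T$.
\end{enumerate}

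The winning event is likewise determined by these quantities together with the algorithm's choices. So the first step is to express the upper bound, meaning the optimum over algorithms of the success probability, as a functional $\Phi$ of the joint law of (index $i$ of the order, sampling indicators of $x^1,\dots,x^K$). We then show that $\Phi$ is $1$-Lipschitz in total variation, because any algorithm's success probability is an expectation of a $[0,1]$-valued function of that view.

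The second step is to truncate to a constant $K$. Under either scheme, the probability that the top $K$ elements are all unsampled is about $(1-p)^K$. On the complementary event, the view depends only on the sampling indicators of $x^1,\dots,x^K$. Hence the event that the sampling status of elements beyond rank $K$ matters has probability at most $(1-p)^K+o(1)$.

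The third step compares, for fixed $K$, the law of the indicators of $x^1,\dots,x^K$ under a uniform $s_n$-subset with their law under i.i.d.\ Bernoulli$(p)$ sampling. For any pattern with $a$ sampled elements, the uniform probability is
\[
\binom{n-K}{s_n-a}\Big/\binom{n}{s_n}\;=\;(s_n/n)^a(1-s_n/n)^{K-a}(1+O(K^2/n)),
\]
and this tends to $p^a(1-p)^{K-a}$. So the total-variation distance is $o(1)$ for each fixed $K$.

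Combining the steps gives a difference of at most $2(1-p)^K+o(1)$; letting $K\to\infty$ slowly yields $o(1)$.

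The main obstacle is the truncation when $p$ may be $0$, or when the hard distribution places weight on orders whose relevant elements lie deep in the ranking. There the number of relevant top elements is not bounded. I would handle this by treating $p=0$ separately, where both bounds are trivially small or coincide, and otherwise by choosing $K=K_n\to\infty$ with $K_n^2=o(n)$ so that both error terms vanish. The other point needing care is justifying that the optimal-algorithm bound is Lipschitz in the input law. This follows because the optimum of expectations of bounded functions changes by at most the total-variation distance, applied uniformly over deterministic algorithms.
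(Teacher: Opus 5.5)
The gap is at $p=0$. For fixed $p\in(0,1]$ your coupling argument works, at least in the direction Theorem~\ref{theo:off1indist} uses. It fails at $p=0$, because your truncation bounds the event ``$x^1,\dots,x^K$ all unsampled'' only by its probability, about $(1-p)^K$.

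That case is not degenerate. The lemma must hold for every sequence with $s_n/n\to p$, and the theorem applies it to an arbitrary fixed sample size $s_0$, e.g.\ $s_0=1$ or $s_0=\lfloor\sqrt n\rfloor$. At $p=0$ the bad event has probability $1$ under coin-flip sampling, so $2(1-p)^K+o(1)$ is vacuous for every choice of $K$ or $K_n$.

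Your fallback, that for $p=0$ ``both bounds are trivially small or coincide'', is not justified. They do not coincide: coin flips with $p=0$ give $r=n$ almost surely, while a uniform sample of size $s_n\ge 1$ gives $r\le n-s_n$, so the two laws of $r$ are at total-variation distance $1$. Nor is the uniform-sample bound ``trivially small'' when $1\le s_n=o(n)$. It cannot come from your truncated total-variation comparison, whose error term is then of order $1$. It is a property of the hard distribution that your proposal never establishes.

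The missing ingredient is the conditional bound the paper's proof rests on. The preliminary order $\psi$ is drawn independently of the sample, and the top-$r$ elements form a contiguous block of $\psi$ in which the position of $x^1$ is uniform on $[r]$. Under the above-threshold restriction, every block element before $x^1$ is seen as a new record. Hence, conditional on $\mathrm{QR}_s=r$, no algorithm wins with probability more than $1/r$, under either sampling scheme.

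This bounds the contribution of $\{r>K\}$ by $\sum_{r>K}\PP[r]/r\le 1/K$, uniformly in $p$. It then remains only to check $\PP_{\mathrm{unif}}[r]\to p(1-p)^r$ for each fixed $r\le K$. This holds also at $p=0$, where both sides tend to $0$ since $\PP_{\mathrm{unif}}[r]\le s_n/n$. So the whole comparison reduces to the law of $r$, and no total-variation bound on the algorithm's full view is needed.

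A smaller issue concerns your step 1. Under coin-flip sampling the algorithm also observes $|S|$, which depends on all coins and is deterministic in one scheme but binomial in the other. So the view is not a function of the top-$K$ indicators alone. This is harmless for the direction the theorem needs, but it is again the independence of $\psi$ from the sample that makes $|S|$ uninformative, and you should say so.
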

\begin{proof}
    For each $r\le K$, the probability of obtaining exactly $r$ candidates above the threshold under independent sampling is $p(1-p)^r$. Under uniform sampling of size $s_n$, it is $\frac{s_n}{n}\frac{(n-s_n)_{\underline{r}}}{(n-1)_{\underline{r}}}$, where $(a)_{\underline{r}}:=a(a-1)\cdots(a-r+1)$ denotes the falling factorial.

    We use only two features of the distribution: (i) the preliminary order is sampled independently of the sample, and (ii) conditional on $\mathrm{QR}_s=r$, the location of $x^1$ inside the observed above-threshold block is uniform.

    Therefore, before reaching $x^1$, every observed above-threshold element is a new record. No deterministic ordinal strategy can distinguish the true $x^1$ from these earlier arrivals, so the conditional success probability is at most $1/r$.

    The difference between the two sampling probabilities tends to zero as $n\to\infty$ for any fixed $K$. Since the conditional success probability given $r$ is at most $1/r$, the contribution from cases $r\le K$ differs by $o(1)$. For $r>K$, the contribution to the success probability is bounded by $\sum_{r>K} \PP[r]\,\frac{1}{r} \le \frac{1}{K}$. Taking $K$ large enough makes this term arbitrarily small. Thus the two bounds differ by $o(1)$.
\end{proof}

\begin{theorem}
\label{theo:off1indist}
Under the above-threshold arrivals restriction, Algorithm~\ref{alg:on1} is optimal within the restricted version of SBOffAO$1$, and the optimal value there is $1/e$.
\end{theorem}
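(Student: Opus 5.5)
The lower bound is already available: Algorithm~\ref{alg:on1} with $p=1/e$ is $1/e$-competitive in SBOnAO$1$ by Theorem~\ref{theo:on1}. Hence it is $1/e$-competitive in SBOffAO$1$. It is also admissible under the above-threshold arrivals restriction, since it only counts arrivals with weight exceeding $\tau(S)$. So the work lies in the matching upper bound $1/e+o(1)$ for every algorithm in the restricted model. I would prove it by Yao's principle, using a distribution over preliminary orders chosen independently of the sample. The goal is a hard distribution under which, conditional on the sample and the query response $r$, the position of $x^1$ within the observed above-threshold block is uniform, and every observed element before $x^1$ is a new record. By the preceding lemma, it then suffices to work with i.i.d.\ sampling with probability $p$, where $\PP[r=k]=p(1-p)^k$.

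For the hard distribution I would take a uniformly random preliminary order of the top elements, namely a uniformly random permutation of all of $E$. The above-threshold block $B$ consists of the top $r$ elements, i.e.\ the unsampled elements above the sample maximum. Their relative order in the induced post-sample order is then uniform on $B$ and independent of $S$. Under the restriction, the algorithm sees only this block, in uniformly random relative order, and nothing about sub-threshold arrivals. That is exactly the leakage the restriction removes.

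The key step is to argue that the information available (the sample, $r$, and the ordinal pattern of the block observed so far) yields at best the classical secretary guarantee on $r$ items. Querying later than immediately after the sample gives no extra power here. Every observed element is above threshold and the query count is determined by $r$ and the observed prefix, so I would assume the query is at step $s$.

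The cleanest route, I think, mirrors the last paragraph of the proof of Theorem~\ref{theo:on1}. Given any restricted algorithm $A=(\mathcal D,A')$, build a Random Order algorithm $\widetilde A$ with no queries: draw $s\sim\mathcal D_n$, reject the first $s$ arrivals, and then simulate $A'$ on the subsequence of later arrivals exceeding $\tau(E_s)$. The random-order prefix plays the role of the sample. The remaining suffix, restricted to above-threshold elements, has the same law as the observed block under the uniform preliminary order. The simulation needs $r$, which $\widetilde A$ lacks. To handle this, I would instead compare with the RO model with one query. Alternatively, I would directly bound the success probability by $\mathbb E[\max_\text{stopping rule}\PP[\text{win}\mid r]]$.

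That last point is the obstacle. With $r$ known, the optimal strategy on a uniformly ordered block of $r$ items succeeds with probability about $1/e$, not $1/r$, and the total would exceed $1/e$. So the uniform permutation is too easy, and I would replace it with an adversarial-within-block distribution: the top elements are presented in increasing order, as in the orders $\psi_i$ of Theorem~\ref{theo:off}, with the cutoff $i$ drawn so that $x^1$'s position within the block is uniform given $r$. Then every observed element is a record, so the records carry no ordinal information. The only information is the count of observed elements, and the posterior on which one is $x^1$ is uniform on $[r]$, giving conditional success at most $1/r$. Then
\[
\PP[\text{win}]\le \sum_{k\ge1}p(1-p)^k\frac1k=-p\ln p\le \frac1e .
\]
The hardest part is constructing this distribution so that the uniformity holds exactly, independently of $p$ and $s$. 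I would first try orders that place $x^1$ at a uniformly random rank-position among an increasing run of the top $m$ elements, with $m$ large, and check the uniformity via a Bayes computation like the one in Theorem~\ref{theo:off}.
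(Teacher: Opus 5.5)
Your outer argument matches the paper's: Yao's principle, the i.i.d.-sampling lemma, a conditional bound of $1/k$ given $r=k$ via a subprobability vector of stopping positions, and summation to $-p\ln p\le 1/e$. However, the step that actually carries the proof, the hard distribution over preliminary orders, is never constructed, and the candidate you point to fails.

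The offline adversary fixes $\psi$ before the sample is drawn, so $\psi$ cannot depend on $r$. You therefore need one distribution under which, for every $k$ simultaneously, the relative order of $\{x^1,\dots,x^k\}$ is increasing up to $x^1$ and the position of $x^1$ in it is uniform on $[k]$. Since $r$ is geometric, all small $k$ matter.

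Orders of the form $\psi_i$ from Theorem~\ref{theo:off}, with initial run $x^i,x^{i-1},\dots,x^1$, cannot achieve this. Given $r=k$, $x^1$ sits at position $\min(i,k)$ of the observed block, so uniformity would force $\PP[i=1]=1/k$ for every $k\ge 2$. Worse, once $r$ is revealed this family is nearly trivial: stopping at the $r$-th above-threshold arrival wins whenever $i\ge r$. The family $\psi_i$ was hard in Theorem~\ref{theo:off} only because a no-query algorithm does not know where the block ends. The Bayes computation you plan to reuse would reveal a far-from-uniform posterior here.

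The missing idea is that the elements placed before $x^1$ must not be the next-heaviest ones. The paper builds $\psi$ by inserting $x^2,x^3,\dots,x^n$ in turn into a list. Each $x^t$ goes at the left end with probability $K_{t-1}/t$, where $K_{t-1}$ is the current position of $x^1$, and at the right end otherwise. Because all later insertions happen at the ends, the relative order of $\{x^1,\dots,x^t\}$ is frozen at step $t$. That order is increasing up to $x^1$ and decreasing afterwards, and a short induction shows that $K_t$ is uniform on $[t]$ for every $t$.

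Equivalently, this is a P\'olya-urn scheme. Draw $J$ uniformly from $[n]$. Put a uniformly random $(J-1)$-subset of $\{x^2,\dots,x^n\}$ in increasing order before $x^1$, and the rest in decreasing order after it. The number of those preceding elements lying in $\{x^2,\dots,x^k\}$ is then uniform on $\{0,\dots,k-1\}$.

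With this distribution, given $r=k$, every observed element before $x^1$ is a record and $x^1$'s position in the block is uniform. The rest of your argument then goes through as you describe.
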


\begin{proof}
As discussed above, Algorithm~\ref{alg:on1} achieves $1/e$ in SBOnAO1 and hence also in SBOffAO1. It therefore suffices to prove the matching upper bound for the restricted model under the above-threshold arrivals restriction.

To apply Yao's principle, we may restrict attention to deterministic algorithms with a fixed sample size $s_0$. For convenience, we instead use an equivalent coin-flip sampling model: for each element, independently flip a coin that comes up heads with probability $p:=s_0/n$, and include the element in the sample if and only if the coin is heads. Thus $s\sim\mathrm{Bin}(n,p)$, as justified by the lemma above.

Fix $p\in[0,1]$, let $s\sim\mathrm{Bin}(n,p)$, and let $S$ be the sampled set. After receiving $S$, the algorithm makes its unique oracle call and receives $r=\mathrm{QR}_s$. Fix a deterministic ordinal algorithm $A$ for SBOffAO1 satisfying the above-threshold arrivals restriction. We bound $\PP[A\;\wins]$ by exhibiting a distribution $\calF$ over preliminary orders $\psi:[n]\to E$.
\medskip

We define $\calF$ by an explicit random construction of $\psi$. Write $E=\{x^1,x^2,\dots,x^n\}$ with $w(x^1)>\cdots>w(x^n)$. For each $t\in[n]$, we maintain an ordered list
\[
B_t=\bigl(b^{(t)}_1,b^{(t)}_2,\dots,b^{(t)}_t\bigr)
\qquad\text{with underlying set }\{x^1,\dots,x^t\}.
\]
Let $K_t\in[t]$ be the index such that $b^{(t)}_{K_t}=x^1$. Initialize $B_1=(x^1)$ and $K_1=1$.

For $t=2,3,\dots,n$, given $(B_{t-1},K_{t-1})$, define $B_t$ by inserting $x^t$ at one of the two ends. With probability $K_{t-1}/t$, set $B_t=(x^t,b^{(t-1)}_1,\dots,b^{(t-1)}_{t-1})$ and $K_t=K_{t-1}+1$. Otherwise, set $B_t=(b^{(t-1)}_1,\dots,b^{(t-1)}_{t-1},x^t)$ and $K_t=K_{t-1}$. Finally, define the preliminary order $\psi$ by $\psi(j):=b^{(n)}_j$ for $j\in[n]$.

Two features of this construction are what we need. First, for every $t$, the set $\{x^1,\dots,x^t\}$ occupies a contiguous interval in $\psi$ (later steps only add new elements at the ends of the current block). Moreover, within that interval the weights strictly increase up to $x^1$ and then strictly decrease, since each new element $x^t$ is worse than all previous ones. Second, $K_t$ is uniform over $[t]$ for every $t$. We prove this claim by induction below.

The claim is immediate for $t=1$. Fix $t\ge 2$ and assume $K_{t-1}$ is uniform on $[t-1]$. For any $i\in\{2,\dots,t-1\}$, the event $\{K_t=i\}$ occurs in exactly two disjoint ways: either $K_{t-1}=i$ and we insert $x^t$ on the right (probability $1-i/t$), or $K_{t-1}=i-1$ and we insert $x^t$ on the left (probability $(i-1)/t$). Hence
\[
\PP[K_t=i]=\PP[K_{t-1}=i]\Bigl(1-\frac{i}{t}\Bigr)+\PP[K_{t-1}=i-1]\Bigl(\frac{i-1}{t}\Bigr)=\frac{1}{t-1}\Bigl(1-\frac{i}{t}\Bigr)+\frac{1}{t-1}\Bigl(\frac{i-1}{t}\Bigr)=\frac{1}{t}.
\]

We now complete the induction by handling the boundary cases. The boundary cases are similar and use only one of the two possibilities. For $i=1$, inserting on the left would shift $x^1$ to position $2$, so
\[
\PP[K_t=1]=\PP[K_{t-1}=1](1-1/t)=\frac{1}{t-1}\cdot\frac{t-1}{t}=1/t.
\]
For $i=t$, insertion on the right keeps $x^1$ within the first $t-1$ positions, so
\[
\PP[K_t=t]=\PP[K_{t-1}=t-1]\cdot\frac{t-1}{t}=\frac{1}{t-1}\cdot\frac{t-1}{t}=1/t.
\]
Thus $K_t$ is uniform on $\{1,\dots,t\}$ for all $t$.

Let us continue with the proof of the theorem. Condition on the event $r=k$ with $1\le k\le n-1$. By definition of $r$, the maximum element in the sample is $x^{k+1}$, and the elements of $E\setminus S$ beating it are exactly $\{x^1,\dots,x^k\}$. Under i.i.d.\ sampling,
\[
\PP[r=k]=p(1-p)^k\quad (k\le n-1),
\qquad\text{and}\qquad
\PP[r=n]=(1-p)^n.
\]

Since the construction makes $\{x^1,\dots,x^k\}$ a contiguous block in $\psi$, and none of these elements is sampled when $r=k$, the same $k$ elements appear as a contiguous block in the induced order $\phi$ seen after the sample.

Under the above-threshold arrivals restriction, after receiving the sample $S$ and the query response $r$, the algorithm observes only the post-sample subsequence whose weights exceed $\tau(S)=w(x^{k+1})$, starting from the first such arrival. Conditioned on $r=k$, this observed subsequence is exactly the length-$k$ block consisting of $\{x^1,\dots,x^k\}$.

Let $J\in[k]$ be the position of $x^1$ inside this observed block. By uniformity of $K_k$ and independence of sampling from $\psi$,
\[
\PP[J=j\mid r=k]=\frac{1}{k}\qquad\text{for all }j\in[k].
\]
We claim that, conditional on $r=k$, no deterministic ordinal algorithm satisfying the above-threshold arrivals restriction can win with probability more than $1/k$.

Indeed, before the observed process reaches $x^1$, every revealed above-threshold element is a new record. Hence, upon seeing the $t$-th such arrival, the algorithm cannot distinguish whether $J=t$ or $J>t$.

Thus any (possibly randomized) strategy is described by a subprobability vector $(a_1,\dots,a_k)$. Here $a_t$ is the probability that the algorithm stops at the $t$-th observed above-threshold arrival, and $\sum_{t=1}^k a_t\le 1$. Since $J$ is uniform on $[k]$,
\[
\PP[A\;\wins\mid r=k]
=\sum_{t=1}^k a_t\PP[J=t\mid r=k]
=\frac{1}{k}\sum_{t=1}^k a_t
\le \frac{1}{k}.
\]

Therefore,
\[
\PP[A\;\wins]
\le \sum_{k=1}^{n-1}\PP[r=k]\cdot\frac{1}{k}+\PP[r=n]\cdot\frac{1}{n}
= p\sum_{k=1}^{n-1}\frac{(1-p)^k}{k}+\frac{(1-p)^n}{n}.
\]

If $p=0$, the bound above gives $\PP[A\;\wins]\le 1/n$. This implies the desired asymptotic upper bound. We therefore assume $p\in(0,1]$ for the remaining calculation.

Using $\sum_{k\ge 1}(1-p)^k/k=-\ln(p)$ and $\sum_{k\ge n}(1-p)^k/k\ge (1-p)^n/n$, we obtain
\[
p\sum_{k=1}^{n-1}\frac{(1-p)^k}{k}+\frac{(1-p)^n}{n}
= -p\ln(p)-p\sum_{k\ge n}\frac{(1-p)^k}{k}+\frac{(1-p)^n}{n}
\le -p\ln(p)+\frac{(1-p)^{n+1}}{n}.
\]
Hence,
\[
\PP[A\;\wins]\le -p\ln(p)+\frac{(1-p)^{n+1}}{n}.
\]
Since $\max_{p\in(0,1]}(-p\ln p)=1/e$, we conclude that $\PP[A\;\wins]\le 1/e+o(1)$, completing the proof.
\qedhere
\end{proof}

\begin{remark}[Small examples for the hard distribution]
To build intuition for the construction of $\calF$ in the proof of Theorem~\ref{theo:off1indist}, we list the resulting distribution explicitly for $n=3,4,5$. We write $(i_1,\dots,i_n)$ to denote the order $(x^{i_1},\dots,x^{i_n})$.

\smallskip
\noindent The distribution $\calF$ for $n=3$ is as follows:
\[
(3,2,1)\ \text{and}\ (1,2,3)\ \text{have probability } \frac{1}{3},\qquad
(2,1,3)\ \text{and}\ (3,1,2)\ \text{have probability } \frac{1}{6}.
\]

\smallskip
\noindent For $n=4$, the distribution is:
\[
(1,2,3,4)\ \text{and}\ (4,3,2,1)\ \text{have probability } \frac{1}{4},
\]
and each of
\[
(3,2,1,4),\ (4,2,1,3),\ (2,1,3,4),\ (4,3,1,2),\ (3,1,2,4),\ (4,1,2,3)
\]
has probability $1/12$.

\smallskip
\noindent For $n=5$, the distribution is:
\[
(1,2,3,4,5)\ \text{and}\ (5,4,3,2,1)\ \text{have probability } \frac{1}{5},
\]
each of
\begin{gather*}
(2,1,3,4,5),\ (3,1,2,4,5),\ (4,1,2,3,5),\ (5,1,2,3,4),\ \
(4,3,2,1,5),\ (5,3,2,1,4),\ (5,4,2,1,3),\ (5,4,3,1,2)
\end{gather*}
has probability $1/20$, and each of
\begin{gather*}
(3,2,1,4,5),\ (4,2,1,3,5),\ (4,3,1,2,5),\ \
(5,2,1,3,4),\ (5,3,1,2,4),\ (5,4,1,2,3)
\end{gather*}
has probability $1/30$.
\end{remark}

Without the \emph{above-threshold arrivals restriction}, we provide the following slightly weaker upper bound in a restricted case.

\begin{theorem}
\label{theo:off1}
    Let $A$ be an algorithm for the Sample-Based Offline Adversarial Order model with One Query Round (SBOffAO1) that uses its oracle call immediately after receiving the sample. Then the competitive ratio of $A$ is less than $0.3787$.
\end{theorem}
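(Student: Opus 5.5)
We bound the success probability of any immediate-query algorithm against a hard distribution over preliminary orders, using Yao's principle as in the proofs of Theorems~\ref{theo:off} and~\ref{theo:off1indist}.

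\textbf{Reduction to a fixed sample size.} We restrict to deterministic ordinal algorithms with a fixed sample size $s=pn$. By the sampling lemma above, we may switch to i.i.d.\ sampling with probability $p$ at a cost of $o(1)$. Because the query is made right after the sample, the algorithm's information at the start of the selection phase is $S$ together with $r=\mathrm{QR}_s$. Conditioned on $r=k$, the candidates are exactly $\{x^1,\dots,x^k\}$, and $\PP[r=k]=p(1-p)^k$.

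\textbf{The hard distribution.} The new difficulty is that the algorithm now observes sub-threshold arrivals, and their positions can leak where $x^1$ sits. I would therefore look for a distribution under which these positions carry little information. Two candidate families are:
\begin{itemize}[nosep]
\item the increasing orders $\psi_i$ from Theorem~\ref{theo:off}, which defeat algorithms relying on $r$;
\item the two-ended block construction $\calF$ of Theorem~\ref{theo:off1indist}, which keeps $x^1$ uniformly placed within the top-$k$ block.
\end{itemize}
I would take a mixture of the two, with weight $\lambda$ on one and $1-\lambda$ on the other, possibly with randomly interleaved low elements, and optimize $\lambda$ numerically.

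\textbf{Solving the conditional stopping problem.} For each $k$, and averaging over the order, the stopping problem is a one-dimensional optimal stopping problem over the observed record sequence, viewed together with the gaps of sub-threshold elements. I would compute its value $v_k(p,\lambda)$ by backward induction, or bound it by a linear program over stopping probabilities $a_t$ as in Theorem~\ref{theo:off1indist}. This gives
\[
\PP[A\;\wins]\le \sum_k p(1-p)^k\, v_k(p,\lambda)+o(1),
\]
and then
\[
\max_p \min_\lambda \sum_k p(1-p)^k\, v_k(p,\lambda),
\]
evaluated numerically, should come out below $0.3787$.

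\textbf{Main obstacle.} The hardest step is controlling the information leaked by sub-threshold elements, since they reveal how far along the block $\{x^1,\dots,x^k\}$ the algorithm currently is. My first attempt would be to grant the algorithm full knowledge of $\psi$'s family type and of $J$ (the first-record position), as in Theorem~\ref{theo:off}. This generosity only strengthens the algorithm, so any upper bound obtained remains valid, while the resulting conditional problems become clean posterior-maximization computations.
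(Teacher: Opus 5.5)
\textbf{There is a genuine gap.} Your proposal does not yet contain a proof. The step that must produce the constant $0.3787$ is left to a numerical search that ``should'' succeed, and the two families you propose to mix cannot succeed against immediate-query algorithms that see sub-threshold arrivals.

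\emph{The orders $\psi_i$.} Far from defeating algorithms that use $r$, the increasing orders $\psi_i$ of Theorem~\ref{theo:off} are defeated by them. Conditional on $r=k$ and $i\ge k$, the above-threshold arrivals are exactly $x^k,x^{k-1},\dots,x^1$. Hence ``select the $r$-th above-threshold arrival'' wins, and since $\PP[i\ge k]\to 1$ its success probability tends to $1-p$.

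\emph{The construction $\calF$.} The construction of Theorem~\ref{theo:off1indist} leaks through the sub-threshold prefix, which is exactly the obstruction recorded in the remark after that theorem. The end-insertion process is a P\'olya urn. So, given $K_k=j$, the fraction of elements to the left of the block $\{x^1,\dots,x^k\}$ converges to a $\mathrm{Beta}(j,k+1-j)$ variable $\beta$. This fraction is read off, up to $o(1)$, from the number of sub-threshold arrivals before the first above-threshold one, i.e.\ from $J$. The posterior of $j$ is then $\binom{k-1}{j-1}\beta^{j-1}(1-\beta)^{k-j}$. Consequently the algorithm wins with probability $\int_0^1\max(\beta,1-\beta)\,d\beta=3/4$ for $k=2$, $17/27$ for $k=3$, and $71/128$ for $k=4$, instead of $1/k$.

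\emph{The relaxation.} Revealing the family type and $J$ is therefore fatal rather than harmless. It gives the algorithm about $1-p$ on the $\psi_i$ component and exactly these leaky values on the $\calF$ component. At $p=0.3$, the terms $k\le 4$ alone give about $0.425$, so the resulting bound exceeds $0.42$ for every $\lambda$. The hedge ``randomly interleaved low elements'' is precisely where the real work lies, and it is left unspecified.

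\textbf{What is missing} is a distribution under which sub-threshold arrivals carry essentially no information. The paper randomizes only $x^1,\dots,x^4$ and follows them by a common suffix in decreasing order. It places probabilities $\tfrac14,\tfrac14,\tfrac16,\tfrac16,\tfrac1{12},\tfrac1{12}$ on $x^1x^2x^3x^4$, $x^4x^3x^2x^1$, $x^2x^1x^3x^4$, $x^4x^3x^1x^2$, $x^3x^2x^1x^4$, $x^4x^1x^2x^3$. The case analysis is then:
\begin{enumerate}
\item For $r\ge 4$, nothing observed depends on the order before $x^1$ arrives, and $x^1$ is uniform among the first four positions. This gives $1/4$.
\item For $r=3$, one has $x^4\in S$, and the induced order of the top three is the $n=3$ case of $\calF$. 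This gives $1/3$.
\item For $r=2$, the only possible leak is whether the first post-sample arrival is the sub-threshold $x^4$ (the event $\bar C$). The weights are tuned so that $\PP[x^1\text{ before }x^2\mid r=2,C]=\PP[x^1\text{ before }x^2\mid r=2,\bar C]=\tfrac12$.
\end{enumerate}
The $r=2$ step is delicate: the $n=4$ instance of $\calF$ fails it, since given $C$ and $x^4\notin S$ it yields $2/3$. The constant is
\[
\max_p\bigl(p(1-p)+\tfrac12p(1-p)^2+\tfrac13p(1-p)^3+\tfrac14(1-p)^4\bigr)\approx 0.37864,
\]
so the truncation at four elements is exactly where $0.3787$ comes from.

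\textbf{Two smaller points.}
\begin{enumerate}
\item By Yao's principle the adversary's distribution must be fixed independently of the realized sample size, so the relevant quantity is $\min_\lambda\max_p$, not $\max_p\min_\lambda$. The paper uses one $s$-independent distribution and averages over $D_n$.
\item The sampling lemma you invoke is proved only for the distribution of the restricted model, so the $o(1)$ approximation must be redone. This is easy for a distribution supported on the top four elements.
\end{enumerate}
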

\begin{proof}
As in previous proofs, we consider a distribution $\mathcal{F}$ over preliminary adversary orders. We show that any deterministic ordinal algorithm that calls the oracle immediately after receiving the sample selects $x^1$ with probability less than $0.3787$.

To apply Yao's principle, it is enough to bound deterministic post-sample algorithms for each fixed sample size $s$. We first prove such a fixed-size bound, and then we average over an arbitrary distribution $D_n$ over sample sizes.

Suppose that for large $n\in\NN$ the adversary draws one of the following preliminary orders with the indicated probabilities. For brevity, we use the notation $\psi=abcd\#$ to mean that $\psi(1)=a$, $\psi(2)=b$, $\psi(3)=c$, $\psi(4)=d$, and the remaining elements appear in decreasing order of weight:
\begin{align*}
    \psi_1 &= x^1x^2x^3x^4\# \text{ with probability }\frac{1}{4}, &\psi_2 = x^4x^3x^2x^1\# \text{ with probability }\frac{1}{4},\\
    \psi_3 &= x^2x^1x^3x^4\# \text{ with probability }\frac{1}{6}, &\psi_4 = x^4x^3x^1x^2\# \text{ with probability }\frac{1}{6},\\
    \psi_5 &= x^3x^2x^1x^4\# \text{ with probability }\frac{1}{12}, &\psi_6 = x^4x^1x^2x^3\# \text{ with probability }\frac{1}{12}.
\end{align*}

Fix $s\in\{0,\dots,n\}$, let $S$ be a uniform size-$s$ sample, and let $r=\mathrm{QR}_s$. We bound $\PP[A\;\wins\mid r]$ under the above distribution over $\psi$ as follows.
\[
\PP[A\;\wins\mid r=k] \le \frac{1}{k}\quad\text{for }k\in\{2,3\},
\qquad\text{and}\qquad
\PP[A\;\wins\mid r\ge 4] \le \frac{1}{4}.
\]

Indeed, $r\ge 4$ ensures that none of $x^1, \dots, x^4$ is sampled. Since the suffix $\#$ is identical across all six preliminary orders, the subset of sampled elements below $x^4$ has the exact same distribution in all cases.

When $r\ge 4$, the oracle call provides no additional information about the preliminary order. In particular, for every $i\in[6]$, the conditional probability that $\psi=\psi_i$, given everything the algorithm knows before seeing the first post-sample element, equals its unconditional probability $\PP(\psi=\psi_i)$.

Moreover, up to the arrival of $x^1$, the algorithm sees the same ordinal information in all six cases: each new element is larger than all previously seen elements. The first difference occurs only when the element after $x^1$ arrives. Since $\PP[\psi^{-1}(x^1)=j]=\frac{1}{4}$ for all $j\in[4]$, it follows (by the same argument as in the proof of Theorem~\ref{theo:ao1}) that $\PP[A\;\wins\mid r\ge 4]\le \frac{1}{4}$.

When $r=3$, we must have $x^4\in S$ and $\{x^1,x^2,x^3\}\cap S=\emptyset$. Let $\tilde{\psi}$ denote the order obtained after removing the sample. Then
\begin{align*}
    \tilde{\psi}_1 &= x^1x^2x^3\# \text{ with probability } \frac{1}{3}, &\tilde{\psi}_2 = x^3x^2x^1\# \text{ with probability } \frac{1}{3},\\
    \tilde{\psi}_3 &= x^2x^1x^3\# \text{ with probability } \frac{1}{6}, &\tilde{\psi}_4 = x^3x^1x^2\# \text{ with probability } \frac{1}{6}.
\end{align*}
Again, all elements before $x^1$ appear in increasing order of weight, and $\PP[\tilde{\psi}^{-1}(x^1)=i]=\frac{1}{3}$ for all $i\in[3]$. Hence $\PP[A\;\wins\mid r=3]\le \frac{1}{3}$.

The case $r=2$ is slightly more delicate. Here, $r=2$ implies $x^3\in S$ and $\{x^1,x^2\}\cap S=\emptyset$, but $x^4$ may or may not be in $S$. Let $C$ be the event that the first post-sample element is a record.

Since the algorithm's behavior may depend on whether $C$ occurs, it suffices to show
\[
\PP[A\;\wins\mid r=2,C]\le \frac{1}{2}
\qquad\text{and}\qquad
\PP[A\;\wins\mid r=2,\bar{C}]\le \frac{1}{2}.
\]
Equivalently, it suffices to show that
\[
\PP\bigl[\psi^{-1}(x^1)<\psi^{-1}(x^2)\mid r=2,C\bigr]
=\PP\bigl[\psi^{-1}(x^1)<\psi^{-1}(x^2)\mid r=2,\bar{C}\bigr]
=\frac{1}{2}.
\]
Conditioned on $r=2$, we have $x^3\in S$ and $x^1,x^2\notin S$. If $x^4\in S$, then $C$ always occurs and
\[
\PP[\psi^{-1}(x^1)<\psi^{-1}(x^2)\mid r=2,C,x^4\in S]
=\frac{1/4+1/6+1/12}{1}=\frac12,
\]
since the favorable orders are $\psi_1,\psi_4,\psi_6$.

If $x^4\notin S$ and $C$ occurs, then necessarily $\psi\in\{\psi_1,\psi_3,\psi_5\}$, and only $\psi_1$ is favorable. Hence
\[
\PP\bigl[\psi^{-1}(x^1)<\psi^{-1}(x^2)\mid r=2,C,x^4\notin S\bigr]
=\frac{1/4}{1/4+1/6+1/12}=\frac12.
\]
If $x^4\notin S$ and $\bar C$ occurs, then necessarily $\psi\in\{\psi_2,\psi_4,\psi_6\}$, and $\psi_4,\psi_6$ are favorable. Thus
\[
\PP[\psi^{-1}(x^1)<\psi^{-1}(x^2)\mid r=2,\bar C]
=\frac{1/6+1/12}{1/4+1/6+1/12}=\frac12.
\]
We conclude that $\PP[A\;\wins\mid r=2]\le \frac{1}{2}$.

For fixed $s$,
\[
\PP[A\text{ wins}\mid s]\le F_n(s),
\]
where
\[
F_n(s):=
\PP[r=1\mid s]
+\frac12\PP[r=2\mid s]
+\frac13\PP[r=3\mid s]
+\frac14\PP[r\ge4\mid s].
\]

Let $p=s/n$. Since the expressions above contain only products of length at most four, the approximation error is $o(1)$ uniformly over $s\in\{0,\ldots,n\}$. Moreover,
\[
F_n(s)
=
p(1-p)+\frac12p(1-p)^2+\frac13p(1-p)^3+\frac14(1-p)^4+o(1).
\]

Now let $D_n$ be any distribution over sample sizes. Averaging the fixed-size bound gives
\[
\PP[A\text{ wins}]
\le \mathbb E_{s\sim D_n}[F_n(s)]
\le \max_{p\in[0,1]}
\left(
p(1-p)+\frac12p(1-p)^2+\frac13p(1-p)^3+\frac14(1-p)^4
\right)+o(1).
\]

The maximum is less than $0.3787$. By Yao's principle, the same upper bound applies to randomized immediate-query algorithms.
\end{proof}

\paragraph{Remark.}
The adversarial distribution used in the proof of Theorem~\ref{theo:off1indist} is tailored to the above-threshold arrivals restriction and need not be hard for unrestricted algorithms. Under this distribution, arrivals below the threshold may be correlated with the position of $x^1$ within the block $\{x^1,\dots,x^r\}$, and thus non-records can leak information.

For example, condition on $r=2$ and on the event that $x^4\notin S$. If the algorithm is allowed to observe the post-sample arrivals below the threshold, then before seeing the first record it may learn (even with some error) that $x^4$ has already arrived. This makes it more likely that the observed block is $(x^2,x^1)$ rather than $(x^1,x^2)$.

\subsubsection{Unlimited Query Rounds}
In the Unlimited Query Rounds version (SBOffAO$\infty$), the algorithm is more powerful than in the One Query Round version, where one can achieve a ratio of $1/e\approx 0.3678$. At the same time, SBOffAO$\infty$ is (in principle) harder than Random Order with Unlimited Query Rounds, where the competitive ratio is $1/2$ (Theorem~\ref{theo:rounl}). One might therefore expect the optimal competitive ratio for SBOffAO$\infty$ to coincide with one of these two extremes. In this section, we determine the exact competitive ratio for SBOffAO$\infty$ and show that it lies strictly between them: $\alpha_{\text{SBOffAO}\infty}=\frac{20\sqrt{10}-29}{81}\approx 0.4228.$

We depict the algorithm as Algorithm~\ref{alg:offunl}. In words, the algorithm first draws a random sample of size $s\sim\Bin(n,p)$, sets a threshold $\tau \gets \tau(S)$, and makes an oracle call to obtain the current query response $r$.

In the selection phase, the algorithm branches on $r$. If $r\in\{1,2\}$, it selects the first post-sample element exceeding $\tau$. If $r=3$, it either selects this first above-threshold element (with probability $q$) or rejects it and uses a second oracle call to decide between the next one or two candidates. If $r\ge 4$, it keeps querying while rejecting until the response drops below $3$, at which point it makes its final choice (using probability $h$ when $r=2$).

Let $A$ denote Algorithm~\ref{alg:offunl}.

\begin{algorithm}
    \SetAlgoNoLine
    \DontPrintSemicolon
	\KwSty{Parameters:} $p,q,h\in[0,1]$\;
    $s \gets \Bin(n,p)$\;
    \KwSty{Sampling phase:} reject a uniform random sample $S\subseteq E$ of size $s$\;
    $\tau \gets \tau(S)$ and receive query response $r$ from the oracle\;
    \KwSty{Selection phase:} \\
        \Indp \lCase{$r\in\{1,2\}$}{\Return first $x$ such that $w(x)>\tau$}
        \uCase{$r=3$}{
            \KwSty{with probability} $q$: \Return first $x$ such that $w(x)>\tau$\;
            \uElse{
            reject first $x$ such that $w(x)>\tau$ at step $t$\;
            recalculate $r \gets \mathrm{QR}_t$, $\tau \gets \tau(E_t)$\;
            \lIf{$r=0$}{terminate without selecting an element}
            \lElseIf{$r=1$}{\Return first $x$ such that $w(x)>\tau$}
            \lElse{\Return first or second $x$ such that $w(x)>\tau$ w.p. $\frac{1}{2}$ each}
            }
        }
        \uCase{$r\geq 4$}{
            \lWhile{$r\geq 3$ }{
                reject the next arriving element, say at time $i$, and update $r \gets \mathrm{QR}_i$, $\tau \gets \tau(E_i)$}
            \uIf{$r=2$}{
                \KwSty{with probability} h: \Return first $x$ such that $w(x)>\tau$\;
                \lElse{\Return second $x$ such that $w(x)>\tau$}
            }
            \lIf{$r=1$}{\Return first $x$ such that $w(x)>\tau$}
            \lIf{$r=0$}{terminate without selecting an element}
            }
	\caption{SBOffAO Algorithm with Unlimited Query Rounds}
	\label{alg:offunl}
\end{algorithm}
For later use in the proof of Theorem~\ref{theo:offunl}, we record a convenient case analysis for the win probabilities of $A$ on the relative order of $x^1,x^2,x^3$.

For distinct $a,b,c \in \{x^1,x^2,x^3\}$, let $((a,b,c))$ denote the set of preliminary orders $\psi$ such that $\psi^{-1}(a)<\psi^{-1}(b)<\psi^{-1}(c)$. Define
\[
    \eta(a,b,c) := \PP[A\; \wins\mid\psi\in ((a,b,c))].
\]

\begin{lemma}
\label{lem:offunl-case}
With the notation above, the following identities hold:
\begin{align*}
    \eta(x^1,x^2,x^3)&=\eta(x^1,x^3,x^2)=p(1-p)+p(1-p)^2+qp(1-p)^3,\\
    \eta(x^2,x^3,x^1)&=\eta(x^2,x^1,x^3)=p(1-p)+(1-q)p(1-p)^3+(1-p)^4,\\
    \eta(x^3,x^1,x^2)&=p(1-p)+p(1-p)^2+\frac{1}{2}(1-q)p(1-p)^3+h(1-p)^4,\\
    \eta(x^3,x^2,x^1)&=p(1-p)+\frac{1}{2}(1-q)p(1-p)^3+(1-h)(1-p)^4.
\end{align*}
\end{lemma}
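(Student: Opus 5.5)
The plan is to make sampling i.i.d.\ and condition on the query response. Since $s\sim\Bin(n,p)$ and $S$ is a uniform sample of that size, every element lies in $S$ independently with probability $p$, and $S$ is independent of $\psi$. Hence, conditional on $\psi\in((a,b,c))$, the sampling pattern of $x^1,\dots,x^4$ still has its unconditional law. I will split according to the initial response $r=\mathrm{QR}_s$, which equals $k$ exactly when $x^1,\dots,x^k\notin S$ and $x^{k+1}\in S$. This gives
\[
\PP[r=1]=p(1-p),\qquad \PP[r=2]=p(1-p)^2,\qquad \PP[r=3]=p(1-p)^3,\qquad \PP[r\ge4]=(1-p)^4 .
\]
The event $r=0$ (that is, $x^1\in S$) is a certain loss. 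In each remaining case I will check that whether Algorithm~\ref{alg:offunl} wins depends only on the relative order of $x^1,x^2,x^3$ in $\psi$ and on the algorithm's internal coins. Each $\eta$ is then a sum of four terms, one per case.

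\textbf{Cases $r\in\{1,2\}$.} When $r=1$, the only post-sample element above $\tau=w(x^2)$ is $x^1$, so the algorithm always wins. When $r=2$, the above-threshold elements are exactly $x^1,x^2$, and the algorithm returns whichever of them comes first. It therefore wins if and only if $\psi^{-1}(x^1)<\psi^{-1}(x^2)$, which holds for the orders $(1,2,3)$, $(1,3,2)$ and $(3,1,2)$.

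\textbf{Case $r=3$.} The above-threshold elements are $x^1,x^2,x^3$. With probability $q$ the algorithm takes the first of them, and it wins iff $x^1$ is first among the three. With probability $1-q$ it rejects that element and queries again, and the outcome depends on which element was rejected:
\begin{itemize}
\item if it rejected $x^1$, the new response is $r=0$ and the algorithm loses;
\item if it rejected $x^2$, the new response is $r=1$, the only remaining element above the new threshold is $x^1$, and the algorithm wins;
\item if it rejected $x^3$, the new response is $r=2$, the remaining above-threshold elements are $x^1,x^2$, and a fair coin picks the first or the second of them, so the algorithm wins with probability $1/2$ regardless of their order.
\end{itemize}

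\textbf{Case $r\ge4$.} I will argue that the loop cannot exit while $x^1,x^2,x^3$ are all unseen, since all three exceed any record formed before them and so the response stays at least $3$. At the step when the first of these three arrives, it is rejected, and the new response is $0$, $1$ or $2$ according as that element is $x^1$, $x^2$ or $x^3$. Elements below $x^3$ that arrive earlier are rejected harmlessly. The three outcomes are:
\begin{itemize}
\item if $x^1$ arrives first among the three, the algorithm loses;
\item if $x^2$ arrives first, the response is $1$ and the algorithm wins;
\item if $x^3$ arrives first, the response is $2$; the $h$-branch takes the first of $\{x^1,x^2\}$ to arrive and wins iff $x^1$ precedes $x^2$, while the $(1-h)$-branch takes the second and wins iff $x^2$ precedes $x^1$.
\end{itemize}

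Adding the four cases for each of the six relative orders gives the stated formulas. For example, $(x^3,x^1,x^2)$ collects $p(1-p)+p(1-p)^2+\tfrac12(1-q)p(1-p)^3+h(1-p)^4$. The main point needing care is the $r\ge4$ loop: I must justify that the exit response is determined solely by which of $x^1,x^2,x^3$ arrives first, and that ``first'' and ``second'' $x$ with $w(x)>\tau$ then refer exactly to $x^1$ and $x^2$. Both follow because, after the exit, the only unseen elements above the current record are the remaining top elements.
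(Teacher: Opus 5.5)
Your proposal is correct and follows essentially the same route as the paper: you condition on the initial response ($r_s\in\{1,2,3\}$ with probability $p(1-p)^{r_s}$, or $r_s\ge 4$ with probability $(1-p)^4$) and trace the algorithm for each relative order of $x^1,x^2,x^3$, including the observation that the $r_s\ge 4$ loop exits exactly when the first of the three top elements is rejected. Your uniform treatment of $r_s=2$ (a win iff $x^1$ precedes $x^2$, since $x^3\in S$) is more accurate than the paper's written argument for the third identity, which claims the algorithm selects $x^3$ when $r_s=2$; that cannot happen because $x^3\in S$, and your reasoning is what actually justifies the $p(1-p)^2$ term in $\eta(x^3,x^1,x^2)$.
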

\begin{proof}
Note that
\begin{align*}
    \eta(a,b,c) &= \sum_{i=1}^3 \PP[A\;\wins\mid \psi\in((a,b,c)),r_s=i] \PP[r_s=i] + \PP[A\;\wins\mid \psi\in((a,b,c)),r_s\geq 4] \PP[r_s\geq 4]\\
    &= \sum_{i=1}^3 \PP[A\;\wins\mid \psi\in((a,b,c)),r_s=i] \, p(1-p)^i + \PP[A\;\wins\mid \psi\in((a,b,c)),r_s\geq 4] \, (1-p)^4.
\end{align*}

The first line corresponds to the case where $x^1$ is the first element among the three heaviest. If $r_s=1$, then $A$ always wins. If $r_s=2$, then $A$ selects the first record it sees (namely $x^1$) and wins. If $r_s=3$, then $A$ selects the first record (again $x^1$) with probability $q$, and hence wins with probability $q$. If $r_s\ge 4$, then $A$ loses. In this case it keeps calling the oracle and does not select until it receives a response $r\le 2$, which can happen only after rejecting $x^1$. This proves the first line.

Next, consider the second line (when $x^2$ is first among the three heaviest and $x^1$ is second). If $r_s=1$, then $A$ wins. If $r_s=2$, then $A$ selects the first record it sees (which is $x^2$) and loses. If $r_s=3$, then $A$ wins if and only if it rejects the first record $x^2$. This happens with probability $(1-q)$. After that rejection, $A$ calls the oracle, receives $r=1$, and then wins. If $r_s\ge 4$, then $A$ wins. In this case it keeps calling the oracle and does not select until it receives $r\le 2$, which happens after rejecting $x^2$; at that point the oracle returns $r=1$ and $A$ wins. This proves the second line.

For the third line, $x^3$ is first, $x^1$ is second, and $x^2$ is third. If $r_s=1$, then $A$ wins since $x^1$ is selected. If $r_s=2$, then $A$ sees $x^3$ and selects it, and hence loses. If $r_s=3$, it rejects $x^3$ with probability $1-q$; the oracle then gives $r=2$. Facing $x^1$, it chooses it with probability $1/2$, resulting in $\frac{1}{2}(1-q)$. If $r_s\ge 4$, it queries until $r\le 2$, which happens after $x^3$. Then $r=2$, and $A$ chooses $x^1$ with probability $h$. This gives the third line.

For the fourth line, $x^3$ is first, $x^2$ is second, and $x^1$ is third. The analysis is identical except that after $x^3$, if $r=2$, $A$ wants to choose the second record ($x^1$) rather than the first ($x^2$). It does so with probability $1/2$ when $r_s=3$, and with probability $1-h$ when $r_s\ge 4$. This completes the proof.
\end{proof}

\begin{theorem}
\label{theo:offunl}
    Algorithm~\ref{alg:offunl} is $\frac{20\sqrt{10}-29}{81}$-competitive and optimal in the Sample-Based Offline Adversarial Order model with Unlimited Query Rounds.
\end{theorem}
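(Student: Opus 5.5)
The plan is to prove the lower and upper bounds separately. Both reduce to the relative order of $x^1,x^2,x^3$ in the preliminary order.

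\textbf{Lower bound.} I would first check that the behaviour of Algorithm~\ref{alg:offunl} depends on $\psi$ only through which of the six classes $((a,b,c))$ it lies in. The algorithm's choices depend only on the initial response $r_s$ and on later responses $r\le 2$. These are determined by which of $x^1,x^2,x^3$ have already arrived, and, under i.i.d.\ sampling (justified up to $o(1)$ as in the lemma preceding Theorem~\ref{theo:off1indist}), the sample is independent of $\psi$. Hence $\PP[A\;\wins]=\sum \PP[\psi\in((a,b,c))]\,\eta(a,b,c)\ge \min \eta$.

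Lemma~\ref{lem:offunl-case} then reduces the guarantee to
\[
\max_{p,q,h}\min\{\eta_1,\eta_2,\eta_3,\eta_4\},
\]
where $\eta_1,\dots,\eta_4$ are the four right-hand sides in the lemma. Comparing the third and fourth lines, the third exceeds the fourth by $p(1-p)^2+(2h-1)(1-p)^4$. I would therefore tune $h$ (likely $h$ with $\eta_3=\eta_4$ or at a boundary) and then $q$ to equalize $\eta_1$ with $\eta_2$ and $\eta_4$. The last step is a one-variable maximization in $p$, which should produce the closed form $(20\sqrt{10}-29)/81$. The square root suggests the optimum is at a quadratic stationary point. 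Routine calculus completes this part.

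\textbf{Upper bound.} I would use Yao's principle with a hard distribution $\calF$ over preliminary orders. Under $\calF$, $x^1,x^2,x^3$ occupy the first three positions in one of the relative orders $(x^1,x^2,x^3)$, $(x^2,x^1,x^3)$, $(x^3,x^1,x^2)$, $(x^3,x^2,x^1)$, with weights $\lambda_1,\dots,\lambda_4$. The remaining elements follow in decreasing order of weight. These weights are the dual multipliers, i.e.\ the adversary's optimal mixture from the minimax above. Placing the top three first and the rest in a fixed decreasing order is meant to kill the leakage discussed in the Remark. Every post-sample arrival that could be $x^1$ comes before all sub-threshold arrivals. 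So before the algorithm must decide on any candidate, the only information it has is $r_s$, the count of records seen so far, and query responses.

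Next I would argue that any deterministic ordinal algorithm is dominated by one of the parametrized form of Algorithm~\ref{alg:offunl}. Condition on $r_s$. If $r_s\in\{1,2\}$, the posterior (computed via Bayes as in Theorem~\ref{theo:off1}) says selecting the first record is optimal. If $r_s=3$, the only decisions are: stop at the first record, or reject it, query, and act on the new response. A response of $1$ is decisive. A response of $2$ leaves a choice between two records, for which the posterior is the relevant comparison. If $r_s\ge4$, queries after each rejection reveal exactly when the top three begin to arrive, and the same case analysis applies with mixing parameter $h$. For each $r_s$, the algorithm's success is then a linear function of its stopping probabilities. This bounds $\PP[\wins]$ by $\max_{p,q,h}\sum_i\lambda_i\eta_i+o(1)$, where $\eta_1,\dots,\eta_4$ again denote the right-hand sides of Lemma~\ref{lem:offunl-case} corresponding to the four orders in $\calF$. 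Averaging over the sample-size distribution $D_n$ as in Theorem~\ref{theo:off1} yields the matching value.

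\textbf{Main obstacle.} The hardest step is the domination argument together with choosing $\lambda$. I need to show that with unlimited queries nothing beyond the parameters $q,h$ and the $\tfrac12$ randomization helps, in particular within the $r_s=3$, response-$2$ branch. I also need the $\lambda_i$ to make the algorithm's choices exactly indifferent there. I would first solve the finite LP duality numerically for the optimal $\lambda$, then verify the indifference conditions symbolically.
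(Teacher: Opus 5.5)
Your lower bound follows the paper's argument: reduce to the relative order of $x^1,x^2,x^3$ and optimize the four expressions of Lemma~\ref{lem:offunl-case}. However, your equalization plan targets the wrong constraints. Since $\eta_2-\eta_4=\tfrac12(1-q)p(1-p)^3+h(1-p)^4\ge 0$, the second expression never binds. Insisting on $\eta_2=\eta_4$ would force $q=1$, $h=0$, which caps the value near $0.382$. The binding triple is $\eta_1=\eta_3=\eta_4$. It gives $h=\frac{p^2-3p+1}{2(1-p)^2}$, $q=\frac{1-2p}{3p(1-p)}$, and common value $u-\tfrac13u^2-\tfrac13u^3$ with $u=1-p$. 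This is maximized at $u=(\sqrt{10}-1)/3$, where you must also check $q,h\in[0,1]$ (they are about $0.731$ and $0.231$). Also, drawing $s\sim\mathrm{Bin}(n,p)$ and then a uniform sample of size $s$ is exactly independent sampling, so no $o(1)$ lemma is needed here.

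The upper bound has a genuine gap. You never fix $\lambda$, and the load-bearing step is only asserted: that every deterministic ordinal algorithm is dominated by some member of the $(p,q,h)$ family. That claim is false for general $\lambda$. When $r_s\ge 4$, every member of the family rejects the first post-sample arrival, yet under your distribution stopping there wins with probability $\lambda_1$. For $r_s=2$, selecting the first record is optimal only if $\lambda_1+\lambda_3\ge\lambda_2+\lambda_4$. Moreover, a single $\lambda$ must work for every sample size at once, since the preliminary order cannot depend on $s$; working only at the optimal $p$ is not enough.

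The missing ingredient is the explicit distribution. Put weight $0$ on $(x^2,x^1,x^3)$, whose constraint is slack, and weight $1/3$ on each of $(x^1,x^2,x^3)$, $(x^3,x^1,x^2)$, $(x^3,x^2,x^1)$, followed by a fixed order of the remaining elements. With this distribution no reduction to the family is needed. Because the top three precede everything else, a direct posterior argument bounds any algorithm's conditional success by $1$, $2/3$, and $1/3$ for $r_s=1$, $r_s=2$, and $r_s\ge 3$, respectively. For $r_s\ge 3$, stopping at the first arrival wins with probability $1/3$. Rejecting it, a query reveals whether it was $x^1$, and the two remaining orders are then indistinguishable, giving $\tfrac23\cdot\tfrac12$.

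Averaging gives $u(1-u)+\tfrac23u^2(1-u)+\tfrac13u^3+o(1)$, whose maximum is $(20\sqrt{10}-29)/81$. This matches your indifference intuition: $\tfrac13(\eta_1+\eta_3+\eta_4)$ does not depend on $q$ or $h$ and equals exactly this cubic.
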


\begin{proof}
    \textbf{Lower bound.} Assume $n\ge 4$, since for $n\le 3$ there is a $1/2$-competitive algorithm. Let $A$ be Algorithm~\ref{alg:offunl}. Choose parameters $p,q,h\in[0,1]$ and set $r_s:=\mathrm{QR}_s$ with $s\sim\mathrm{Bin}(n,p)$. Then $\PP[r_s=k]=p(1-p)^k$ for $k\in\{1,2,3\}$, and $\PP[r_s\ge 4]=(1-p)^4$.

    The algorithm's performance depends only on the relative order of the three heaviest elements. When $r_s\leq 3$, only the three heaviest can be selected. When $r_s\geq 4$, the algorithm keeps rejecting elements until the query response becomes $2$ or $1$. This can happen only after rejecting at least one element of $\{x^1,x^2,x^3\}$.

    By Lemma~\ref{lem:offunl-case}, the probability that $A$ wins against each $((a,b,c))$ is given by the following expressions:
    \begin{align*}
        \eta(x^1,x^2,x^3)&=\eta(x^1,x^3,x^2)=p(1-p)+p(1-p)^2+qp(1-p)^3,\\
        \eta(x^2,x^3,x^1)&=\eta(x^2,x^1,x^3)=p(1-p)+(1-q)p(1-p)^3+(1-p)^4,\\
        \eta(x^3,x^1,x^2)&=p(1-p)+p(1-p)^2+\frac{1}{2}(1-q)p(1-p)^3+h(1-p)^4,\\
        \eta(x^3,x^2,x^1)&=p(1-p)+\frac{1}{2}(1-q)p(1-p)^3+(1-h)(1-p)^4.
    \end{align*}

    By the definition of the competitive ratio, 
    \begin{align*}
        \alpha_{\mathrm{SBOffAO}\infty}(A)
        = \min\left(\eta(x^1,x^2,x^3),\eta(x^2,x^3,x^1),\eta(x^3,x^1,x^2),\eta(x^3,x^2,x^1)\right).
    \end{align*}
    Note that
\[
\eta(x^2,x^3,x^1)-\eta(x^3,x^2,x^1)
=\frac12(1-q)p(1-p)^3+h(1-p)^4\ge 0,
\]
for $q,h\in[0,1]$. Hence we may omit $\eta(x^2,x^3,x^1)$ from the minimum.

We then impose $\gamma=\eta(x^1,x^2,x^3)=\eta(x^3,x^1,x^2)=\eta(x^3,x^2,x^1)$. This yields the relations $q=\frac{1-2p}{3p(1-p)}$ and $h=\frac{p^2-3p+1}{2(1-p)^2}$.

Finally, we maximize the common expression $\gamma$ over $p\in[0,1]$ to obtain parameters $p = \frac{4-\sqrt{10}}{3}$, $q=\frac{10+\sqrt{10}}{18}$, and $h = \frac{1+\sqrt{10}}{18}$. Substituting these values yields $\alpha_{\mathrm{SBOffAO}\infty}(A) \geq \frac{20\sqrt{10}-29}{81}$, proving the lower bound.

    \textbf{Optimality.} For the upper bound, we consider a distribution over preliminary adversary orders. Fix an arbitrary order $\rho$ of $E\setminus\{x^1,x^2,x^3\}$, and define $\psi_1:=(x^1,x^2,x^3,\rho)$, $\psi_2:=(x^3,x^1,x^2,\rho)$, and $\psi_3:=(x^3,x^2,x^1,\rho)$. The adversary selects $\psi\in\{\psi_1,\psi_2,\psi_3\}$ uniformly at random. We claim that no algorithm can select $x^1$ with probability greater than $\frac{20\sqrt{10}-29}{81}$ against this distribution in the limit as $n\to\infty$.

    It suffices to bound deterministic algorithms with a fixed sample size $s=s(n)$ (randomizing $s$ cannot help). Let $r_s:=\mathrm{QR}_s$. We claim that, against the above distribution,
    \[
        \PP[A\;\wins\mid r_s\geq 3] \leq 1/3
        \qquad\text{and}\qquad
        \PP[A\;\wins\mid r_s=2]\leq 2/3,
    \]
    and trivially $\PP[A\;\wins\mid r_s=1]\leq 1$.

Conditioned on $r_s\ge 3$, the elements $x^1,x^2,x^3$ lie outside the sample and are heavier than every sampled element. Before the first of these three elements arrives, all observed elements are below the sample maximum and cannot be selected successfully. Thus the first relevant decision is made when the first element of $\{x^1,x^2,x^3\}$ appears. At that time the three preliminary orders remain equally likely. If the algorithm selects this element, it wins with probability $1/3$.

Suppose instead that the algorithm rejects the first relevant element. A later record appears only in the two orders whose first relevant element is not $x^1$. Conditional on this event, the next record is $x^1$ in exactly one of the two remaining orders. Hence the best possible success probability after rejecting the first relevant element is at most $(2/3)(1/2)=1/3$. Rejecting that later record as well cannot help, because in the only remaining winning case the algorithm has already passed $x^1$. Therefore $\PP[A\;\wins\mid r_s\ge 3]\le 1/3$.

Conditioned on $r_s=2$, the element $x^3$ is in the sample and $x^1,x^2$ are outside it. The first relevant post-sample element is a record. With probability $2/3$, the preliminary order is $\psi_1$ or $\psi_2$, and this first relevant element is $x^1$. With probability $1/3$, the preliminary order is $\psi_3$, and this first relevant element is $x^2$. Selecting the first relevant element wins with probability $2/3$. Rejecting it wins with probability at most $1/3$. Thus $\PP[A\;\wins\mid r_s=2]\le 2/3$.

    Therefore, for a fixed sample size $s$,
    \begin{align*}
        \PP[A\;\wins]
        &\leq 1\cdot\PP[r_s=1] + \frac{2}{3}\cdot\PP[r_s=2]+\frac{1}{3}\cdot\PP[r_s\geq 3] \\
        &=\frac{s(n-s)}{n(n-1)} + \frac{2}{3}\frac{s(n-s)(n-s-1)}{n(n-1)(n-2)}+\frac{1}{3}\frac{(n-s)(n-s-1)(n-s-2)}{n(n-1)(n-2)}.
    \end{align*}

    Let $u=(n-s)/n$. Along any convergent subsequence, the preceding upper bound tends to
    \[
        u(1-u)+\frac{2}{3}u^2(1-u)+\frac{1}{3}u^3 .
    \]
    The maximum of this cubic over $u\in[0,1]$ is attained at
    $u=(\sqrt{10}-1)/3$ and equals $(20\sqrt{10}-29)/81$.\qedhere

\end{proof}

\subsection{Random Order}

First, we discuss the Random Order model without queries. This is the classical secretary problem studied by Lindley~\cite{Lindley1961} and Dynkin~\cite{Dynkin1963}, and it has competitive ratio $1/e$, achieved by Dynkin's algorithm (Algorithm~\ref{alg:dynkin}) with $p=1/e$.

For a fixed number of candidates $n \in \NN^*$, the same algorithm with a suitable fixed sample size $s_n$ (instead of sampling $s\sim\Bin(n,p)$) has success probability $q_n$ that is strictly larger:
\begin{equation}
    q_n = \frac{1}{n}\sum_{i=s_n+1}^n \frac{s_n}{i-1} = \max_{s \in \{0,\dots,n-1\}} \frac{1}{n}\sum_{i=s+1}^n \frac{s}{i-1}.
\end{equation}

We use these values $s_n$ to define an algorithm for the Random Order model with One Query Round.

\subsubsection{One Query Round}

For $r\ge 1$, let $q_r$ be the optimal success probability in the classical secretary problem on $r$ candidates, and let $s_r$ be the corresponding optimal number of rejected candidates. Set $q_0:=0$.

For the One Query Round model, we propose Algorithm~\ref{alg:ro1}. The algorithm first takes a sample of size $s\sim\Bin(n,p)$ and then calls the oracle. The oracle returns $r=\mathrm{QR}_s$, so the algorithm learns that exactly $r$ unseen elements are heavier than all elements seen so far. It then runs an optimal strategy for the classical secretary problem with $r$ candidates, which succeeds with probability $q_r$ when $r>0$.

\begin{algorithm}

    \DontPrintSemicolon
	\KwSty{Parameters:} $p\in[0,1]$\;
    $s \gets \Bin(n,p)$\;
    \KwSty{1st sampling phase:} reject the first $s$ elements, and let $E_s$ be this prefix\;
    $\tau \gets \tau(E_s)$\;
    receive query response $r=\mathrm{QR}_s$ from the oracle\;
    \lIf{$r=0$}{terminate without selecting an element}
    \KwSty{2nd sampling phase:} reject the first $s_r$ subsequently arriving elements $x$ with $w(x)>\tau$\;
    let $E_t$ be the observed set after this second sampling phase, and set $\tau\gets\tau(E_t)$\;
    \KwSty{Selection phase:} \Return the first subsequently arriving element $x$ such that $w(x)>\tau$, if such an element appears\;
	\caption{Double Sample Algorithm}
	\label{alg:ro1}
\end{algorithm}
The upper bound for one query is slightly more delicate than the lower bound. It is not enough to analyze algorithms that query after a fixed prefix, because an arbitrary one-query algorithm could choose its query time from the ordinal history. The proof therefore first reduces arbitrary one-query algorithms to fixed-prefix query algorithms.

\begin{theorem}
\label{theo:ro1}
    Setting $p \approx 0.2029$, Algorithm~\ref{alg:ro1} is $0.4477\ldots$-competitive and optimal in the Random Order with One Query Round model.
\end{theorem}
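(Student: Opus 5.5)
The proof has two halves. The first computes the success probability of Algorithm~\ref{alg:ro1} as a function of $p$. The second shows that no one-query RO algorithm does better asymptotically.

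\textbf{Lower bound.} Since $s\sim\Bin(n,p)$ and the arrival order is uniform, each element lies in the prefix $E_s$ independently with probability $p$, exactly as in the proof of Theorem~\ref{theo:on1}. Hence $\PP[r=k]=p(1-p)^k$ for $k\le n-1$ and $\PP[r=n]=(1-p)^n$.

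Condition on $r=k\ge 1$ and on the set $E_s$. The $k$ above-threshold elements $\{x^1,\dots,x^k\}$ arrive after the prefix, and their relative order is uniformly random. Sub-threshold arrivals can never be records and are never selected by the algorithm. So the run restricted to the above-threshold subsequence is exactly the classical optimal strategy on $k$ candidates: reject the first $s_k$ above-threshold arrivals, then take the first record. Its conditional success probability is therefore $q_k$, and
\[
\PP[\ALG\;\wins]\ \ge\ p\sum_{k=1}^{n-1}(1-p)^k q_k .
\]
Since $q_k\ge 1/e$ for all $k$, the tail beyond any $K$ is uniformly small. So the bound converges to $G(p):=p\sum_{k\ge1}(1-p)^kq_k$ up to $o(1)$. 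Numerically maximizing $G$ gives $p\approx 0.2029$ and $G(p)=0.4477\ldots$. Equivalently, as in Theorem~\ref{theo:on1}, one can check that the finite-$n$ expression is at least the series, using $q_n\ge q_k$-type monotonicity for the boundary term.

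\textbf{Upper bound.} I would proceed in three steps.

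\emph{Step 1: normalize the query.} By Yao's principle, and since the input is already random, restrict to deterministic ordinal algorithms. Before the query, the algorithm sees only relative ranks of a uniformly random prefix, so its history is an i.i.d.-like sequence of relative-rank indicators. A deterministic algorithm therefore queries at a stopping time $T$ adapted to the relative-rank filtration, possibly having selected earlier or never querying. Selecting before the query is dominated: either convert it into a query at that moment, or note that the contribution is that of a classical stopping rule.

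\emph{Step 2: reduce to a fixed query time.} The key fact is that, given $T=t$ and the relative ranks so far, the posterior distribution of the future depends only on $t$. Indeed, by independence of relative ranks in RO, the pre-$t$ relative ranks are independent of the set $E_t$ and of the order of the remaining elements. Consequently the value-to-go after querying at time $t$ is a deterministic function $V(t)$. Any stopping time $T$ achieves at most $\mathbb{E}[V(T)]\le\max_t V(t)$, plus whatever pre-query stopping yields. So it suffices to analyze a query at a fixed prefix $s$, randomized over $s$ at no gain.

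\emph{Step 3: value after a fixed-prefix query.} After querying at $s$ and learning $r=k$, only the $k$ above-threshold elements matter, and they arrive in uniformly random relative order. Sub-threshold arrivals carry no information about that order. With no further queries, the problem is exactly the classical problem on $k$ candidates, so the conditional success probability is at most $q_k$. This gives
\[
\PP[A\;\wins]\le \sum_k \PP[r=k\mid s]\,q_k ,
\]
where $\PP[r=k\mid s]=\frac{s}{n}\frac{(n-s)_{\underline{k}}}{(n-1)_{\underline{k}}}$. By the truncation argument of the lemma preceding Theorem~\ref{theo:off1indist}, this is $G(s/n)+o(1)$, hence at most $\max_p G(p)+o(1)$.

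\textbf{Main obstacle.} The delicate part is Step~1 and~2: handling algorithms that might stop on a record \emph{before} querying, mixed with adaptive query times. I would handle pre-query stopping by showing that stopping at time $t$ on a record is dominated by querying at $t-1$. Querying at $t-1$ yields $r$, and the best-response policy given $r$ includes ``take the next above-threshold element,'' which reproduces that stop. So WLOG the algorithm never stops before querying, and Step~2 applies.
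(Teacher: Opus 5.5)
Your route is the paper's: the lower bound through $\PP[r=k]=p(1-p)^k$ and the classical values $q_k$. The upper bound goes through three steps: (a) no selection before the query; (b) a deterministic query time, since in random order the set $E_t$ is independent of the ordinal pattern of the first $t$ arrivals; and (c) a post-query value of at most $q_r$, followed by the limit in $s/n$. Steps (b) and (c) are fine.

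\textbf{The gap is in (a)}, the step you yourself flag. Your justification is that stopping on a record at time $t$ is dominated by querying at $t-1$, because the post-query policy can ``take the next above-threshold element'' and so reproduce that stop. This covers only the branch in which the stop occurs. The modified algorithm has spent its only query at $t-1$. The original one, in every branch where element $t$ is not a record, still holds its query and may use it later, and your argument says nothing about those branches. What has to be shown is that, conditional on the pre-query history $H$ and on $h=\mathrm{QR}_{t-1}$, the \emph{whole} continuation of $A$ after $H$ is worth at most $q_h$. The paper does this with two cases. If $A$ queries before any above-threshold element arrives, its query returns the same $h$, and Lemma~\ref{lem:ro1-post-query} caps the value at $q_h$. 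If an above-threshold element arrives first, it is the next record, $A$ takes it, and it is the maximum with probability $1/h\le q_h$.

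The second case needs $A$, after $H$, to take the first above-threshold arrival whenever it precedes the query. A continuation that rejects some above-threshold elements and queries afterwards is a genuine one-query strategy on the $h$ candidates, and it can beat $q_h$ conditionally on $h$. So $H$ has to be chosen to exclude this; that is the role of the choice of $H$ in Lemma~\ref{lem:ro1-no-selection-before-query}. One clean way is to argue for an optimal algorithm:
\begin{itemize}
\item in random order the pre-query state is just the time $t$;
\item taking a record at time $t+1$ beats rejecting it iff $(t+1)/n$ exceeds the optimal value from state $t+1$;
\item that value is nonincreasing in $t$.
\end{itemize}
So once taking records is optimal it stays optimal until the query, and at the first such time the continuation has exactly the required form.

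\textbf{A smaller point on the asymptotics.} The truncation argument you cite bounds the tail by $\sum_{r>K}\PP[r]/r\le 1/K$. That bound is unavailable when the weights are $q_r\to 1/e$. Instead, use geometric domination of $\PP[\mathrm{QR}_s=r]$ when $s/n\to p\in(0,1)$, and treat $p\in\{0,1\}$ separately, as the paper does. For $s=0$ the value is $q_n\to 1/e$, not $G(0)=0$, so the claim that the value equals $G(s/n)+o(1)$ fails there. The final bound still survives because $1/e<\max_p G(p)$.
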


\begin{proof}
\textbf{Lower bound.} We first analyze Algorithm~\ref{alg:ro1}. If the query returns $r>0$, then the $r$ elements above the threshold appear in a uniformly random relative order. Running the optimal $r$-candidate secretary rule on these elements succeeds with probability $q_r$. Thus the success probability of Algorithm~\ref{alg:ro1} is
\[
    p\sum_{r=1}^{n-1} q_r(1-p)^r+q_n(1-p)^n.
\]
Since $(q_r)_{r\ge 1}$ is nonincreasing and
\[
    (1-p)^n=p\sum_{r\ge n}(1-p)^r,
\]
this quantity is at least
\[
    p\sum_{r\ge 1}q_r(1-p)^r.
\]
A numerical maximization over $p\in[0,1]$ gives the value $0.4477\ldots$, attained at $p\approx 0.2029$.

For reproducibility, we compute the classical $r$-candidate value as
\[
q_r=
\max\left\{
\frac{1}{r},
\max_{a\in\{1,\dots,r-1\}}
\frac{1}{r}\sum_{i=a+1}^r \frac{a}{i-1}
\right\}.
\]
Here the first term corresponds to selecting the first candidate, and the inner maximum corresponds to rejecting the first $a\ge 1$ candidates. To compute the numerical value, we fix a cutoff $R$, evaluate the terms with $r<R$ exactly, and bound every remaining term using $q_r\le q_R$ for $r\ge R$. This is valid because $(q_r)$ is nonincreasing and $q_r\to 1/e$. We use the standard monotonicity of the finite secretary value $q_r$: when the number of candidates increases, the optimal worst-case success probability cannot increase.

\textbf{Optimality.} It remains to prove optimality. The point that requires care is that a one-query algorithm need not query after a fixed prefix, and it could try to select a record before using its query. We rule out both possibilities. This reduction is included because the upper bound must apply to arbitrary one-query ordinal algorithms, not only to the algorithm described above.

\begin{lemma}
\label{lem:ro1-post-query}
Fix a pre-query ordinal history $H$ ending after $s$ arrivals, and condition on the event $\mathrm{QR}_s=r$. After receiving this response, no algorithm can have conditional success probability larger than $q_r$.
\end{lemma}

\begin{proof}[Proof of Lemma~\ref{lem:ro1-post-query}]
Conditional on $H$ and on $\mathrm{QR}_s=r$, the $r$ unseen elements above the current threshold still arrive in a uniformly random relative order.

The only elements that can be the global maximum are the $r$ unseen elements whose weights are above $\tau(E_s)$. All other unseen elements have weight at most $\tau(E_s)$ and can never be a winning selection. Conditional on $\mathrm{QR}_s=r$, we can therefore ignore these lower-weight arrivals and view the post-query problem as the classical secretary problem on the $r$ relevant elements, which arrive in uniformly random relative order.

If a post-query strategy had conditional success probability larger than $q_r$, then it would yield a classical $r$-candidate secretary strategy with success probability larger than $q_r$, contradicting the definition of $q_r$.
\end{proof}

\begin{lemma}
\label{lem:ro1-no-selection-before-query}
For every one-query ordinal algorithm in the random-order model, there is another one-query ordinal algorithm with at least the same success probability that makes no selection before using its query.
\end{lemma}

\begin{proof}[Proof of Lemma~\ref{lem:ro1-no-selection-before-query}]
It is enough to consider deterministic algorithms, since a randomized algorithm is a distribution over deterministic ones. Also, before the query is used, a deterministic ordinal algorithm never benefits from selecting a non-record. Selecting a non-record cannot win, and rejecting it preserves all future options.

Fix a deterministic algorithm $A$. Suppose that, for some pre-query history $H$, the algorithm would select the next record before using its query. Choose such a history $H$ of minimum length. Let $s$ be the number of elements observed in $H$, and let $h=\mathrm{QR}_s$. Construct a modified algorithm $A_H$ that behaves as $A$ outside histories extending $H$, but after $H$ it rejects the current prefix, queries immediately, and runs an optimal classical secretary rule on the $h$ elements above the current threshold.

We compare the two algorithms conditional on $H$ and on $h$. This use of Lemma~\ref{lem:ro1-post-query} is conditional on the ordinal history $H$. It is valid because, under random order, once $H$ and $h=\mathrm{QR}_s$ are fixed, the relative order of the $h$ unseen elements above the current threshold is still uniform.

If no above-threshold element appears before the time at which $A$ would have used its query, then $A$ reaches its query with the same set of $h$ above-threshold elements still unseen. By Lemma~\ref{lem:ro1-post-query}, its conditional success probability from that point is at most $q_h$, which is exactly the success probability guaranteed by $A_H$ after querying at $H$.

If an above-threshold element appears before that query time, then the first such element is the next record. By the choice of $H$, algorithm $A$ selects it before using its query. Conditional on $H$ and $h$, the relative order of the $h$ above-threshold elements is uniform, so this selected element is the maximum among them with probability $1/h$. Since $q_h\ge 1/h$ for every $h\ge 1$, the modified algorithm $A_H$ is again at least as good. If $h=0$, both algorithms have already missed the global maximum.

Thus the modification does not decrease the success probability and removes one earliest pre-query selection node from the strategy tree. Repeating this finite modification yields an algorithm with at least the same success probability that never selects before querying.
\end{proof}

\begin{lemma}
\label{lem:ro1-fixed-prefix}
Among one-query ordinal algorithms that make no selection before querying, there is an optimal algorithm whose query time is deterministic. More precisely, for such an algorithm it is enough to reject a fixed number $s$ of elements, query, and then use an optimal post-query rule.
\end{lemma}

\begin{proof}[Proof of Lemma~\ref{lem:ro1-fixed-prefix}]
Let $A$ be a deterministic ordinal algorithm that makes no selection before querying, and let $T$ be its query time. The event $T=s$ is determined by the ordinal pattern of the first $s$ arrivals. In a uniformly random arrival order, conditional on any fixed ordinal pattern of the first $s$ arrivals, the set $E_s$ of elements seen in the first $s$ positions is still a uniformly random $s$-subset of $E$. Therefore the distribution of $\mathrm{QR}_s$ conditional on $T=s$ is the same as the distribution of $\mathrm{QR}_s$ for a fixed query time $s$.

By Lemma~\ref{lem:ro1-post-query}, after querying at time $s$ the conditional success probability is at most $q_{\mathrm{QR}_s}$. Hence the success probability of $A$ is a convex combination (over the possible values of $s$) of the success probabilities obtained by querying after a fixed prefix of length $s$ and then using the optimal post-query rule. This convex combination is no larger than the best fixed-prefix value. Thus there exists a deterministic query time $s$ that is at least as good as $A$.
\end{proof}

By Lemmas~\ref{lem:ro1-no-selection-before-query} and~\ref{lem:ro1-fixed-prefix}, it remains only to upper bound algorithms that reject a deterministic number $s=s_n$ of elements, query, and then use the optimal post-query rule.

If $s_n=0$, then $\mathrm{QR}_{s_n}=n$ and the success probability is $q_n$, which tends to $1/e$. If $s_n=n$, then the success probability is zero. Hence assume $1\le s_n\le n-1$.

For $1\le r\le n-s_n$,
\[
    \PP[\mathrm{QR}_{s_n}=r]
    =\frac{\binom{n-r-1}{s_n-1}}{\binom{n}{s_n}}
    =\frac{s_n}{n}\cdot \frac{(n-s_n)_{\underline{r}}}{(n-1)_{\underline{r}}}.
\]
Indeed, the event $\mathrm{QR}_{s_n}=r$ means that $x^1,\dots,x^r$ are not in the prefix and that $x^{r+1}$ is in the prefix.

Let $s_n/n\to p$ along a convergent subsequence. If $p\in(0,1)$, then for each fixed $r$,
\[
    \PP[\mathrm{QR}_{s_n}=r]\to p(1-p)^r.
\]
Moreover, the summands are dominated by a summable geometric sequence, so dominated convergence gives the limiting upper bound
\[
    p\sum_{r\ge 1} q_r(1-p)^r.
\]
If $p=1$, then the probability that the maximum lies after the query tends to zero, so the limiting success probability is zero.

If $p=0$, then for every fixed $K$ the probability that $\mathrm{QR}_{s_n}<K$ tends to zero, while the contribution from $\mathrm{QR}_{s_n}\ge K$ is at most $q_K$. Letting $K\to\infty$ gives a limiting upper bound of $1/e$. This is also at most the maximum of $p\sum_{r\ge 1}q_r(1-p)^r$, because Algorithm~\ref{alg:ro1} attains $0.4477\ldots>1/e$.

Therefore every one-query algorithm has asymptotic success probability at most
\[
    \max_{p\in[0,1]} p\sum_{r\ge 1}q_r(1-p)^r.
\]
This is exactly the value achieved by Algorithm~\ref{alg:ro1}. The numerical maximization gives $0.4477\ldots$ at $p\approx 0.2029$, completing the proof.
\end{proof}

\subsubsection{Unlimited Query Rounds}

Next, we consider the Random Order model with unlimited query rounds. Consider Algorithm~\ref{alg:rounl}, which queries the oracle after rejecting each element. It keeps rejecting while at least two unseen elements beat the current record. If the response becomes $1$, it selects the next record. If the response becomes $0$, it has already rejected the maximum and terminates without selecting.

For $n=1$, the algorithm simply selects the unique element. In the pseudocode below, we assume $n\ge2$.

\begin{algorithm}
    \DontPrintSemicolon
    $r\leftarrow n$\;
    \While{$r>1$}{
        reject the next arriving element, say at time $t$, and query to obtain $r\gets \mathrm{QR}_t$\;
    }
    \If{$r=0$}{terminate without selecting an element\;}
    let $E_t$ be the observed set at this time and set $\tau\gets\tau(E_t)$\;
    \Return the first subsequently arriving element $x$ such that $w(x)>\tau$\;
	\caption{First After Second Algorithm}
	\label{alg:rounl}
\end{algorithm}
\begin{theorem}
\label{theo:rounl}
    Algorithm~\ref{alg:rounl} is $\frac{1}{2}$-competitive and optimal in the Random Order with Unlimited Query Rounds model.
\end{theorem}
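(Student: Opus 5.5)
We prove the lower bound directly and obtain the upper bound by a limiting argument.

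\textbf{Lower bound.} Algorithm~\ref{alg:rounl} wins exactly when $x^1$ arrives after $x^2$. Suppose $x^2$ precedes $x^1$. Before $x^2$ arrives, both $x^1$ and $x^2$ are unseen and exceed the current record, so every response is at least $2$ and the algorithm keeps rejecting. Right after $x^2$ is rejected, the record is $x^2$ and the only unseen heavier element is $x^1$. Hence the response is $1$, and the next record to arrive is $x^1$, which the algorithm selects.

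Suppose instead $x^1$ precedes $x^2$. Again the response is at least $2$ up to the arrival of $x^1$, so $x^1$ is rejected and the algorithm loses. The response after rejecting $x^1$ is $0$, and the algorithm terminates.

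In random order, $x^2$ precedes $x^1$ with probability exactly $1/2$. So the algorithm is $1/2$-competitive for every $n\ge 2$, and trivially for $n=1$.

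\textbf{Upper bound.} We must show that no algorithm, even with unlimited queries, beats $1/2+o(1)$. We compare with a more informed decision-maker. Give the algorithm, for free, the response $\mathrm{QR}_t$ after every step $t$. In addition, let it also query before deciding on the current element, i.e., reveal $\mathrm{QR}_{t-1}$ together with the rank information at time $t$. Any algorithm in the model is simulable by this stronger one, so it suffices to bound the stronger one.

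For this decision-maker, $x^1$ is a record at its arrival and must be selected then. Consider the time just before $x^1$ arrives. The remaining relevant state is the number $k=\mathrm{QR}_{t-1}$ of unseen elements above the record, and these $k$ elements are in uniformly random relative order. This holds conditionally on the whole ordinal history, by the same conditioning argument as Lemma~\ref{lem:ro1-post-query}.

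The only way to distinguish the arrival of $x^1$ from the arrival of another above-record element is the rank information at the moment of arrival. Both appear as a new record, so they are indistinguishable until after the decision. Querying after a rejection reveals the outcome, but only too late. Thus the game among the $k$ relevant elements is a classical secretary game in which, after each rejected record, the decision-maker learns how many above-record elements remain.

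Let $V(k)$ be the optimal success probability from a state with $k$ above-record unseen elements. When the next above-record element arrives, it is $x^1$ with probability $1/k$. If we reject it, the new count is uniform on $\{0,\dots,k-1\}$, since the arriving element's rank among the $k$ is uniform. This gives the recursion
\[
V(k)=\max\Bigl(\frac1k,\ \frac1k\sum_{j=1}^{k-1}V(j)\Bigr), \qquad V(1)=1 .
\]
An induction should show $V(k)=1/2$ for all $k\ge 2$. Indeed, $V(2)=\max(1/2,1/2)=1/2$. For $k\ge 3$,
\[
\frac1k\Bigl(1+\frac{k-2}{2}\Bigr)=\frac12 .
\]
Hence $V(k)=1/2$ for $k\ge 2$. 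Initially $k=n\ge 2$, so every algorithm wins with probability at most $1/2$. This matches the lower bound exactly, for every $n\ge 2$.

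\textbf{Main obstacle.} The hardest step is justifying that the reduction to the chain on $k$ is without loss of generality. Two things must be checked. First, the timing of queries relative to non-record arrivals gives no extra leverage. Second, conditional on the full ordinal history and all past responses, the next above-record arrival is $x^1$ with probability exactly $1/k$, and its rank among the $k$ is uniform. Both follow from exchangeability of the random order given the history, which I would argue carefully using the same fact as in Lemma~\ref{lem:ro1-fixed-prefix}: conditioned on the ordinal pattern, the seen set is uniformly random.
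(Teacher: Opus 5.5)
Your lower bound is the same as the paper's. For optimality you take a genuinely different route, and it is correct.

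The paper only observes that for $n=2$ no algorithm beats $1/2$: no response before the first decision is informative, and after a rejection only one element remains. That suffices to rule out any guarantee above $1/2$ that must hold for every $n$.

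You instead give the decision-maker every response $\mathrm{QR}_t$ for free and note that the $k$ unseen above-record elements are exactly $x^1,\dots,x^k$. You then solve the stopping recursion $V(k)=\max\{1/k,\frac1k\sum_{j<k}V(j)\}$, which yields $V(k)=1/2$ for all $k\ge2$. This buys more than the paper's argument: $1/2$ is the exact optimum for every $n\ge 2$, so optimality is not an artifact of a tiny instance. It also meets the ``arbitrarily large $n$'' criterion the paper states for model-level upper bounds. Your framing as a ``limiting argument'' with ``$1/2+o(1)$'' undersells this, since no limit is needed.

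The price is the exchangeability step, and there the fact you propose to borrow from Lemma~\ref{lem:ro1-fixed-prefix} is the wrong one. That lemma says the seen set is uniform given the ordinal pattern, which fails once query responses are in the conditioning: $\mathrm{QR}_t=k$ forces the record to be $x^{k+1}$. The correct justification is to condition on the realized prefix of element identities, of which all observations (ordinal pattern and responses) are functions. Given that prefix, the suffix is uniformly random. Hence the relative order of $x^1,\dots,x^k$ is independent of the slots they occupy and of all non-record observations. This gives both the $1/k$ success probability at the next record and the uniform new count on $\{0,\dots,k-1\}$ after a rejection.
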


\begin{proof}
    The algorithm selects $x^1$ if and only if $x^2$ arrives earlier, which happens with probability $1/2$ under random order. For optimality, it suffices to consider the case $n=2$. The oracle provides no information before rejecting the first element, and after rejecting it the second element is the only remaining option.
\end{proof}

\section{Matroid Secretary Problems}

We now turn to feasibility constraints. The Matroid Secretary Problem generalizes the single-choice problem by imposing feasibility constraints: the candidates form the ground set of a matroid, and the decision-maker may accept multiple candidates as long as the accepted set remains independent throughout. Below, we recap basic definitions and properties of matroids.

\begin{definition}[Matroid]
    A matroid is a tuple $M = (E,\mathcal{I})$, where $E$ is a finite set and $\mathcal{I}\subseteq 2^E$ is a collection that satisfies the following axioms:
    \begin{enumerate}
        \item If $I\subseteq J$ and $J \in \mathcal{I}$, then $I \in \mathcal{I}$.
        \item If $I,J\in\mathcal{I}$ and $|I|>|J|$, then there exists $x\in I\setminus J$ such that $J+x\in \mathcal{I}$.
    \end{enumerate}
\end{definition}

The set $E$ is called the \emph{ground set}, and the elements of $\mathcal{I}$ are called \emph{independent sets}. An inclusion-wise maximal independent set is called a \emph{base} of $M$. A basic property of matroids is that all bases have the same cardinality.

\begin{definition}[Rank]
    The rank of a matroid $M = (E,\mathcal{I})$, denoted $\mathrm{rank}(M)$, is the cardinality of a base of $M$. For $I \subseteq E$, we define $\mathrm{rank}_M(I)$ as the cardinality of a maximal independent subset of $I$ (we also write $\mathrm{rank}(I)$ when the matroid is clear from context).
\end{definition}

\begin{definition}[Span]
    Let $M = (E,\mathcal{I})$ be a matroid. For $I \subseteq E$, the span of $I$ is defined as
    \begin{align*}
        \spa(I) = \{x\in E : \mathrm{rank}(I+x)=\mathrm{rank}(I)\}.
    \end{align*}
\end{definition}

\begin{algorithm}

	\KwIn{Matroid $M = (E,\mathcal{I})$, $\sigma:[|E|]\rightarrow E$ order of $E$}
    \KwOut{Base $\ALG$ of $M$}
    $\ALG \gets \emptyset$\;
    \For{$i\in[|E|]$}{
        $x \gets \sigma(i)$\;
        \lIf{$\ALG + x \in \mathcal{I}$}{$\ALG \gets \ALG + x$}
    }
    \Return $\ALG$
	\caption{Greedy Algorithm}
	\label{alg:greedy}
\end{algorithm}
The greedy algorithm (Algorithm~\ref{alg:greedy}) is a useful tool for studying optimization problems with matroid constraints. We denote its output by $\GREEDY(M,\sigma)$. We will use the following two basic properties.

\begin{proposition}
\label{prop:greedy_any}
    Let $M = (E,\mathcal{I})$ be a matroid, let $\sigma$ be an order of $E$, and let $x \in E$. Then $x \in \GREEDY(M,\sigma)$ if and only if $x \notin \spa(\{z\in E : \sigma^{-1}(z)<\sigma^{-1}(x)\})$ (i.e., $x$ is not spanned by the elements that appear before $x$).
\end{proposition}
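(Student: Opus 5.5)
We prove both directions by tracking the invariant maintained by Algorithm~\ref{alg:greedy}. Let $P_x := \{z\in E : \sigma^{-1}(z)<\sigma^{-1}(x)\}$ be the set of elements preceding $x$, and let $\ALG_x$ be the value of $\ALG$ just before $x$ is processed. The key claim is that $\ALG_x$ is a maximal independent subset of $P_x$. Equivalently, $\ALG_x\subseteq P_x$, $\ALG_x\in\mathcal{I}$, and $\mathrm{rank}(\ALG_x)=\mathrm{rank}(P_x)$, so that $\spa(\ALG_x)=\spa(P_x)$. Once this holds, $x$ is added if and only if $\ALG_x+x\in\mathcal{I}$, which happens if and only if $\mathrm{rank}(\ALG_x+x)>\mathrm{rank}(\ALG_x)$. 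This is the same as $x\notin\spa(\ALG_x)=\spa(P_x)$, which gives the statement.

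We prove the claim by induction along $\sigma$. The base case is trivial: $P_{\sigma(1)}=\emptyset=\ALG_{\sigma(1)}$. For the inductive step, suppose the claim holds before processing $x=\sigma(i)$, and let $y=\sigma(i+1)$, so that $P_y=P_x+x$. Independence of $\ALG$ is preserved, since we only add $x$ when $\ALG_x+x\in\mathcal{I}$. There are two cases.
\begin{itemize}
\item If $x$ is added, then $\mathrm{rank}(\ALG_y)=\mathrm{rank}(\ALG_x)+1=\mathrm{rank}(P_x)+1\ge\mathrm{rank}(P_x+x)$. Since $\ALG_y\subseteq P_y$, equality holds.
\item If $x$ is not added, then $\ALG_x+x$ is dependent. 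Every maximal independent subset $J$ of $P_x$ has the same size by the exchange axiom, and $|J|=\mathrm{rank}(P_x)=|\ALG_x|$. So it remains to show $\mathrm{rank}(P_x+x)=\mathrm{rank}(P_x)$. Suppose instead that some independent $I\subseteq P_x+x$ satisfies $|I|>|\ALG_x|$. By the exchange axiom there is $z\in I\setminus \ALG_x$ with $\ALG_x+z\in\mathcal{I}$. Now $z\ne x$ because $\ALG_x+x\notin\mathcal{I}$, so $z\in P_x$. Then $\ALG_x+z$ is an independent subset of $P_x$ larger than $\mathrm{rank}(P_x)$, a contradiction.
\end{itemize}

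The last step uses the fact that $\mathrm{rank}(A)=\mathrm{rank}(B)$ for $A\subseteq B$ implies $\spa(A)=\spa(B)$ on the relevant element $x$. Concretely, we need $\mathrm{rank}(\ALG_x+x)=\mathrm{rank}(\ALG_x)$ if and only if $\mathrm{rank}(P_x+x)=\mathrm{rank}(P_x)$. The forward direction of this equivalence uses the same exchange argument as in the second case above. The reverse direction is immediate from monotonicity: $\mathrm{rank}(\ALG_x)\le\mathrm{rank}(\ALG_x+x)\le\mathrm{rank}(P_x+x)=\mathrm{rank}(P_x)=\mathrm{rank}(\ALG_x)$.

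The main obstacle, though a mild one, is deriving these rank facts directly from the two axioms given in the paper. The paper only states independence axioms, so monotonicity and the equivalence of span membership must be argued via the exchange property, as done above, rather than quoted.
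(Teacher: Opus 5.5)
Your proof is correct and follows the same route as the paper: show that the greedy set just before $x$ is a maximal independent subset of the set of predecessors, hence has the same span, so $x$ is accepted exactly when it is not spanned by the earlier elements. The only difference is in how this invariant is established. The paper argues in one step by contradiction: an independent $G_i+e$ would make $G_{t-1}+e$ independent by the hereditary axiom, so greedy would have taken $e$. You instead build the invariant by induction using the exchange axiom, and you also derive the needed rank and span facts from the axioms, which the paper takes as standard.
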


\begin{proof}
Let $G=\GREEDY(M,\sigma)$. For each prefix, let
\[
    E_i:=\{z\in E:\sigma^{-1}(z)\le i\},
    \qquad
    G_i:=G\cap E_i.
\]
We first show that $G_i$ is a maximal independent subset of $E_i$. Suppose not. Then there exists $e\in E_i\setminus G_i$ such that $G_i+e$ is independent. Let $t=\sigma^{-1}(e)\le i$. When $e$ was processed, the greedy set was $G_{t-1}\subseteq G_i$. Since $G_i+e$ is independent, $G_{t-1}+e$ is also independent. Thus greedy would have added $e$, a contradiction.

Therefore $G_i$ is a base of the restriction $M|E_i$. Hence $G_i$ has rank $\mathrm{rank}(E_i)$ and spans $E_i$, so
\[
    \spa(G_i)=\spa(E_i)
\]
for every $i$.

Now let $j=\sigma^{-1}(x)$. When $x$ is processed, the greedy algorithm has selected $G_{j-1}$. It accepts $x$ if and only if $G_{j-1}+x$ is independent, which is equivalent to $x\notin\spa(G_{j-1})$. Since $\spa(G_{j-1})=\spa(E_{j-1})$, this is equivalent to
\[
    x\notin \spa(\{z\in E:\sigma^{-1}(z)<\sigma^{-1}(x)\}).
\]
\end{proof}

\begin{proposition}
\label{prop:greedy_weight}
    Let $M = (E,\mathcal{I})$ be a matroid, let $w:E\rightarrow \RR_{>0}$ be an injective weight function, and let $\sigma_w$ be the order of $E$ in decreasing order of weight. Then the unique maximum-weight base of $M$ is $\OPT = \GREEDY(M,\sigma_w)$. Furthermore, for $x\in E$, we have $x\in\OPT$ if and only if $x\notin\spa(\{z\in E : w(z)>w(x)\})$ (by Proposition~\ref{prop:greedy_any}).
\end{proposition}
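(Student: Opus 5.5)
The plan is to prove the two claims of Proposition~\ref{prop:greedy_weight} separately. The characterization of membership in $\OPT$ will follow directly from Proposition~\ref{prop:greedy_any}. The real work is to show that $\GREEDY(M,\sigma_w)$ is the unique maximum-weight base.

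\textbf{Step 1: greedy returns a base.} Proposition~\ref{prop:greedy_any} was proved by showing that each prefix output $G_i$ is a maximal independent subset of $E_i$. Taking $i=|E|$, the output $G:=\GREEDY(M,\sigma_w)$ is a base of $M$.

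\textbf{Step 2: $G$ beats every other base.} Let $B\neq G$ be any base, and list
\[
G=\{g_1,\dots,g_k\}, \qquad B=\{b_1,\dots,b_k\}
\]
in decreasing order of weight. Both have size $k$, since all bases share the same cardinality. I would show $w(g_j)\ge w(b_j)$ for every $j$.

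Suppose instead that $w(g_j)<w(b_j)$ for some $j$. Consider $I=\{b_1,\dots,b_j\}$ and $J=\{g_1,\dots,g_{j-1}\}$. Since $|I|>|J|$, the exchange axiom gives some $b_i\in I\setminus J$ with $J+b_i$ independent. This $b_i$ satisfies
\[
w(b_i)\ge w(b_j)>w(g_j).
\]
Under $\sigma_w$, the element $b_i$ is therefore processed after $g_1,\dots,g_{j-1}$ and before $g_j$. At that moment the greedy set is exactly $J$, because no other element was accepted between $g_{j-1}$ and $g_j$. Since $J+b_i$ is independent, greedy would have accepted $b_i$, so $b_i$ would be one of $g_1,\dots,g_{j-1}$. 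That contradicts $b_i\notin J$.

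This termwise domination gives $w(G)\ge w(B)$. Because $B\neq G$, the two multisets of weights differ, and weights are injective, so some inequality is strict and $w(G)>w(B)$. Maximum-weight independent sets are bases, since weights are positive and any independent set extends to a base. Hence $G$ is the unique maximum-weight base, and so $\OPT=G$.

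\textbf{Step 3: the characterization.} Apply Proposition~\ref{prop:greedy_any} with $\sigma=\sigma_w$. The elements preceding $x$ in $\sigma_w$ are exactly $\{z\in E: w(z)>w(x)\}$. So $x\in\OPT$ if and only if $x\notin\spa(\{z: w(z)>w(x)\})$.

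The main obstacle is the exchange argument in Step 2. The delicate point is checking that greedy's state when it processes $b_i$ is exactly $J$; this uses that $\sigma_w$ is strictly decreasing and that the $g$'s are listed in processing order. The strictness needed for uniqueness is the second point to handle carefully. I would derive it from injectivity of $w$ as described, rather than attempting a separate swap argument.
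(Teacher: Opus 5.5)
Your plan is sound, and Steps 1 and 3 match the paper. However, the step you single out as delicate rests on an unjustified claim. From $w(b_i)\ge w(b_j)>w(g_j)$ you only learn that $b_i$ precedes $g_j$ in $\sigma_w$. Nothing you have established places $b_i$ after $g_{j-1}$. The exchange axiom gives you no control over which element of $\{b_1,\dots,b_j\}\setminus J$ you receive. It could be $b_1$, and nothing rules out $b_1$ being heavier than $g_{j-1}$, or even than $g_1$. In that case the greedy set at $b_i$'s turn is not $J$ but the smaller set $G':=\{g_l : w(g_l)>w(b_i)\}$. So ``at that moment the greedy set is exactly $J$'' does not follow. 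Strict monotonicity of $\sigma_w$ and listing the $g$'s in processing order do not supply it either.

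The repair uses the hereditary axiom instead of an exact description of greedy's state. First, $G'\subseteq J$, because every $g_l$ with $l\ge j$ satisfies $w(g_l)\le w(g_j)<w(b_i)$. Hence $G'+b_i\subseteq J+b_i\in\mathcal{I}$, so greedy accepts $b_i$. Then $b_i\in G$ with $w(b_i)>w(g_j)$ forces $b_i\in J$, a contradiction. With this one-line change your proof is correct.

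It then takes a different route from the paper. The paper does not prove optimality of greedy; it cites ``the usual greedy proof.'' It proves uniqueness separately: for two distinct maximum-weight bases $B,B'$, it takes the heaviest $e\in B\triangle B'$, say $e\in B$. Basis exchange then gives $f\in B'\setminus B$ with $B'-f+e$ a base of strictly larger weight. Your Gale-type termwise domination $w(g_j)\ge w(b_j)$ is self-contained and yields optimality and uniqueness at once, with strictness coming from injectivity. The paper's argument is shorter, but it relies on a basis-exchange property not among its stated axioms and on an external optimality result.
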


\begin{proof}
    The usual greedy proof for matroids shows that the greedy algorithm in decreasing weight order returns a maximum-weight base. We recall the uniqueness argument. Suppose that $B$ and $B'$ are two distinct maximum-weight bases. Let $e$ be the maximum-weight element in $B\triangle B'$, and assume $e\in B$. By basis exchange, there is $f\in B'\setminus B$ such that $B'-f+e$ is a base. Since the weights are injective and $e$ is the maximum-weight element of the symmetric difference, we have $w(e)>w(f)$. This contradicts the optimality of $B'$. The span characterization follows from Proposition~\ref{prop:greedy_any} applied to the greedy order.
\end{proof}

\subsection{Ordinal queries in matroid secretary problems}

Let $M = (E,\mathcal{I})$ be a matroid. In the Matroid Secretary Problem, the elements of $E$ arrive and reveal their weights according to the same online rules as in the single-choice problem. The algorithm may select multiple elements, as long as the selected set remains independent in $M$.

We assume that the matroid $M$ is known to the algorithm in advance and that the algorithm has access to an independence oracle, which allows it to test whether a subset $I\subseteq E$ is independent. We consider the same arrival-order models as before, but the oracle returns a different ordinal response. In particular, we define two types of query responses: simple query responses and complete query responses. Informally, a simple query response identifies which unseen elements would improve the current optimum, while a complete query response further refines this information by partitioning unseen elements into ordinal weight buckets relative to $\OPT(E_i)$.

In the matroid models, a query round may be made after the algorithm has made its irrevocable accept/reject decision for the current element and before the next element arrives. Thus, unlike in the single-choice setting, the process may continue after an accepted element. The observed set $E_i$ contains all elements that have arrived by time $i$, regardless of whether they were accepted or rejected.

For $F\subseteq E$, let $M|F$ be the restriction of $M$ to $F$. Since all weights are positive, every maximum-weight independent set in $M|F$ is a base of $M|F$. We write $\OPT(F)$ for the unique maximum-weight base of $M|F$.

Loops can be ignored, since they are never contained in an independent set. We therefore assume in the algorithms below that the matroid has no loops. If the matroid has rank zero, then $\OPT(E)=\emptyset$. The algorithm returns the empty set, and the probability-competitive guarantee is vacuous. We assume positive rank in the rest of the argument.

\begin{definition}[Simple query response]
\label{def:simple}
    For a given arrival order $\pi$, weight function $w$, and $i\in\{0,1,\dots,n\}$, the simple query response at step $i$, denoted $\mathrm{SQR}_i$, is defined as
    \begin{align*}
        \mathrm{SQR}_i := \{x\in E\setminus E_i : x \in \OPT(E_i + x)\},
    \end{align*}
    that is, the set of elements that arrive after step $i$ and improve the current optimal solution at step $i$.
\end{definition}

\begin{definition}[Complete query response]
    For a given arrival order $\pi$, weight function $w$, and $i\in\{0,1,\dots,n\}$, let $\OPT(E_i) = \{z^1,\dots,z^m\}$ be the current optimal solution at step $i$, where $w(z^1)>w(z^2)>\cdots>w(z^m)$. The case $m=0$ is allowed. The complete query response at step $i$, denoted $\mathrm{CQR}_i$, is defined as the ordered tuple $(A_0,A_1,\dots,A_m)$, where
    \begin{align*}
        A_j := \{x\in E\setminus E_i : w(z^{j+1})<w(x)<w(z^j)\}
    \end{align*}
    for each $j \in \{0,\dots,m\}$, where $w(z^0):=\infty$ and $w(z^{m+1}):=0$.
    Because all weights are positive and injective, the sets $A_0,\ldots,A_m$ form a partition of $E\setminus E_i$.
    In particular, if $i=0$, then $\OPT(E_0)=\emptyset$ and
    $\mathrm{CQR}_0=(E)$.
\end{definition}

Both matroid responses reveal identities of unseen elements. This is intentional. The restriction is not on the size of the response, but on the kind of information certified by the oracle: membership in $\mathrm{SQR}_i$ and in the buckets of $\mathrm{CQR}_i$ is determined only by ordinal comparisons with the current optimum on $E_i$ and by the known matroid structure. Thus the oracle may identify which unseen elements satisfy an ordinal condition, but it never reveals their numerical weights.

The algorithms we consider for the Matroid Secretary Problem with One Simple Query and with One Complete Query (Algorithms~\ref{alg:msps} and~\ref{alg:mspc1}, respectively) are divided into two phases: a sampling phase and a preselection phase. In the sampling phase, a random sample arrives; its elements' weights are observed, but no element is selected. After the sampling phase, the algorithm makes one oracle call and receives the simple query response or complete query response, depending on the model.

Next, in the preselection phase, the algorithm chooses its output using only the information observed so far and the query response received, without learning any additional information about future weights. Equivalently, one can think of the algorithm as pausing the process after receiving the query response, deciding which elements to select before observing their weights, and then resuming the process and selecting exactly those elements. This is only a description of the computation performed after the query. Online, the algorithm stores the identities it has preselected and accepts exactly those elements when they arrive.

We say that an algorithm $A$ for a model $\mathcal{M}$ of the matroid secretary problem is $\alpha$-probability-competitive if, for every element $e$ in the maximum-weight base $\OPT$, the probability that $A$ selects $e$ is at least $\alpha$. This notion is stronger than the usual utility-competitive ratio: if every element of $\OPT$ is selected with probability at least $\alpha$, then the expected weight of the output is at least $\alpha$ times the weight of $\OPT$ (see, e.g.,~\cite{SotoTV2021}).

All matroid algorithms below are polynomial-time given an independence oracle for $M$, apart from the cost of receiving the query response itself. The set $\OPT(F)$ is computed by the standard greedy algorithm after sorting the observed elements by weight. Span tests can be implemented through rank computations, and each rank computation can be performed by greedy using polynomially many independence-oracle calls. The query responses may have size $\Theta(n)$; throughout the paper we count query rounds rather than the bit-length of the response.

\subsection{Sample-Based Online Adversarial Order with One Simple Query}

\begin{algorithm}

    \KwIn{Matroid $M = (E,\mathcal{I})$}
    \KwOut{Independent set $\ALG$ of $M$}
    $\ALG \gets \emptyset$, $s \gets \Bin(n,\frac{1}{2})$\;
    \KwSty{Sampling phase:} reject a uniform random sample $S\subseteq E$ of size $s$\;
    $Z = \{z^1,...,z^m\} \gets \OPT(S)$, where $w(z^1)>w(z^2)>...>w(z^m)$\;
    \For{$i\in[m]$}{
        \For{$x\in \left(\spa(\{z^1,...,z^i\})\setminus\spa(\{z^1,...,z^{i-1}\})\right)\setminus S$ in any order}{
            \lIf{$\ALG + x \in \mathcal{I}$ and $w(x)>w(z^i)$}{$\ALG \gets \ALG + x$}
        }
    }
    \For{$x\in E\setminus\spa(Z)$ in any order}{
        \lIf{$\ALG + x \in \mathcal{I}$}{$\ALG \gets \ALG + x$}
    }
    \Return $\ALG$
	\caption{(Jaillet, Soto, and Zenklusen) Algorithm for MSP Free Order Model}
	\label{alg:free}
\end{algorithm}
In \cite{JailletSZ13} the authors study the \emph{Free Order} model for the Matroid Secretary Problem and propose Algorithm~\ref{alg:free}. Informally, Free Order gives the algorithm post-sample control over the \emph{revelation order} of the unseen elements: it may adaptively choose which unseen element is shown next.

In the full version of their paper, Jaillet, Soto, and Zenklusen~\cite{JailletSZ13} show the following.

\begin{theorem}[\cite{JailletSZ13}]
\label{theo:free}
    Algorithm~\ref{alg:free} is $\frac{1}{4}$-probability-competitive for the Matroid Secretary Problem in the Free Order model.
\end{theorem}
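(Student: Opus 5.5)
The proof will go through a rank-one decomposition of the matroid induced by the sample, followed by a coin-flipping argument for a fixed element of $\OPT$. Sampling with $s\sim\Bin(n,\tfrac12)$ is the same as flipping an independent fair coin for each element, so I will use that description throughout. Fix $e\in\OPT$; the goal is $\PP[e\in\ALG]\ge 1/4$.

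\textbf{Step 1 (structure of the accepted set).} For $Z=\OPT(S)=\{z^1,\dots,z^m\}$, let $L_i:=\spa(\{z^1,\dots,z^i\})\setminus\spa(\{z^1,\dots,z^{i-1}\})$ be the $i$-th layer, and let $L_{m+1}:=E\setminus\spa(Z)$.

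First I will show that an unsampled $x\in L_i$ satisfies $w(x)>w(z^i)$ if and only if $x\in\OPT(S+x)$. By Proposition~\ref{prop:greedy_weight}, $x\in\OPT(S+x)$ iff $x$ is not spanned by the sampled elements heavier than $x$. The span of those elements equals the span of the $z^j$ heavier than $x$, because $\OPT(S)$ is built greedily.

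Next I will use the minors $M_i:=(M|\spa(z^1..z^i))/\spa(z^1..z^{i-1})$. Each $M_i$ with $i\le m$ has rank one. A union of sets that are independent in the respective $M_i$ (together with an independent set in $M/\spa(Z)$) is independent in $M$, by the standard minor/union argument. Two consequences follow:
\begin{itemize}
\item the first for-loop accepts exactly one element of $L_i$ for each $i\le m$ that has at least one candidate;
\item the final loop behaves like greedy in $M/\spa(Z)$ and is never blocked by the elements chosen earlier.
\end{itemize}
In particular, if $e\notin S$ and $e$ is the unique candidate of its layer (or $e\notin\spa(Z)$), then $e$ is accepted regardless of the arbitrary processing order.

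\textbf{Step 2 (probability).} I will reveal the coins in decreasing weight order; the elements heavier than $e$ have arbitrary revealed coins. Let $T$ be the sampled elements heavier than $e$. Since $e\in\OPT$, $e\notin\spa(T)$, so $e$ is a candidate whenever $e\notin S$.

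Continuing downward, let $f$ be the first element lighter than $e$ whose coin, if heads, would place $e$ in the span of the sampled prefix. Then $f$ is exactly the element that becomes the $z^i$ of $e$'s layer. The target event is $e\notin S$ and $f\in S$, which has probability $1/4$ conditional on the earlier coins. If no such $f$ exists, $e$ lands in $L_{m+1}$ and is accepted with probability $1/2$.

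It then remains to show that, on this event, no other unsampled element $x$ in $e$'s layer exceeds $w(z^i)$. Such an $x$ must be heavier than $f$. I would argue via circuit exchange that any such $x$ heavier than $e$ would force $e\in\spa(T+x)$, contradicting the existence of a basis exchange inside the rank-one minor $M_i$.

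\textbf{Main obstacle.} The delicate point is this last claim: showing that $e$ is the unique candidate in its layer once $f\in S$ and $e\notin S$. Unsampled elements heavier than $e$ could share $e$'s layer, and since the in-layer order is arbitrary the algorithm might accept one of them instead of $e$.

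I would first try to prove that any unsampled $x$ heavier than $e$ with $x\notin\spa(T)$ is separated from $e$'s layer. The layer of $x$ is determined by the first sampled element spanning $x$, and the rank-one structure of $M_i$ should force two candidates in one layer to be parallel in $M_i$, contradicting $e\in\OPT$.

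If that fails, I will weaken the target. I would define $f$ relative to the set of all heavier elements (both sampled and unsampled), accept a possible dependence on the processing order, and invoke the free-order freedom to process $e$'s layer in a favorable order, as in~\cite{JailletSZ13}.
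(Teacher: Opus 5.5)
\textbf{There is a genuine gap, at exactly the step you flagged as the main obstacle.} Your Step~1 also overstates what the algorithm does. The algorithm tests independence in $M$, not in the rank-one minors $M_i$. So a layer can accept several elements when earlier layers left rank unused. Likewise, the final loop is greedy in an arbitrary order on $E\setminus\spa(Z)$, so an unsampled $e\notin\spa(Z)$ can still be rejected.

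Concretely, take $U_{2,5}$ on $a,e,b,c,d$ with $w(a)>w(e)>w(b)>w(c)>w(d)$, so $\OPT=\{a,e\}$, and let $S=\{c,d\}$.
\begin{enumerate}
\item Your element $f$ is $d\in S$, and $e\notin S$, so your target event holds.
\item But $Z=\{c,d\}$, layer~1 is empty, and layer~2 has three candidates $a,e,b$. The order $b,a,e$ accepts $b$ and $a$ and then rejects $e$.
\item Here $a$ is heavier than $e$, unsampled, and in $e$'s layer, yet $e\notin\spa(T+a)=\spa(\{a\})$. So the circuit-exchange claim is false. The elements $a$ and $e$ are indeed parallel in $M_2$, but both lie in $\OPT$, so there is no contradiction.
\item With $S=\{d\}$ instead, $e\notin\spa(Z)$, and the final-loop order $b,c,e$ rejects $e$.
\end{enumerate}
Your fallback does not rescue this either. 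The statement is for every processing order, and $a,e\in\OPT$ share a layer, so no single order favors both. More fundamentally, no event of the form ``$e$ is the only candidate in its layer'' can reach $1/4$. In $U_{2,n}$ that event has probability exactly $1/8$ for $e=x^2$. Yet $e$ is selected with probability $1/4$ under every order, thanks to the rank left unused by the always-empty first layer.

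\textbf{The missing idea is a global, complementary-sample argument rather than a local coin event.} Fix $e\in\OPT$.
\begin{enumerate}
\item For $X\subseteq E-e$, write $X_{\ge t}:=\{x\in X: w(x)\ge t\}$. Let $\lambda(X)$ be the largest weight $t$ of an element of $X$ with $e\in\spa(X_{\ge t})$, and set $\lambda(X)=0$ if $e\notin\spa(X)$.
\item For the sample $S$, $\lambda(S)$ is exactly $w(z^\ell)$ for $e$'s layer $\ell$. Also $\lambda(S)<w(e)$ because $e\in\OPT$, so an unsampled $e$ is always tested.
\item Every element accepted before $e$ lies in $T:=E\setminus(S+e)$ and is heavier than $\lambda(S)$. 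Layers up to $e$'s have thresholds $w(z^k)\ge\lambda(S)$, and final-loop elements precede $e$ only when $\lambda(S)=0$.
\item Hence if $e$ is rejected with sample $S$, then $e\in\spa(T\cap E_{>\lambda(S)})$, which forces $\lambda(T)>\lambda(S)$.
\item Symmetrically, rejection with sample $T$ forces $\lambda(S)>\lambda(T)$. So for every partition $(S,T)$ of $E-e$, $e$ is selected under at least one of the two samples, whatever processing orders are used.
\item With fair coins, $S$ and $T$ are equally likely given $e\notin S$. Therefore $\PP[e\in\ALG\mid e\notin S]\ge 1/2$, and so $\PP[e\in\ALG]\ge 1/4$.
\end{enumerate}
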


Algorithm~\ref{alg:msps} achieves the same guarantee in SBOnAO with one simple query by replacing control over the revelation order with one round of ordinal queries. After sampling, it computes $Z=\OPT(S)$ and queries once for $\mathrm{SQR}_s$ (Definition~\ref{def:simple}).

The simple-query algorithm transfers the free-order algorithm of Jaillet, Soto, and Zenklusen to the sample-based adversarial model. In the free-order model, the algorithm uses its ability to choose the next revealed element to determine which unseen elements fall into the improving classes induced by $Z=\OPT(S)$. In our model the post-sample order is adversarial, so this identification step cannot be carried out by controlling arrivals. The simple query provides exactly the missing ordinal information by returning $\mathrm{SQR}_s$. Given $Z$ and $\mathrm{SQR}_s$, Algorithm~\ref{alg:msps} precomputes the same greedy outcome and then simply accepts precisely those precomputed elements when they happen to arrive.

We make this coupling formal in the proof of Theorem~\ref{theo:msps1}.

\begin{algorithm}

    \KwIn{Matroid $M = (E,\mathcal{I})$}
    \KwOut{Independent set $\ALG$ of $M$}
    $\ALG \gets \emptyset$, $s \gets \Bin(n,\frac{1}{2})$\;
    \KwSty{Sampling phase:} reject a uniform random sample $S\subseteq E$ of size $s$\;
    $Z = \{z^1,...,z^m\} \gets \OPT(S)$, where $w(z^1)>w(z^2)>...>w(z^m)$\;
    $G \gets \mathrm{SQR}_s$\;
    \textbf{Preselection phase:} without seeing the remaining weights:\\
    \For{$i\in[m]$}{
        $G_i \gets G\cap\spa(\{z^1,...,z^i\})\setminus\spa(\{z^1,...,z^{i-1}\})$\;
        \For{$x\in G_i$ in any order}{
            \lIf{$\ALG + x \in \mathcal{I}$}{$\ALG \gets \ALG + x$}
        }
    }
    \For{$x\in E\setminus\spa(Z)$ in any order}{
        \lIf{$\ALG + x \in \mathcal{I}$}{$\ALG \gets \ALG + x$}
    }
    \Return $\ALG$
	\caption{Post-Sample Selection with Simple Query Algorithm}
	\label{alg:msps}
\end{algorithm}
\begin{theorem}
\label{theo:msps1}
    Algorithm~\ref{alg:msps} is $\frac{1}{4}$-probability-competitive for the Matroid Secretary Problem in the Sample-Based Online Adversarial Order model with One Simple Query.
\end{theorem}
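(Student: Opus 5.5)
The plan is to reduce to Theorem~\ref{theo:free} by a coupling argument. We show that, for every realization of the sample $S$, the set $\ALG$ produced by Algorithm~\ref{alg:msps} coincides with the set produced by Algorithm~\ref{alg:free} on the same instance with the same sample. Both algorithms draw $s\sim\Bin(n,\frac12)$ and a uniform sample, so we couple them on $S$. The output of Algorithm~\ref{alg:free} is a deterministic function of $S$, the weights, and the (arbitrary) processing order within each loop. Theorem~\ref{theo:free} holds for any such tie-breaking order, so if the outputs agree pointwise, every $e\in\OPT$ is selected by Algorithm~\ref{alg:msps} with probability at least $1/4$. This holds regardless of how the online adversary orders $E\setminus S$, because the preselected set is computed before any post-sample arrival and is independent of the arrival order.

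The key step is to identify the sets in the two inner loops. For $x\notin S$ with $x\in \spa(\{z^1,\dots,z^i\})\setminus\spa(\{z^1,\dots,z^{i-1}\})$, we claim that $w(x)>w(z^i)$ if and only if $x\in\OPT(S+x)$, that is, $x\in\mathrm{SQR}_s$.

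We prove this with Proposition~\ref{prop:greedy_weight} applied to $M|(S+x)$: $x\in\OPT(S+x)$ iff $x\notin\spa(\{y\in S: w(y)>w(x)\})$. Since $Z=\OPT(S)$ spans $S$ and the greedy prefixes satisfy $\spa(\{y\in S:w(y)>t\})=\spa(\{z^j: w(z^j)>t\})$ for every threshold $t$ (by the prefix argument in the proof of Proposition~\ref{prop:greedy_any}), this condition becomes $x\notin\spa(\{z^j: w(z^j)>w(x)\})$. The set $\{z^j:w(z^j)>w(x)\}$ is a prefix $\{z^1,\dots,z^k\}$. So $x$ is improving iff $k<i$, by the choice of $i$ as the first index whose prefix spans $x$. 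This holds iff $w(x)>w(z^i)$, using injectivity of the weights.

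Elements of $E\setminus\spa(Z)$ are improving trivially. Both algorithms treat them identically, and none of them is in $S$, since $S\subseteq\spa(Z)$.

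With this equivalence, the sets scanned in iteration $i$ of the two algorithms are equal: $G_i=\{x\in(\spa(z^{\le i})\setminus\spa(z^{<i}))\setminus S: w(x)>w(z^i)\}$. We fix the same processing order in both. The independence tests then behave identically, so by induction over the loop steps the outputs are identical. The one remaining check is that $G=\mathrm{SQR}_s$ contains no elements of $S$, which holds by definition of $\mathrm{SQR}_s$ as a subset of $E\setminus E_s$. Elements of $G$ that lie outside $\spa(Z)$ are handled by the final loop in both algorithms.

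The main obstacle is making sure that feasibility online matches the preselection. Algorithm~\ref{alg:msps} commits to a set that is independent by construction, and it accepts exactly those elements as they arrive. Every subset of an independent set is independent, so acceptances remain feasible at all times, and the actual selected set equals the preselected $\ALG$. A minor subtlety is that Theorem~\ref{theo:free} is stated for Free Order. We use only the fact that its guarantee depends on the output set as a function of $S$, which is the case since Algorithm~\ref{alg:free} never uses arrival order beyond revealing weights.
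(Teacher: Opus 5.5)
Your proposal is correct and follows the paper's proof essentially step for step. You couple Algorithm~\ref{alg:msps} with Algorithm~\ref{alg:free} on the same sample and internal orders, prove via the greedy span characterization that for $x$ in the $i$-th layer $x\in\mathrm{SQR}_s$ iff $w(x)>w(z^i)$, use $S\subseteq\spa(Z)$ for the final loop, and observe that online acceptances form a subset of a precomputed independent set. The only cosmetic difference is that you derive the key equivalence in one step from the identity $\spa(\{y\in S:w(y)>t\})=\spa(\{z^j:w(z^j)>t\})$, whereas the paper argues the two directions separately.
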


\begin{proof}
Fix the sample $S$ and let $Z=\OPT(S)=\{z^1,\dots,z^m\}$, ordered by decreasing weight. Couple Algorithms~\ref{alg:free} and~\ref{alg:msps} by using the same internal order in every loop labeled ``in any order''. We show that the two algorithms perform the same independence tests, in the same order.

For each $i\in[m]$, define
\[
    L_i :=
    \bigl(\spa(\{z^1,\dots,z^i\})
    \setminus
    \spa(\{z^1,\dots,z^{i-1}\})\bigr)\setminus S.
\]
Algorithm~\ref{alg:free} tests the elements $x\in L_i$ with $w(x)>w(z^i)$. Algorithm~\ref{alg:msps} tests the elements $x\in L_i\cap G$, where $G=\mathrm{SQR}_s$. We prove that, for every $i\in[m]$ and every $x\in L_i$,
\[
    x\in G
    \quad\Longleftrightarrow\quad
    w(x)>w(z^i).
\]

By Definition~\ref{def:simple}, $x\in G$ if and only if $x\in\OPT(S+x)$. We use the greedy characterization in Proposition~\ref{prop:greedy_weight}. During the greedy computation of $\OPT(S)$, after the algorithm has processed all sample elements of weight larger than a value $\lambda$, the selected elements span all sample elements of weight larger than $\lambda$.

First suppose that $w(x)>w(z^i)$. Let $h$ be the number of elements of $Z$ whose weight is larger than $w(x)$. Then $h<i$, so $\{z^1,\dots,z^h\}\subseteq\{z^1,\dots,z^{i-1}\}$. The sample elements with weight larger than $w(x)$ are spanned by $\{z^1,\dots,z^h\}$. Since $x\notin\spa(\{z^1,\dots,z^{i-1}\})$, we also have $x\notin\spa(\{z^1,\dots,z^h\})$. Hence $x$ is not spanned by the elements of $S$ with weight larger than $w(x)$. By Proposition~\ref{prop:greedy_weight}, $x\in\OPT(S+x)$, and therefore $x\in G$.

Conversely, suppose that $w(x)<w(z^i)$. Since $x\in L_i$, we have $x\in\spa(\{z^1,\dots,z^i\})$. All elements $z^1,\dots,z^i$ have weight larger than $w(x)$, so $x$ is spanned by elements of $S$ with weight larger than $w(x)$. Proposition~\ref{prop:greedy_weight} implies that $x\notin\OPT(S+x)$, and therefore $x\notin G$.

This proves the equivalence. Therefore, in each layer $L_i$, Algorithm~\ref{alg:msps} identifies exactly the elements that Algorithm~\ref{alg:free} would test using free-order access. Moreover, since $Z$ is a base of the restricted matroid $M|S$, we have $S\subseteq\spa(Z)$. Hence $S\cap(E\setminus\spa(Z))=\emptyset$.

On the remaining set $E\setminus\spa(Z)$, both algorithms test the same elements independently of the query response. With the internal orders coupled as above, both algorithms perform the same independence tests in the same order and return the same set.

Algorithm~\ref{alg:msps} precomputes an independent set, and the elements selected online always form a subset of it. Thus the online decisions maintain independence.

The $1/4$ probability-competitive guarantee follows from Theorem~\ref{theo:free}.
\end{proof}
\subsection{Sample-Based Online Adversarial Order with One Complete Query}

Next, we study the analogous model in which the oracle provides a stronger query response. In the Complete Query models, the query response partitions the set of unseen elements into buckets, where each bucket corresponds to elements whose weights lie between two consecutive weights in $\OPT(S)$. We propose Algorithm~\ref{alg:mspc1} for the Sample-Based Online Adversarial Order model with One Complete Query.

The random orders used inside the buckets are internal random choices of the algorithm for the precomputed greedy order. They are not assumptions on the post-sample arrival order.

The complete query is stronger than the simple query because it gives the ordinal bucket of each unseen element relative to the current sample optimum. The algorithms below use this information only to define a greedy order before the selection phase resumes. They do not use numerical weights of unseen elements.

As in Algorithm~\ref{alg:msps}, Algorithm~\ref{alg:mspc1} fixes an order $\sigma$ of $(E\setminus S)\cup Z$ after receiving the query response so that the elements of $A_0$ appear first, then $z^1$, then the elements of $A_1$, then $z^2$, and so on until $A_m$. The algorithm then computes $\AUX = \GREEDY(M',\sigma)$, where $M'$ is the restriction of $M$ to $\bigcup_{i=0}^m A_i \cup Z$. Since the elements of $Z$ were previously rejected (they appeared in the sample), the algorithm outputs $\AUX \setminus Z$.

Complete queries give a finer ordinal partition. The proof below uses this partition to isolate, for each fixed $y\in\OPT$, the elements that can prevent $y$ from being selected. This is the only point where the argument becomes more technical than the single-choice proof: in a matroid, an element can be blocked by dependence, not only by a heavier element arriving first. The blocking set $B$ records exactly the lower-weight elements in the same complete-query bucket that are not already spanned by the heavier part of the sample optimum.

\begin{algorithm}

    \KwIn{Matroid $M = (E,\mathcal{I})$}
    \KwOut{Independent set $\ALG$ of $M$}
	\KwSty{Parameters:} $p\in[0,1]$\;
    $\AUX \gets \emptyset$, $s \gets \Bin(n,p)$\;
    \KwSty{Sampling phase:} reject a uniform random sample $S\subseteq E$ of size $s$\;
    $Z = \{z^1,...,z^m\} \gets \OPT(S)$, where $w(z^1)>w(z^2)>...>w(z^m)$\;
    $(A_0,A_1,...,A_m) \gets \mathrm{CQR}_s$\;
    \textbf{Preselection phase:} without seeing the remaining weights:\\
    \For{$x\in A_0$ in random order}{
        \lIf{$\AUX + x \in \mathcal{I}$}{$\AUX \gets \AUX + x$}
    }
    \For{$i\in[m]$}{
        \lIf{$\AUX + z^i \in \mathcal{I}$}{$\AUX \gets \AUX + z^i$}
        \For{$x\in A_i$ in random order}{
            \lIf{$\AUX + x \in \mathcal{I}$}{$\AUX \gets \AUX + x$}
        }
    }
    $\ALG \gets \AUX\setminus Z$\;
    \Return $\ALG$
	\caption{Post-Sample Selection with Complete Query Algorithm}
	\label{alg:mspc1}
\end{algorithm}
\begin{theorem}
\label{theo:mspc1}
    Setting $p = \frac{1}{e}$, Algorithm~\ref{alg:mspc1} is $\frac{1}{e}$-probability-competitive and optimal for the Matroid Secretary Problem in the Sample-Based Online Adversarial Order model with One Complete Query.
\end{theorem}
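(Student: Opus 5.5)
Optimality is the easy half. A matroid of rank one (the uniform matroid $U_{1,n}$) is exactly the single-choice problem. On it, $\OPT(S)=\{z^1\}$ is the sample maximum, and $\mathrm{CQR}_s=(A_0,A_1)$, where $A_0$ is the set of unseen elements above the sample maximum. So a complete query reveals strictly the same information as (actually, no more than the identities of) the above-threshold set. A probability-competitive guarantee $\alpha$ on $U_{1,n}$ means selecting $x^1$ with probability at least $\alpha$. The optimality argument of Theorem~\ref{theo:on1} carries over verbatim. Conditioned on $S$ and $A_0$, the adversary orders $A_0$ according to the hard distribution of Theorem~\ref{theo:ao1}, which gives conditional success at most $1/|A_0|$. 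The reduction to the random-order algorithm $\widetilde A$ then yields the bound $1/e$. The identities in $A_0$ carry no further ordinal information, because the algorithm is ordinal and labels are uninformative.

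For the lower bound, fix $y\in\OPT$ and show $\PP[y\in\ALG]\ge -p\ln p$. Sample each element independently with probability $p$. Then $y\in\ALG$ if and only if $y\notin S$ and $y\in\GREEDY(M',\sigma)$. By Proposition~\ref{prop:greedy_any}, the latter holds if and only if $y$ is not spanned by the elements before it in $\sigma$.

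Let $i$ be the bucket with $y\in A_i$. The elements preceding $y$ in $\sigma$ are of three kinds: all unseen elements heavier than $z^i$, the elements $z^1,\dots,z^i$, and the elements of $A_i$ placed before $y$ by the internal random order. The first two kinds span only elements heavier than $y$ together with $\{z^1,\dots,z^i\}$. Sample elements heavier than $w(z^{i+1})$ are spanned by $\{z^1,\dots,z^i\}$, and $y\in\OPT$ is not spanned by the elements heavier than it. From this I expect to conclude that $y$ is not spanned by $P:=\{z^1,\dots,z^i\}\cup\{x:w(x)>w(y)\}$. The danger is the set $B$ of elements of $A_i$ lighter than $y$ that come before $y$ in $\sigma$.

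Here is the plan to control $B$. Reveal elements in decreasing weight order and consider the elements lighter than $y$ that are not spanned by $P$ together with the lighter elements already processed. Let $k$ be the number of such "relevant" elements that land in $A_i$ before the first relevant sampled one (which becomes $z^{i+1}$). I claim the event that $y$ is selected contains the event that $y$ comes first in the random order among $y$ together with the $k$ relevant blockers of $A_i$. The geometric coupling from the single-choice proof should then give $\PP[k=j]=p(1-p)^j$ and a conditional probability of $1/(j+1)$. Together with $\PP[y\notin S]=1-p$, summing $\sum_j (1-p)\,p(1-p)^j/(j+1)$ gives exactly $-p\ln p$, matching Theorem~\ref{theo:on1}.

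The main obstacle is this last step. I must show that only "relevant" lighter elements (those adding rank beyond $P$) can block $y$, and that an independent-coin process makes their count geometric, regardless of the matroid structure. For this I would mimic the Kesselheim–Radke–Tönnis–Vöcking style argument: order the elements lighter than $y$ by weight, maintain the span, and note that a relevant element is sampled with probability $p$ independently of the past. Fallback: if exact independence fails, I would show that the count is stochastically dominated by the geometric variable and use that $1/(j+1)$ is decreasing in $j$.
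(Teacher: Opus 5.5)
Your optimality half is the paper's argument. The paper makes ``labels are uninformative'' precise by having the Yao distribution draw the weight order uniformly at random, so that conditional on $S$ and $B$ the order inside $B$ is uniform. Your lower-bound architecture is also the paper's. However, the step you postpone as ``the main obstacle'' is the actual content of the proof, and your definition of a relevant element is ambiguous in a way that matters. If ``the lighter elements already processed'' includes unsampled ones, then in $U_{1,n}$ with $y=x^1$ only $x^2$ is relevant, so $k\le 1$. But an unsampled $x^3\in A_0$ placed before $y$ also blocks $y$, so your claimed containment fails. Indeed it would yield $\PP[y\in\ALG]\ge(1-p)(p+\frac{1-p}{2})>1/e$ at $p=1/e$, contradicting optimality. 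The blockers must be the lighter elements of $A_i$ lying outside a span of heavier elements. The paper uses $B=A_j\cap E_{\le w(y)}\setminus\spa(Z^j)$. You can equally use $A_i\cap R$ with $R:=E_{<w(y)}\setminus\spa(E_{>w(y)})$; note that your $P$ is just $E_{>w(y)}$, since $z^1,\dots,z^i$ are heavier than $y$. With either choice the containment is one line from Proposition~\ref{prop:greedy_any}. Every predecessor of $y$ in $\sigma$ is either heavier than $y$ or a lighter element of $A_i$ placed before $y$. If none of the latter is a blocker, all predecessors lie in $\spa(E_{>w(y)})$, which does not contain $y$ because $y\in\OPT$.

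Second, with your reference span $\spa(E_{>w(y)})$, the parenthetical ``the first relevant sampled one becomes $z^{i+1}$'' is false. A sampled element lighter than $y$ lying in $\spa(E_{>w(y)})\setminus\spa(Z^i)$ (e.g.\ one parallel to an unsampled heavier element) can be $z^{i+1}$, so $\PP[k=j]=p(1-p)^j$ does not hold. What does hold is the following. The first sampled element $v$ of $R$ lies outside $\spa(Z^i)$, so $w(z^{i+1})\ge w(v)$. Hence $k=|A_i\cap R|\le G$ pointwise, where $G$ is the number of unsampled elements of $R$ heavier than $v$ (or $|R|$ if no element of $R$ is sampled). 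Since $R$ is fixed and does not contain $y$, $G$ is a truncated geometric variable independent of the coin of $y$. Therefore $\PP[y\in\ALG]\ge(1-p)\,\mathbb E[1/(G+1)]\ge -p\ln p$, using $\frac{(1-p)^h}{h}\ge\sum_{t\ge h}\frac{p(1-p)^t}{t}$ for the truncation. So your fallback is not optional but required, and it suffices. The paper instead works with $\spa(Z^j)$ and conditions on the coins of the elements heavier than $y$. This fixes $Z^j$ and $H=E_{\le w(y)}\setminus\spa(Z^j)$, makes the first sampled element of $H-y$ exactly $z^{j+1}$, and gives $|B|$ an exact truncated geometric law. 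Your deterministic set $R$ removes the need for that conditioning, at the price of a domination rather than an identity.
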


\begin{proof}
    \textbf{Lower bound.} Let $y \in \OPT$. We show that
    $\PP[y\in\ALG]\geq 1/e$. Let $Z^0:=\emptyset$, and for
    $i\in[m]$ let $Z^i:=\{z^1,\dots,z^i\}$.

    For a fixed sample $S$, let $j$ be the number of elements of
    $Z=\OPT(S)$ whose weight is larger than $w(y)$. Thus $Z^j$ is the prefix
    of $Z$ formed by the elements heavier than $y$. For $a\in\mathbb R$, write
    $E_{\le a}:=\{x\in E:w(x)\le a\}$ and
    $E_{>a}:=\{x\in E:w(x)>a\}$.

    If $y\in S$, set $B:=\emptyset$. If $y\notin S$, then $y$ belongs to the
    bucket $A_j$ of $\mathrm{CQR}_s$, and we define
    \[
        B := A_j\cap E_{\leq w(y)}\setminus \spa(Z^j).
    \]
    Thus $B$ is the set of elements in the same bucket as $y$, with weight at
    most $w(y)$, that are not already spanned by the heavier prefix $Z^j$.
    The elements that can ``block'' $y$ from being selected are contained in
    $B$.

    More precisely, suppose that $y\notin S$ and that $y$ is the first element
    of $B$ to appear in the uniformly random internal order of $A_j$. Then the
    algorithm selects $y$. Indeed, after the elements of $Z^j$ are processed,
    the current set $\AUX$ spans $Z^j$: each $z^\ell$ with $\ell\le j$ is either
    added, or is already spanned by the current $\AUX$ at its turn. Hence no
    element of $\spa(Z^j)\cap A_j$ is added later. Therefore every element added
    to $\AUX$ before $y$ is heavier than $y$. Since $y\in\OPT$,
    Proposition~\ref{prop:greedy_weight} implies
    $y\notin\spa(E_{>w(y)})$. Thus $y$ is added to $\AUX$ and then to $\ALG$.

    Since the order of elements in $B$ is uniformly random,
    \begin{align*}
        \PP[y\in\ALG\mid |B|=t]\geq
        \begin{cases}
            \frac{1}{t}, & \text{if $t>0$,}\\
            0, & \text{if $t=0$.}
        \end{cases}
    \end{align*}

    The last step in the proof is to determine the distribution of $|B|$. Let $E = \{x^1,\dots,x^n\}$ in decreasing order of weight, and view the sampling step as tossing independent coins $C^1,\dots,C^n$, where $C^i=\mathrm{heads}$ with probability $p$ and $C^i=\mathrm{tails}$ with probability $(1-p)$, and $x^i\in S$ iff $C^i=\mathrm{heads}$. Let $k\in[n]$ be such that $x^k=y$.
    We claim the following.

    \begin{claim}
        Fix any realization of $C^1,\dots,C^{k-1}$. Under this conditioning, $j$ and $Z^j$ are determined. Let $H := E_{\leq w(y)}\setminus \spa(Z^j)$ and $h:=|H|$. Since $y\in\OPT(E)$ and all elements of $Z^j$ have weight larger than $w(y)$, Proposition~\ref{prop:greedy_weight} implies $y\notin\spa(Z^j)$; hence $y\in H$ and $h\ge 1$. Then
        \[
            \PP\bigl[|B|=t\mid C^1,\dots,C^{k-1}\bigr]=
            \begin{cases}
                p(1-p)^t, & t=0,1,\dots,h-1,\\
                (1-p)^h, & t=h.
            \end{cases}
        \]
    \end{claim}

    We now prove the claim. Under a fixed realization of $C^1,\dots,C^{k-1}$, the index $j$ (and thus $Z^j$ and $H$) is fixed. If $C^k=\mathrm{heads}$ (i.e., $y\in S$), then $B=\emptyset$ by definition, so $|B|=0$. Hence $\PP\bigl[|B|=0\mid C^1,\dots,C^{k-1}\bigr]=p$.

    Suppose now that $C^k=\mathrm{tails}$ (i.e., $y\notin S$). List the elements of $H$ in decreasing order of weight as $u^1,\dots,u^h$, and write $u^\ell=x^{i_\ell}$ for each $\ell\in[h]$. If some element of $H$ lies in $S$, then $z^{j+1}$ is the heaviest such element. Equivalently, its coin is the first $\mathrm{heads}$ among $C^{i_1},\dots,C^{i_h}$. If no element of $H$ lies in $S$, then there is no such $z^{j+1}$.

    Consequently, $|B|$ is the number of consecutive $\mathrm{tails}$ before this first $\mathrm{heads}$, truncated at $h$ if no $\mathrm{heads}$ occurs. Therefore, for $t\in\{1,\dots,h-1\}$ we have $\PP\bigl[|B|=t\mid C^1,\dots,C^{k-1},\; C^k=\mathrm{tails}\bigr]=p(1-p)^{t-1}$, and $\PP\bigl[|B|=h\mid C^1,\dots,C^{k-1},\; C^k=\mathrm{tails}\bigr]=(1-p)^{h-1}$.
    Multiplying by $\PP[C^k=\mathrm{tails}]=1-p$ gives the stated conditional probabilities, completing the proof of the claim.

    Combining the expressions, we obtain

    \begin{align*}
        \PP[y\in\ALG] &= \sum_{c^1,...,c^{k-1}}\PP[y\in\ALG| C^1=c^1,...,C^{k-1}=c^{k-1}]\cdot\PP[C^1=c^1,...,C^{k-1}=c^{k-1}]\\
        &\geq \sum_{c^1,...,c^{k-1}}\left( \sum_{t=1}^{h-1} \frac{p(1-p)^t}{t} + \frac{(1-p)^h}{h}\right)\cdot\PP[C^1=c^1,...,C^{k-1}=c^{k-1}] \\
        &\geq \sum_{c^1,...,c^{k-1}}\left( \sum_{t=1}^{\infty} \frac{p(1-p)^t}{t}\right)\cdot\PP[C^1=c^1,...,C^{k-1}=c^{k-1}] \\
        &= \sum_{t=1}^{\infty} \frac{p(1-p)^t}{t} = -p\ln(p),
    \end{align*}
where the second inequality holds since
$\sum_{t=h}^{\infty}\frac{p(1-p)^t}{t}\le \frac{(1-p)^h}{h}$.

Setting $p = 1/e$ implies that the algorithm is $1/e$-probability-competitive.

    \textbf{Optimality.} For optimality, it suffices to consider the rank-one uniform matroid. In this case, the matroid secretary problem reduces to the single-choice secretary problem. The only additional point is that a complete query may reveal the identity of all elements in
\[
    B:=\{x\in E\setminus S:w(x)>\tau(S)\},
\]
rather than only its size. We now show that this identity information does not improve the worst-case guarantee.

Fix a deterministic ordinal algorithm and consider the rank-one uniform matroid. Following Yao's principle, we first assign a strict weight order to the elements via a uniformly random permutation. Once the sample $S$ is realized, the complete query identifies the set of above-threshold unseen elements $B$.

Let $r=|B|$ and denote the decreasing weight order within $B$ by $y^1, y^2, \ldots, y^r$. The online adversary constructs the arrival order by choosing $i \in [r]$ uniformly at random, placing $y^1$ in the $i$-th position among the elements of $B$, and placing $y^r, y^{r-1}, \ldots, y^{r-i+2}$ before it. The remaining elements of $B$ follow $y^1$ in a fixed order, and elements outside $B$ arrive arbitrarily.

Conditional on $B$, the hidden weight order $y^1, \dots, y^r$ remains uniform over all $r!$ permutations of $B$. Thus, the element identities provide no additional ordinal information. By the same indistinguishability argument as in Theorem~\ref{theo:ao1}, applied within $B$, the algorithm selects $y^1$ for at most one choice of $i$. Hence, conditional on $S$ and $B$, the success probability is at most $1/r$ (and $0$ if $r=0$). By Yao's principle, the same conditional bound applies to randomized algorithms. Averaging over the query response $B$ (equivalently, over $r=|B|$) yields the same upper bound as in Theorem~\ref{theo:on1}, namely $1/e$. Hence Algorithm~\ref{alg:mspc1} is optimal.\qedhere
\end{proof}

Notice the similarities between this proof and the one for the single-choice case. Indeed, Algorithm~\ref{alg:on1} for the Single-Choice Sample-Based Online Adversarial Order model is a special case of Algorithm~\ref{alg:mspc1}, where the set $B$ becomes the set of unseen elements whose weights exceed the sample threshold.

\subsection{Random Order with Unlimited Complete Queries}

Next, we consider the case where unlimited calls to the oracle are allowed. We propose Algorithm~\ref{alg:mspcunl} for the Random Order model with Unlimited Complete Queries. The algorithm calls the oracle after each selection decision. It selects an element $x$ if $x$ is not spanned by the union of: (i) higher-weight elements already seen, (ii) buckets of higher weights, and (iii) the remainder of the bucket containing $x$. The initialization $A_0=E$ is the trivial bucket obtained before any element has arrived. It is not an additional query. The algorithm makes one complete query after each selection decision, and therefore uses at most $n$ query rounds.

\begin{algorithm}[H]
    \KwIn{Loopless matroid $M = (E,\mathcal{I})$}
    \KwOut{Independent set $\ALG$ of $M$}
    $\ALG \gets \emptyset$\;
    $A_0\gets E$, $m\gets 0$\;
    \tcp{Invariant: before processing arrival $i$, $(A_0,\ldots,A_m)=\mathrm{CQR}_{i-1}$; initially $A_0=E$.}
    \For{$i\gets 1$ \KwTo $n$}{
        observe arriving element $e_i$\;
        let $j$ be such that $e_i\in A_j$\;
        \uIf{$e_i\not\in\spa\left((\bigcup_{k=0}^j A_k\cup \{x\in E_{i-1}:w(x)>w(e_i)\})-e_i\right)$}{
            $\ALG \gets \ALG + e_i$\;
        }
        \If{$i<n$}{update $(A_0,\ldots,A_m)\gets\mathrm{CQR}_i$\;}
    }
    \Return $\ALG$
	\caption{MSP First After Second Algorithm}
	\label{alg:mspcunl}
\end{algorithm}
\begin{theorem}
\label{theo:mspcunl}

    Algorithm~\ref{alg:mspcunl} is $\frac{1}{2}$-probability-competitive and optimal for the Matroid Secretary Problem in the Random Order model with Unlimited Complete Queries.

\end{theorem}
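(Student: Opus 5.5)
The plan has three parts: show the output is always independent, show each $y\in\OPT$ is selected with probability at least $1/2$ through a pairing argument, and get optimality from the rank-one case.

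\textbf{Feasibility.} When $e_i$ arrives, let $(A_0,\dots,A_m)=\mathrm{CQR}_{i-1}$ and $e_i\in A_j$. The union $\bigcup_{k\le j}A_k$ contains every unseen element heavier than $e_i$, because heavier unseen elements lie in buckets with index at most $j$. The set $\{x\in E_{i-1}:w(x)>w(e_i)\}$ contains every seen element heavier than $e_i$. So the set tested in the algorithm contains $E_{>w(e_i)}$. If $e_i$ is not spanned by it, then $e_i\notin\spa(E_{>w(e_i)})$, and Proposition~\ref{prop:greedy_weight} gives $e_i\in\OPT$. Hence $\ALG\subseteq\OPT$, and $\ALG$ is independent throughout.

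\textbf{Pairing.} Fix $y\in\OPT$ and list the elements lighter than $y$ in decreasing weight order. Let $x^*$ be the first element in this list such that $y\in\spa\bigl(E_{>w(y)}\cup\{x: w(x^*)\le w(x)<w(y)\}\bigr)$, if such an element exists. Set $T:=E_{>w(y)}\cup\{x:w(x^*)<w(x)<w(y)\}$. By minimality of $x^*$, we have $y\notin\spa(T)$. The key claim is that $y$ is selected whenever $x^*$ does not exist, or $x^*$ arrives before $y$.

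\textbf{Proof of the claim.} Suppose $y=e_i$ and $x^*\in E_{i-1}$. I first show $x^*\in\OPT(E_{i-1})$. Otherwise, $x^*$ is spanned by seen elements heavier than $x^*$. Those elements all lie in $T\cup\{y\}$, and $y$ is unseen, so they lie in $T$. Then $T+x^*$ spanning $y$ would force $T$ to span $y$, which is a contradiction. So $x^*\in\OPT(E_{i-1})$ and is lighter than $y$. Hence the element $z^{j+1}$ of $\OPT(E_{i-1})$, the heaviest element of $\OPT(E_{i-1})$ lighter than $y$, satisfies $w(z^{j+1})\ge w(x^*)$. Therefore every element of $A_j$ lighter than $y$ is strictly heavier than $x^*$. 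It follows that the tested set minus $y$ is contained in $T$, so it does not span $y$, and $y$ is accepted. If $x^*$ does not exist, the tested set minus $y$ is always contained in $E\setminus\{y\}$ restricted to elements that cannot span $y$, and $y$ is accepted. Under random order, $x^*$ precedes $y$ with probability $1/2$, so $\PP[y\in\ALG]\ge 1/2$.

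\textbf{Optimality.} I restrict to the rank-one uniform matroid with $n=2$. As in Theorem~\ref{theo:rounl}, no query before the first arrival carries information about that element, since $\mathrm{CQR}_0=(E)$. After rejecting the first element, the only remaining option is the second element. Any algorithm therefore wins with probability at most $1/2$.

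The main obstacle is the choice of the pairing element $x^*$ and the verification that $x^*$ is in $\OPT(E_{i-1})$. This is exactly what ties the bucket boundary $z^{j+1}$ to $x^*$, so it is the step I would check most carefully.
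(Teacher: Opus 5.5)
Your proof is correct and follows the paper's argument essentially step for step. Your $x^*$ is exactly the paper's pairing element $x^i$, the first element in decreasing weight order whose addition to the heavier elements other than $y$ spans $y$. Your feasibility argument ($\ALG\subseteq\OPT$), your proof that an early-arriving $x^*$ lies in $\OPT(E_{i-1})$ and so bounds $y$'s bucket from below, and your two-element rank-one optimality argument all match the paper's.
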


\begin{proof}
  \textbf{Lower bound.} We prove the $\tfrac{1}{2}$ probability-competitive guarantee.

  Claim 1. Every element selected by Algorithm~\ref{alg:mspcunl} belongs to $\OPT(E)$.
  We first show that $\ALG\subseteq\OPT$ (in particular, the algorithm returns an independent set). Let $x\in\ALG$ be an element selected at time $t$. Let $(A_0,\dots,A_m)=\mathrm{CQR}_{t-1}$ be the tuple returned by the oracle after the previous step, and let $j\in\{0,\dots,m\}$ be such that $x\in A_j$. By the selection rule,
\[
    x\notin\spa\Bigl(\bigl(\bigcup_{k=0}^j A_k\cup \{z\in E_{t-1}: w(z)>w(x)\}\bigr)-x\Bigr).
\]
Moreover,
\[
\{z\in E:w(z)>w(x)\}
\subseteq
\Bigl(\bigcup_{k=0}^j A_k\cup \{z\in E_{t-1}:w(z)>w(x)\}\Bigr)-x.
\]
By monotonicity of span, the selection rule implies
\[
    x\notin \spa(\{z\in E:w(z)>w(x)\}).
\]
Hence, by Proposition~\ref{prop:greedy_weight}, we have $x\in\OPT$.

Since $\OPT(E)$ is independent and every selected element belongs to it, the final set returned by the algorithm is independent.

  Claim 2. For every $y\in \OPT(E)$, Algorithm~\ref{alg:mspcunl} selects $y$ with probability at least $1/2$.
  Now fix $y\in\OPT$. We show that $\PP[y\in\ALG]\ge \tfrac{1}{2}$. Write $E=\{x^1,\dots,x^n\}$ in decreasing order of weight. If $y\notin\spa(E-y)$, then $y$ is always added by the algorithm. Otherwise, there exists an index $i\in[n]$ such that
  \[
      y\in \spa(\{x^1,\dots,x^i\}-y)\setminus\spa(\{x^1,\dots,x^{i-1}\}-y),
  \]
  with the convention $\{x^1,\dots,x^{i-1}\}=\emptyset$ if $i=1$. Since $y\in\OPT$, we have $w(y)>w(x^i)$.

  If $x^i$ appears before $y$, then $x^i\in \OPT(E_{t-1})$. Indeed, suppose $x^i$ arrives at time $s$ and $y$ arrives at time $t$, with $s<t$. The set $E_{t-1}$ contains $x^i$ but not $y$. If $x^i\notin\OPT(E_{t-1})$, then by Proposition~\ref{prop:greedy_weight}, $x^i$ is spanned by the elements of $E_{t-1}$ with weight larger than $w(x^i)$. These elements are contained in $\{x^1,\dots,x^{i-1}\}\setminus\{y\}$, since $y$ has not arrived before time $t$. This would imply
\[
    x^i\in \spa(\{x^1,\dots,x^{i-1}\}-y),
\]
and hence
\[
    \spa(\{x^1,\dots,x^i\}-y)=\spa(\{x^1,\dots,x^{i-1}\}-y),
\]
contradicting the definition of $i$.

Since $x^i\in\OPT(E_{t-1})$, it appears as one of the boundary elements used by the complete query response $\mathrm{CQR}_{t-1}$. Therefore no bucket of $\mathrm{CQR}_{t-1}$ can contain elements on both sides of $x^i$ in the weight order. Let $j$ be such that $y\in A_j$ at time $t$. Since $w(y)>w(x^i)$, every element in $A_0\cup\cdots\cup A_j$ has weight larger than $w(x^i)$. Hence
\[
    \bigl(\bigl(\bigcup_{k=0}^j A_k\cup \{z\in E_{t-1}: w(z)>w(y)\}\bigr)-y\bigr)\subseteq \{x^1,\dots,x^{i-1}\}.
\]
Therefore $y$ is selected. Since $x^i$ appears before $y$ in a uniform random order with probability $\tfrac{1}{2}$, we conclude that the competitive ratio is at least $\tfrac{1}{2}$.

\textbf{Optimality.} For optimality, it suffices to consider the rank-one uniform matroid on two elements. Before the first accept/reject decision, the algorithm cannot have received any query response that depends on the first element. Hence it must either accept the first element, reject it, or randomize between these two actions. Since the arrival order is uniform, each element is the maximum with probability $1/2$. Therefore no algorithm, even with complete queries after the first decision, can guarantee success probability larger than $1/2$.\qedhere

\end{proof}

\paragraph*{Acknowledgements} This work was partially supported by ANID (Agencia Nacional de Investigaci\'on y Desarrollo, Chile) through FONDECYT Grant No.~1231669 and the BASAL Center for Mathematical Modeling (FB210005).
\bibliographystyle{plain}
\bibliography{references}

\begin{thebibliography}{10}

\bibitem{AntoniadisGKK23}
Antonios Antoniadis, Themis Gouleakis, Pieter Kleer, and Pavel Kolev.
\newblock {Secretary and online matching problems with machine learned advice}.
\newblock {\em Discrete Optimization}, 48(Part 2):100778, 2023.

\bibitem{azar2014prophet}
Pablo~Daniel Azar, Robert Kleinberg, and Seth~Matthew Weinberg.
\newblock {Prophet inequalities with limited information}.
\newblock In {\em Proceedings of {SODA} 2014}, pages 1358--1377. SIAM, 2014.

\bibitem{BabaioffIKK2018}
Moshe Babaioff, Nicole Immorlica, David Kempe, and Robert Kleinberg.
\newblock {Matroid secretary problems}.
\newblock {\em Journal of the ACM}, 65(6):35:1--35:26, November 2018.

\bibitem{BraunS24}
Alexander Braun and Sherry Sarkar.
\newblock {The Secretary Problem with Predicted Additive Gap}.
\newblock In Amir Globerson, Lester Mackey, Danielle Belgrave, Angela Fan,
  Ulrich Paquet, Jakub~M. Tomczak, and Cheng Zhang, editors, {\em Proceedings
  of {NeurIPS} 2024}, volume~37, pages 16321--16341, 2024.

\bibitem{Chen0LT25}
Ziyun Chen, Zhiyi Huang, Dongchen Li, and Zhihao~Gavin Tang.
\newblock {Prophet Secretary and Matching: the Significance of the Largest
  Item}.
\newblock In Yossi Azar and Debmalya Panigrahi, editors, {\em Proceedings of
  {SODA} 2025}, pages 1371--1401. {SIAM}, 2025.

\bibitem{CorreaSZ2021}
Jose Correa, Raimundo Saona, and Bruno Ziliotto.
\newblock {Prophet secretary through blind strategies}.
\newblock {\em Mathematical Programming}, 190(1-2):483--521, 2021.

\bibitem{CorreaCFOT21}
Jos{\'{e}}~R. Correa, Andr{\'{e}}s Cristi, Laurent Feuilloley, Tim Oosterwijk,
  and Alexandros Tsigonias{-}Dimitriadis.
\newblock {The Secretary Problem with Independent Sampling}.
\newblock In {\em Proceedings of {SODA} 2021}, pages 2047--2058, 2021.

\bibitem{CorreaCES22}
José Correa, Andrés Cristi, Boris Epstein, and José Soto.
\newblock {The Two-Sided Game of Googol}.
\newblock {\em Journal of Machine Learning Research}, 23(113):1--37, 2022.

\bibitem{DuttingLLV21}
Paul D{\"{u}}tting, Silvio Lattanzi, Renato~Paes Leme, and Sergei
  Vassilvitskii.
\newblock {Secretaries with Advice}.
\newblock {\em Mathematics of Operations Research}, 49(2):856--879, 2024.

\bibitem{Dynkin1963}
E.~B. Dynkin.
\newblock {The optimum choice of the instant for stopping a Markov process}.
\newblock {\em Soviet Math. Dokl}, 4:627--629, 1963.

\bibitem{Ehsani2018}
Soheil Ehsani, MohammadTaghi Hajiaghayi, Thomas Kesselheim, and Sahil Singla.
\newblock {Prophet Secretary for Combinatorial Auctions and Matroids}.
\newblock {\em SIAM Journal on Computing}, 53(6):1641--1662, 2024.

\bibitem{Erlebach2015}
Thomas Erlebach and Michael Hoffmann.
\newblock {Query-Competitive Algorithms for Computing with Uncertainty}.
\newblock {\em Bulletin of EATCS}, 116:22--39, 2015.
\newblock The Algorithmics Column, edited by Gerhard J. Woeginger.

\bibitem{FeldmanSZ18}
Moran Feldman, Ola Svensson, and Rico Zenklusen.
\newblock {A Simple O(log log(rank))-Competitive Algorithm for the Matroid
  Secretary Problem}.
\newblock {\em Mathematics of Operations Research}, 43(2):638--650, May 2018.

\bibitem{FujiiYoshida2024}
Kaito Fujii and Yuichi Yoshida.
\newblock {The Secretary Problem with Predictions}.
\newblock {\em Mathematics of Operations Research}, 49(2):1241--1262, 2024.

\bibitem{GilbertM66}
John~P. Gilbert and Frederick Mosteller.
\newblock {Recognizing the maximum of a sequence}.
\newblock {\em Journal of the American Statistical Association},
  61(313):35--73, 1966.

\bibitem{HoeferK17}
Martin Hoefer and Bojana Kodric.
\newblock {Combinatorial Secretary Problems with Ordinal Information}.
\newblock In {\em Proceedings of {ICALP} 2017}, volume~80 of {\em Leibniz
  International Proceedings in Informatics (LIPIcs)}, pages 133:1--133:14.
  Schloss Dagstuhl--Leibniz-Zentrum fuer Informatik, 2017.

\bibitem{JailletSZ13}
Patrick Jaillet, José~A. Soto, and Rico Zenklusen.
\newblock {Advances on matroid secretary problems: Free order model and laminar
  case}.
\newblock In {\em Proceedings of {IPCO} 2013}, volume 7801 of {\em Lecture
  Notes in Computer Science}, pages 254--265, Valparaiso, Chile, 2013.
  Springer.

\bibitem{Kaplan2022OnlineSample}
Haim Kaplan, David Naori, and Danny Raz.
\newblock {Online Weighted Matching with a Sample}.
\newblock In {\em Proceedings of {SODA} 2022}, pages 1247--1272. {SIAM}, 2022.

\bibitem{KaplanNR20}
Haim Kaplan, David Naori, and Danny Raz.
\newblock {Competitive Analysis with a Sample and the Secretary Problem}.
\newblock {\em SIAM Journal on Computing}, 54(6):1489--1513, 2025.

\bibitem{KarisaniEtAl2026}
Helia Karisani, Mohammadreza Daneshvaramoli, Hedyeh Beyhaghi, Mohammad
  Hajiesmaili, and Cameron Musco.
\newblock {The Secretary Problem with Predictions and a Chosen Order}.
\newblock In Shubhangi Saraf, editor, {\em Proceedings of {ITCS} 2026}, volume
  362 of {\em Leibniz International Proceedings in Informatics (LIPIcs)}, pages
  86:1--86:24, Dagstuhl, Germany, 2026. Schloss Dagstuhl -- Leibniz-Zentrum
  f{\"u}r Informatik.

\bibitem{Lindley1961}
D.~V. Lindley.
\newblock {Dynamic Programming and Decision Theory}.
\newblock {\em Applied Statistics}, 10(1):39--51, March 1961.

\bibitem{LiuMM23Mallows}
Xujun Liu, Olgica Milenkovic, and George~V. Moustakides.
\newblock {Query-based selection of optimal candidates under the Mallows
  model}.
\newblock {\em Theoretical Computer Science}, 979:114206, 2023.

\bibitem{MoustakidesLM24RandomQueries}
George~V. Moustakides, Xujun Liu, and Olgica Milenkovic.
\newblock {Optimal stopping methodology for the secretary problem with random
  queries}.
\newblock {\em Journal of Applied Probability}, 61(2):578--602, 2024.

\bibitem{NourmohammadiCST26}
Hasti Nourmohammadi, Ying Cao, Bo~Sun, and Xiaoqi Tan.
\newblock {Ordinal Secretaries with Advice}.
\newblock In {\em Proceedings of {AAAI} 2026}, pages 37108--37116, 2026.

\bibitem{SotoTV2021}
José~A. Soto, Abner Turkieltaub, and Victor Verdugo.
\newblock {Strong Algorithms for the Ordinal Matroid Secretary Problem}.
\newblock {\em Mathematics of Operations Research}, 46(2):642--673, 2021.

\end{thebibliography}

\end{document}